\numberwithin{equation}{section}
\newtheorem{theorem}{Theorem}[section]
\newtheorem{proposition}[theorem]{Proposition}
\newtheorem{lemma}[theorem]{Lemma}
\newtheorem{conjecture}{Conjecture}
\newtheorem{definition}[theorem]{Definition}
\newtheorem{remark}[theorem]{Remark}
\newtheorem{observation}[theorem]{Observation}
\newtheorem{convention}[theorem]{Convention}
\newtheorem{comment}{Comment}
\newenvironment{proof}{{\it Proof. }}{{ \vskip 0.1cm
 \hfill{$\square$}}  \vspace{0.0cm} \medskip}
 \DeclareMathOperator{\Map}{Map}
 \DeclareMathOperator{\writhe}{writhe}
\DeclareMathOperator{\start}{\bullet}
\DeclareMathOperator{\Image}{Image}
\DeclareMathOperator{\arc}{arc}
 \DeclareMathOperator{\Tr}{Tr}
\DeclareMathOperator{\GL}{GL}
\DeclareMathOperator{\gl}{gl}
 \DeclareMathOperator{\Det}{Det}
 \DeclareMathOperator{\End}{End}
\DeclareMathOperator{\Hom}{Hom}
\DeclareMathOperator{\Ad}{Ad}
\DeclareMathOperator{\ad}{ad}
 \DeclareMathOperator{\WLO}{WLO}
\DeclareMathOperator{\wind}{wind}
\DeclareMathOperator{\sgn}{sgn}
\DeclareMathOperator{\aff}{aff}
\DeclareMathOperator{\Hol}{Hol}
\DeclareMathOperator{\mean}{mean}
\newcommand{\smooth}{(s)}
\newcommand{\be}{\begin{equation}}
\newcommand{\ee}{\end{equation}}
\newcommand{\bN}{{\mathbb N}}
\newcommand{\bQ}{{\mathbb Q}}
\newcommand{\bR}{{\mathbb R}}
\newcommand{\bC}{{\mathbb C}}
\newcommand{\bZ}{{\mathbb Z}}
\newcommand{\ct}{{\mathfrak t}}
\newcommand{\cG}{{\mathfrak g}}
\newcommand{\ck}{{\mathfrak k}}
\newcommand{\cg}{c_{\cG}}
\newcommand{\cA}{{\mathcal A}}
\newcommand{\cB}{{\mathcal B}}
\newcommand{\cC}{{\mathcal C}}
\newcommand{\cK}{{\mathcal K}}
\newcommand{\cP}{{\mathcal P}}
\newcommand{\cR}{{\mathcal R}}
\newcommand{\cT}{{\mathcal T}}
\newcommand{\cW}{{\mathcal W}}
\newcommand{\CW}{{\mathcal C}}
\newcommand{\face}{\mathfrak F}
\newcommand{\reg}{reg}
\newcommand{\orth}{\perp}
\newcommand{\eps}{\epsilon}
\newcommand{\id}{{ \rm id}}
\begin{document}

\title{Torus Knots and the Chern-Simons path integral: a rigorous treatment}


\maketitle

\begin{center} \large
Atle Hahn
\end{center}

\begin{center}
\it   \large  Group of Mathematical Physics, University of Lisbon \\
Av. Prof. Gama Pinto, 2\\
PT-1649-003 Lisboa, Portugal\\
Email: atle.hahn@gmx.de
  \end{center}

\begin{abstract}
In 1993  Rosso and Jones computed for every simple, complex Lie algebra $\cG_{\bC}$
and every colored torus knot in $S^3$
 the value of the corresponding $U_q(\cG_{\bC})$-quantum invariant
 by  using the machinery of quantum groups.
In the present paper we derive a $S^2 \times S^1$-analogue of the Rosso-Jones formula
 (for colored torus ribbon knots)    directly  from a  rigorous realization of the corresponding
  (gauge fixed) Chern-Simons path integral.
 In order to compare the explicit expressions  obtained for torus knots in $S^2 \times S^1$
 with those for torus knots in $S^3$  one can perform  a suitable surgery operation.
By doing so we verify that  the original  Rosso-Jones formula is indeed recovered
for every $\cG_{\bC}$.
\end{abstract}





\medskip

\section{Introduction}
\label{sec1}

Let $\cG_{\bC}$ be an arbitrary  simple complex Lie algebra, let
 $q \in \bC \backslash \{0\}$ be either generic or a root of unity of sufficiently high order,
   and let $U_q(\cG_{\bC})$ be the corresponding quantum group.
In \cite{RoJo} an explicit formula for the values
of the  $U_q(\cG_{\bC})$-quantum invariant of an arbitrary colored torus knot  in $S^3$
was found and proven using the representation theory of $U_q(\cG_{\bC})$.\par

In the special case where $q$ is a root of unity
the quantum invariants studied in \cite{RoJo} are  normalized
 versions of the Reshetikhin-Turaev invariants associated to $M = S^3$ and $U_q(\cG_{\bC})$
(cf. Eq. \eqref{eq_QI_RT} in the Appendix).
It is widely believed that the Reshetikhin-Turaev invariants associated to
a closed oriented 3-manifold $M$ and to $U_q(\cG_{\bC})$
are equivalent  to  Witten's heuristic path integral expressions based on the Chern-Simons action function
    associated to $(M,G,k)$  where
 $G$ is the simply connected, compact Lie group corresponding to the compact real form
  $\cG$ of $\cG_{\bC}$   and $k \in \bN$ is chosen suitably (cf. Remark \ref{rm_shift_in_level}  below).
 Accordingly, it is natural to  ask whether it is possible to derive the Rosso-Jones formula
 (or analogues/generalizations for base manifolds $M$ other than $S^3$)
 directly from Witten's path integral expressions. \par

 In the present paper I will show how one can do this
  for a large class of colored torus (ribbon) knots $L$ in the manifold
  $M=S^2 \times S^1$ in a rigorous  way.
  The approach of the present paper is based on the so-called torus gauge fixing procedure
which was introduced in \cite{BlTh1,BlTh3} for the study of the CS path integral on manifolds of the form
 $M=\Sigma \times S^1$.  In \cite{Ha3c,Ha4}  the basic heuristic formula of \cite{BlTh1}
 was generalized to general colored links $L$ in $M$.
 The generalized formula in \cite{Ha3c,Ha4} was recently simplified in \cite{Ha7a}, cf. the heuristic equation \eqref{eq2.48_Ha7a} below,  which will be the starting point for the rigorous treatment of the present paper.\par

In order to make rigorous sense of the RHS of the aforementioned Eq. \eqref{eq2.48_Ha7a}
we will  work within the simplicial setting developed in \cite{Ha7a}.
The simplicial setting not only allows a completely rigorous treatment but also one that is essentially elementary:
apart from some\footnote{In fact, even most of the Lie theoretic results appear only
after the path integral expressions have already been evaluated explicitly (cf. Steps 1--4 in the proof
of Theorem \ref{theorem1}) and we compare the explicit expressions with
those in the Rosso-Jones formuala (cf. Step 5 in the proof
of Theorem \ref{theorem1} and Sec. \ref{subsec6.2})}
 basic results from general Lie theory only a few quite
simple results on  oscillatory Gauss-type integrals on Euclidean vector spaces
will be needed, cf. Sec. \ref{sec4} below.\par

The paper is organized as follows: \par

In Sec. \ref{sec2} we first recall the aforementioned heuristic formula Eq. \eqref{eq2.48_Ha7a}
for the CS path integral in the torus gauge and later
give a ribbon version of  Eq. \eqref{eq2.48_Ha7a}, cf. Eq. \eqref{eq2.48} below. \par

In Sec. \ref{sec3} we introduce a (rigorous) simplicial realization $\WLO^{disc}_{rig}(L)$
of the RHS of the heuristic formula Eq. \eqref{eq2.48} in Sec. \ref{sec2}  for generic colored ribbon links $L$.
 The definition of $\WLO^{disc}_{rig}(L)$ is similar to the one in \cite{Ha7a}
  but incorporates some improvements  and simplifications.   \par

In Sec. \ref{sec4}  we recall the relevant results from \cite{Ha7b}
on  oscillatory Gauss-type integrals on Euclidean vector spaces
which we will use in Sec. \ref{sec5}.\par

In Sec. \ref{sec5} we compute $\WLO^{disc}_{rig}(L)$ (and the normalized version $\WLO^{disc}_{norm}(L)$)
 explicitly for a large class of colored torus ribbon knots $L$ in $S^2 \times S^1$,
see  Theorem \ref{theorem1} and its proof.
Apart from   Theorem \ref{theorem1}  a straightforward
generalization  is proven, cf. Theorem \ref{theorem2}.\par

In Sec. \ref{sec6} we   combine Theorem \ref{theorem2} with a suitable surgery argument.
The explicit expressions obtained in this way are then compared with those in the   Rosso-Jones
formula for colored torus  knots in $S^3$.
We find agreement for all $\cG_{\bC}$.   \par

The paper  concludes with Sec. \ref{sec7} and  a short appendix.

\section{The heuristic Chern-Simons  path integral  in the torus gauge}
\label{sec2}

Let $G$ be a  simple,  simply-connected, compact Lie group
  and $T$ a maximal torus  of $G$.
By  $\cG$ and $\ct$ we  will denote the Lie algebras of $G$ and  $T$
and by  $\langle \cdot , \cdot \rangle$
 the unique $\Ad$-invariant scalar product
on $\cG$ such that $\langle \Check{\alpha} , \Check{\alpha} \rangle = 2$
for every short coroot $\Check{\alpha} \in \ct$.\par

Let $M$ be a compact oriented 3-manifold
of the form $M = \Sigma \times S^1$ where $\Sigma$ is a compact oriented surface.
(From Sec. \ref{sec5} on we will only consider the special case $\Sigma = S^2$.)
Finally, let $L$ be a  fixed (ordered and oriented) ``link''  in $M$,
i.e. a finite tuple $(l_1, \ldots, l_m)$, $m \in \bN$, of pairwise non-intersecting
knots $l_i$ . We equip  each $l_i$ with a ``color'', i.e. an irreducible,
 finite-dimensional, complex representation $\rho_i$ of $G$.
Recall that a ``knot'' in $M$ is an embedding $l:S^1 \to M$.
Using the surjection $[0,1] \ni t \mapsto e^{2 \pi i t} \in \{ z \in \bC \mid |z| =1 \}  \cong S^1$
we can  consider each  knot  as a loop $l:[0,1] \to M$, $l(0) = l(1)$, in the obvious way.

\subsection{Basic  spaces}
\label{subsec2.1}

As in  \cite{Ha7a,Ha7b} we will use the following notation\footnote{Here  $\Omega^p(N,V)$ denotes the space of $V$-valued  $p$-forms
on a smooth manifold $N$}
\begin{subequations} \label{eq_basic_spaces_cont}
\begin{align}
\cB & = C^{\infty}(\Sigma,\ct)  \\
\cA & =  \Omega^1(M,\cG)\\
\cA_{\Sigma} & =  \Omega^1(\Sigma,\cG) \\
\cA_{\Sigma,\ct} & = \Omega^1(\Sigma,\ct), \quad  \cA_{\Sigma,\ck}  = \Omega^1(\Sigma,\ck) \\
\cA^{\orth} & =  \{ A \in \cA \mid A(\partial/\partial t) = 0\}\\
\label{eq_part_f}
\Check{\cA}^{\orth} & = \{ A^{\orth} \in \cA^{\orth} \mid \int A^{\orth}(t) dt \in \cA_{\Sigma,\ck} \} \\
\label{eq_part_g}   \cA^{\orth}_c & = \{ A^{\orth} \in \cA^{\orth} \mid \text{ $A^{\orth}$ is constant and
 $\cA_{\Sigma,\ct}$-valued}\}
 \end{align}
\end{subequations}
where $\ck$ is the orthogonal complement of $\ct$ in $\cG$ w.r.t.
$\langle \cdot, \cdot \rangle$.
Above $dt$ denotes the normalized translation-invariant volume form on $S^1$
and $\partial/\partial t$  the vector field on $M=\Sigma \times S^1$
obtained by ``lifting''  in the obvious way
the normalized translation-invariant vector field $\partial/\partial t$ on $S^1$.
In Eqs. \eqref{eq_part_f} and \eqref{eq_part_g}
 we used the ``obvious'' identification (cf. Sec. 2.3.1 in \cite{Ha7a})
\begin{equation}
\cA^{\orth}  \cong C^{\infty}(S^1,\cA_{\Sigma})
\end{equation}
where  $C^{\infty}(S^1,\cA_{\Sigma})$ is the space of maps
$f:S^1 \to \cA_{\Sigma}$ which are ``smooth'' in the sense that
$\Sigma \times S^1 \ni (\sigma,t) \mapsto (f(t))(X_{\sigma}) \in \cG$
is smooth for every smooth vector field $X$ on $\Sigma$.
Note that we have
\begin{equation} \label{eq_cAorth_decomp}
\cA^{\orth} = \Check{\cA}^{\orth} \oplus  \cA^{\orth}_c
\end{equation}

\subsection{The original Chern-Simons path integral}
\label{subsec2.1b}

The Chern-Simons action function $S_{CS}: \cA \to \bR$
associated to $M$, $G$, and the ``level''\footnote{cf. Remark \ref{rm_shift_in_level}  below}
  $k \in \bN$ is given by
 \begin{equation} \label{eq2.2'} S_{CS}(A) = - k \pi \int_M \langle A \wedge dA \rangle
   + \tfrac{1}{3} \langle A\wedge [A \wedge A]\rangle, \quad
A \in \cA \end{equation}
Here $[\cdot \wedge \cdot]$  denotes the wedge  product associated to the
Lie bracket $[\cdot,\cdot] : \cG \times \cG \to \cG$
and    $\langle \cdot \wedge  \cdot \rangle$
   the wedge product  associated to the
 scalar product $\langle \cdot , \cdot \rangle : \cG \times \cG \to \bR$.\par

The (expectation value of the)  ``Wilson loop observable'' associated to the colored link $L=(l_1,l_2,\ldots,l_m)$
fixed above  is the informal integral expression given by
\begin{equation} \label{eq_WLO_orig}
\WLO(L) := \int_{\cA} \left( \prod_{i=1}^m  \Tr_{\rho_i}(\Hol_{l_i}(A)) \right) \exp( i S_{CS}(A)) DA
\end{equation}
where $\Hol_{l}(A) \in G$ is the holonomy of $A \in \cA$ around the loop $l = l_i$, $i \le m$,
and $DA$ is the (ill-defined) ``Lebesgue measure'' on the infinite-dimensional space $\cA$.
A useful explicit formula for  $\Hol_l(A)$ is
\begin{equation} \label{eq_Hol_heurist}
\Hol_l(A) = \lim_{n \to \infty} \prod_{j=1}^n \exp\bigl(\tfrac{1}{n}  A(l'(t))\bigr)_{| t=j/n}
\end{equation}
where $\exp:\cG \to G$ is the exponential map of $G$.

\begin{remark} In the physics literature the notation $Z(M,L)$ and $P \exp(\int_l A)$ is usually
used instead of $\WLO(L)$ and $\Hol_{l}(A)$.
\end{remark}

\subsection{The torus gauge fixed Chern-Simons path integral}
\label{subsec2.2}

Let $\pi_{\Sigma}: \Sigma \times S^1 \to \Sigma$ be the canonical projection.
For each loop $l_i$ appearing in the link $L$ we set $l^i_{\Sigma}:= \pi_{\Sigma} \circ l_i$.
Moreover, we fix  $\sigma_0 \in \Sigma$ such that
$$\sigma_0 \notin  \bigcup_i \arc(l^i_{\Sigma})$$
By applying ``abstract torus gauge fixing'' (cf. Sec. 2.2.4 in  \cite{Ha7a})
and  suitable change of variable (cf. Sec. 2.3.1 and Appendix B.3 in \cite{Ha7a})
 one can derive at a heuristic level  (cf. Eq. (2.53) in \cite{Ha7a})
\begin{multline}  \label{eq2.48_Ha7a} \WLO(L)
 \sim \sum_{y \in I}  \int_{\cA^{\orth}_c \times \cB} \biggl\{
 1_{C^{\infty}(\Sigma,\ct_{reg})}(B)  \Det_{FP}(B)\\
 \times   \biggl[ \int_{\Check{\cA}^{\orth}} \biggl( \prod_i  \Tr_{\rho_i}\bigl(
 \Hol_{l_i}(\Check{A}^{\orth} + A^{\orth}_c, B)\bigr) \biggr)
\exp(i  S_{CS}( \Check{A}^{\orth}, B)) D\Check{A}^{\orth} \biggr] \\
 \times \exp\bigl( - 2\pi i k  \langle y, B(\sigma_0) \rangle \bigr) \biggr\}
 \exp(i S_{CS}(A^{\orth}_c, B)) (DA^{\orth}_c \otimes DB)
\end{multline}
where ``$\sim$'' denotes equality up to a multiplicative ``constant''\footnote{``constant'' in the sense
 that  $C$ does not depend on $L$.
 By contrast, $C$ may  depend on $G$, $\Sigma$, and $k$.} $C$,
where $I:= \ker(\exp_{| \ct})  \subset \ct$, where $DB$ and $DA^{\orth}_c$ are the
 informal ``Lebesgue measures'' on the infinite-dimensional spaces $\cB$ and $\cA^{\orth}_c$,
 and where we have set $\ct_{reg} := \exp^{-1}(T_{reg})$,
  $T_{reg}$ being the set of ``regular'' elements\footnote{i.e. the set of all $t \in T$ which are not contained
in a different maximal torus $T'$} of $T$.
Moreover, we have set for each $B \in \cB$, $A^{\orth} \in \cA^{\orth}$
\begin{align}
S_{CS}(A^{\orth},B) & := S_{CS}(A^{\orth} + B dt),  \\
 \label{eq4.17}\Hol_{l}(A^{\orth},  B) &  := \Hol_{l}(A^{\orth}    + B dt) \nonumber \\
 & = \lim_{n \to \infty} \prod_{j = 1}^n
 \exp\bigl( \tfrac{1}{n} [ A^{\orth}(l_{S^1}(t))(l'_{\Sigma}(t)) +
 B(l_{\Sigma}(t)) \cdot dt(l'_{S^1}(t))] \bigr)_{t = j/n}
\end{align}
where $dt$ is the real-valued 1-form on  $M=\Sigma \times S^1$
obtained by  pulling back the 1-form $dt$ on $S^1$ by means of the canonical projection
$\pi_{S^1}: \Sigma \times S^1 \to S^1$ and where
$l_{S^1}: [0,1] \to S^1$ and  $l_{\Sigma}: [0,1] \to \Sigma$
are the projected loops given by
 $l_{S^1} := \pi_{S^1} \circ l$ and $l_{\Sigma} := \pi_{\Sigma} \circ l$.\par

Finally, the expression $\Det_{FP}(B)$ in Eq. \eqref{eq2.48_Ha7a} is the informal expression   given by
\begin{equation} \label{eq_DetFP} \Det_{FP}(B) :=    \det\bigl(1_{\ck}-\exp(\ad(B))_{|\ck}\bigr)
\end{equation}
where $1_{\ck}-\exp(\ad(B))_{|\ck} $
 is the linear operator on $C^{\infty}(\Sigma,\ck)$ given by
\begin{equation}
(1_{\ck}-\exp(\ad(B))_{|\ck} \cdot f)(\sigma) = (1_{\ck}-\exp(\ad(B(\sigma)))_{|\ck}) \cdot f(\sigma)
\quad \quad \forall \sigma \in \Sigma, \quad \forall f \in   C^{\infty}(\Sigma,\ck)
\end{equation}
where on the RHS $1_{\ck}$ is the identity on $\ck$.

\smallskip

It will be convenient to generalize the definition of $1_{\ck}-\exp(\ad(B))_{|\ck}$ above.
For every $p \in \{0,1,2\}$ we define
$(1_{\ck}-\exp(\ad(B))_{|\ck})^{(p)}$  to be the linear operator on $\Omega^p(\Sigma,\ck)$ given by
\begin{multline} \label{eq_def_det(p)}
\forall  \alpha \in  \Omega^p(\Sigma,\ck):  \forall  \sigma \in \Sigma:  \forall X_{\sigma} \in \wedge^p
 T_{\sigma} \Sigma: \\
 \quad \quad \bigl( \bigl(1_{\ck}-\exp(\ad(B))_{|\ck}\bigr)^{(p)} \cdot \alpha\bigr)(X_{\sigma}) = (1_{\ck}-\exp(\ad(B(\sigma))_{|\ck}) \cdot \alpha(X_{\sigma})
\end{multline}
 Note that under the identification $C^{\infty}(\Sigma,\ck) \cong \Omega^0(\Sigma,\ck)$
the operator  $(1_{\ck}-\exp(\ad(B))_{|\ck})^{(0)}$ coincides with what above we call $1_{\ck}-\exp(\ad(B))_{|\ck}$.\par

As in \cite{Ha7a} we will now fix
an  auxiliary Riemannian metric ${\mathbf g_{\Sigma}}$ on $\Sigma$.
Let    $\ll \cdot , \cdot \gg_{\cA_{\Sigma}}$ and
    $\ll \cdot , \cdot \gg_{\cA^{\orth}}$  be the scalar products on $\cA_{\Sigma}$ and
  $\cA^{\orth} \cong C^{\infty}(S^1, \cA_{\Sigma})$   induced by ${\mathbf g_{\Sigma}}$,
   and  let $\star: \cA_{\Sigma} \to \cA_{\Sigma}$ be the corresponding Hodge star operator.
  By $\star$ we will also denote the linear automorphism
 $\star: C^{\infty}(S^1, \cA_{\Sigma}) \to C^{\infty}(S^1, \cA_{\Sigma})$ given
 by $(\star A^{\orth})(t) = \star (A^{\orth}(t))$ for all $A^{\orth} \in \cA^{\orth}$ and $t \in S^1$.
 We then have (cf. Eq. (2.48) in \cite{Ha7a})
 \begin{equation} \label{eq_SCS_expl0} S_{CS}(A^{\orth},B)  =  \pi k  \ll A^{\orth},
\star  \bigl(\tfrac{\partial}{\partial t} + \ad(B) \bigr) A^{\orth} \gg_{\cA^{\orth}}
 +  2 \pi k  \ll\star  A^{\orth},  dB \gg_{\cA^{\orth}}
\end{equation}
for all $B \in \cB$ and $A^{\orth} \in \cA^{\orth}$,
 and in particular,
 \begin{align} \label{eq_SCS_expl}
S_{CS}(\Check{A}^{\orth},B) & =  \pi k  \ll \Check{A}^{\orth},
\star  \bigl(\tfrac{\partial}{\partial t} + \ad(B) \bigr) \Check{A}^{\orth} \gg_{\cA^{\orth}} \\
\label{eq_SCS_expl2}
 S_{CS}(A^{\orth}_c,B) & =   2 \pi k  \ll\star  A^{\orth}_c,  dB \gg_{\cA^{\orth}}
\end{align}
for  $B \in \cB$, $\Check{A}^{\orth} \in \Check{\cA}^{\orth}$, and $A^{\orth}_c \in \cA^{\orth}_c$.

\subsection{Ribbon version  of Eq. \eqref{eq2.48_Ha7a}}
\label{subsec2.4}

Recall that our goal is to find a rigorous realization of Witten's CS path integral expressions
 which reproduces the Reshetikhin-Turaev invariants (in the special situation described in the Introduction).
 Since the Reshetikhin-Turaev invariants are defined for ribbon links (or, equivalently\footnote{From the knot theory point of view   the framed link picture and the ribbon link picture are equivalent.
 However,  the ribbon picture seems to be better suited for the
 study of the  Chern-Simons path integral in the torus gauge},
  for framed links) we will now write down a ribbon analogue of Eq. \eqref{eq2.48_Ha7a}. \par

 A closed ribbon $R$ in $\Sigma \times S^1$ is a  smooth embedding
$R: S^1 \times [0,1] \to \Sigma \times S^1$.
A ribbon link in $\Sigma \times S^1$ is
 a finite tuple of non-intersecting closed ribbons in in $\Sigma \times S^1$.
We will replace the link $L =(l_1,l_2, \ldots, l_m)$
by a ribbon link $L_{ribb} = (R_1,R_2, \ldots, R_m)$ where each $R_i$, $i \le m$,
is chosen such that  $l_i(t)= R_i(t,1/2)$ for all $t \in S^1$. Instead of $L_{ribb}$ we will
simply write $L$ in the following.
From now on we will assume that  $\sigma_0 \in \Sigma$ was chosen such that
$$\sigma_0 \notin  \bigcup_i \Image(R^i_{\Sigma})$$
where $R^i_{\Sigma}:= \pi_{\Sigma} \circ R_i$.
For every $R \in \{R_1, R_2, \ldots, R_m\}$
we define
$$\Hol_{R}(A) :=  \lim_{n \to \infty} \prod_{j=1}^n \exp\bigl(\tfrac{1}{n} \int_{0}^1  A(l'_u(t)) du \bigr)_{| t=j/n}
 \in G$$
 where $l_u$, $u \in [0,1]$, is the knot
 $l_u := R(\cdot,u)$, considered  as a loop  $[0,1] \to \Sigma \times S^1$.
Moreover, for $A^{\orth} \in \cA^{\orth}$ and $B \in \cB$ we set
\begin{multline} \label{eq_sec3.3_1}
\Hol_{R}(A^{\orth}, B)  :=  \Hol_{R}(A^{\orth} +B dt)  \\
  = \lim_{n \to \infty} \prod_{j=1}^n \exp\bigl(\tfrac{1}{n} \int_{0}^1
   [ A^{\orth}(l^u_{S^1}(t))((l^u_{\Sigma})'(t)) +
 B(l^u_{\Sigma}(t)) \cdot dt((l^u_{S^1})'(t))] du\bigr)_{| t = j/n}
   \end{multline}
where  $l^u_{S^1} := \pi_{S^1} \circ l_u$ and $l^u_{\Sigma} := \pi_{\Sigma} \circ l_u$ for each $u \in [0,1]$.\par

We now obtain the aforementioned ribbon analogue of Eq. \eqref{eq2.48_Ha7a}
 by replacing   the expression $\Hol_{l_i}(\Check{A}^{\orth} + A^{\orth}_c,B)$ in Eq. \eqref{eq2.48_Ha7a}
 with $\Hol_{R_i}(\Check{A}^{\orth} + A^{\orth}_c, B)$:
\begin{multline}  \label{eq2.48} \WLO(L)
 \sim \sum_{y \in I}  \int_{\cA^{\orth}_c \times \cB} \biggl\{
 1_{C^{\infty}(\Sigma,\ct_{reg})}(B)  \Det(B)\\
 \times   \biggl[ \int_{\Check{\cA}^{\orth}} \left( \prod_i  \Tr_{\rho_i}\bigl(
 \Hol_{R_i}(\Check{A}^{\orth} + A^{\orth}_c, B)\bigr) \right)
d\mu^{\orth}_B(\Check{A}^{\orth}) \biggr] \\
 \times \exp\bigl( - 2\pi i k  \langle y, B(\sigma_0) \rangle \bigr) \biggr\}
 \exp(i S_{CS}(A^{\orth}_c, B)) (DA^{\orth}_c \otimes DB)
\end{multline}
where, as a preparation for Sec. \ref{subsec2.5}, we have set, for each $B \in \cB$,
\begin{align}  \label{eq_def_Z_B}
\Check{Z}(B) & := \int \exp(i  S_{CS}( \Check{A}^{\orth}, B)) D\Check{A}^{\orth},\\
 \label{eq_def_mu_B}
d\mu^{\orth}_B & := \tfrac{1}{\Check{Z}(B)} \exp(i  S_{CS}( \Check{A}^{\orth}, B)) D\Check{A}^{\orth}
\end{align}
and
\begin{equation} \label{eq_def_det(B)}
\Det(B) := \Det_{FP}(B) \Check{Z}(B)
\end{equation}

\subsection{Rewriting $\Det(B)$}
\label{subsec2.5}

Informally, we have for $B \in \cB_{reg}:= C^{\infty}(\Sigma,\ct_{reg})$
\begin{equation} \label{eq_sec2.5_1}
\Check{Z}(B) \sim \det\bigl(\tfrac{\partial}{\partial t} + \ad(B)\bigr)^{ -1/2} \overset{(*)}{\sim}
 \det\bigl(\bigl(1_{\ck}-\exp(\ad(B))_{|\ck}\bigr)^{(1)}\bigr)^{-1/2}
 \end{equation}
  where $\tfrac{\partial}{\partial t} + \ad(B):\Check{\cA}^{\orth} \to \Check{\cA}^{\orth}$
  is the operator appearing  in Eq. \eqref{eq_SCS_expl} above,
  and where  $(1_{\ck}-\exp(\ad(B))_{|\ck})^{(1)}$  is the linear operator on $\cA_{\Sigma,\ck}=
 \Omega^1(\Sigma,\ck)$ given by Eq. \eqref{eq_def_det(p)} above with $p=1$.
 Here  step $(*)$ is suggested by
 $$\det(\tfrac{\partial}{\partial t} + \ad(b)\bigr) \sim \det\bigl(\bigl(1_{\ck}-\exp(\ad(b))_{|\ck}\bigr)\bigr) \quad \forall b \in \ct_{reg}$$
 where   $\tfrac{\partial}{\partial t} + \ad(b): C^{\infty}(S^1,\ck) \to  C^{\infty}(S^1,\ck)$
 and where $\det(\tfrac{\partial}{\partial t} + \ad(b))$ is
 defined  with the help of a standard  $\zeta$-function regularization argument.   \par

 Observe also that
 $\bigl(1_{\ck}-\exp(\ad(B))_{|\ck}\bigr)^{(0)} = \star^{-1} \circ \bigl(1_{\ck}-\exp(\ad(B))_{|\ck}\bigr)^{(2)} \circ \star$  where $(1_{\ck}-\exp(\ad(B))_{|\ck})^{(2)}$  is the linear operator on
$\Omega^2(\Sigma,\ck)$ given by Eq. \eqref{eq_def_det(p)} above with $p=2$
and where  $\star: \Omega^0(\Sigma,\ck) \to \Omega^2(\Sigma,\ck)$ is the  Hodge star operator induced by ${\mathbf g_{\Sigma}}$.
Thus we obtain,  informally,
\begin{equation} \label{eq_sec2.5_2}
 \det\bigl(\bigl(1_{\ck}-\exp(\ad(B))_{|\ck}\bigr)^{(0)}\bigr) = \det\bigl(\bigl(1_{\ck}-\exp(\ad(B))_{|\ck}\bigr)^{(2)}\bigr)
\end{equation}
Combining Eq. \eqref{eq_def_det(B)}, Eq. \eqref{eq_sec2.5_1}, and Eq. \eqref{eq_sec2.5_2} we obtain
\begin{equation} \label{eq_Det(B)_rewrite}
\Det(B)  = \prod_{p=0}^2  \det\bigl(\bigl(1_{\ck}-\exp(\ad(B))_{|\ck}\bigr)^{(p)}\bigr)^{(-1)^p /2}
\end{equation}

\section{Simplicial realization  $\WLO^{disc}_{rig}(L)$ of $\WLO(L)$}
\label{sec3}

\subsection{Some polyhedral cell complexes}

Let $\cP$ be a finite oriented polyhedral cell complex (cf. Appendix C in \cite{Ha7a}).

\begin{itemize}
\item We denote by $\face_p(\cP)$, $ p \in \bN_0$, the set of $p$-faces of $\cP$.
The elements of $\face_0(\cP)$ ($\face_1(\cP)$, respectively)
  will be called the ``vertices'' (``edges'', respectively) of $\cP$.

\item For  every fixed real vector space $V$
we denote by  $C^p(\cP,V)$, $ p \in \bN_0$, the space of maps $\face_p(\cP) \to V$  (``$V$-valued
 $p$-cochains of $\cP$'').  Instead of  $C^p(\cP,\bR)$ we will often write $C_p(\cP)$.

\item We identify $\face_p(\cP)$, $ p \in \bN_0$, with a subset of $C_p(\cP) = C^p(\cP,\bR)$
     in the obvious way, i.e. each $\alpha \in \face_p(\cP)$ is identified with
     $\delta_{\alpha} \in C^p(\cP,\bR)$ given by $\delta_{\alpha}(\beta) = \delta_{\alpha\beta}$
     for all $\beta \in \face_p(\cP)$.

\item   By $d_{\cP}$, $ p \in \bN_0$, we will denote the  coboundary operator\footnote{in the special case
$p=0$, which is the only case relevant for us, $d_{\cP}: C^0(\cP,V) \to C^{1}(\cP,V)$
 is given explicitly by $d f(e) = f(end(e))-f(start(e))$ for all $f \in C^0(\cP,V)$ and $e \in \face_1(\cP)$
 where $start(e), end(e) \in \face_0(\cP)$ denote the starting/end point of the (oriented) edge $e$}
 $C^p(\cP,V) \to C^{p+1}(\cP,V)$.
\end{itemize}

i) As a discrete analogue of the Lie group $S^1$ we will
use  the finite cyclic group $\bZ_{N}$, $N \in \bN$.
 The number $N$ will be kept fixed throughout the rest of this paper.
 We will identify $\bZ_N$ with the subgroup $\{e^{2 \pi i k/N} \mid 1 \le k \le N\}$ of
 $S^1$.  The points of $\bZ_N$ induce a polyhedral cell decomposition of $S^1$. The (1-dimensional oriented\footnote{we equip each edge of $\bZ_N$ with the orientation
 induced by the orientation $dt$ of $S^1$})  polyhedral cell complex  obtained in this way
   will also be denoted by $\bZ_N$ in the following.\par

ii) We  fix a finite oriented smooth polyhedral cell decomposition
$\cC$  of $\Sigma$ .
By $\cC'$ we will denote the ``canonical dual'' of the  polyhedral cell decomposition  $\cC$ (cf. the end of
 Appendix C in \cite{Ha7a}),  again equipped with an orientation.
By $\cK$ and $\cK'$ we will denote the (oriented)
  polyhedral cell complexes associated to $\cC$ and $\cC'$,
  i.e. $\cK = (\Sigma,\cC)$ and $\cK' = (\Sigma,\cC')$.
Instead of $\cK$ and $\cK'$
 we often write $K_1$ and $K_2$   and  we set $K:= (K_1,K_2)$. \par

iii) We  introduce a joint subdivision $q\cK$ of $\cK$ and $\cK'$
which is uniquely determined by the conditions
\begin{align*} \face_0(q\cK) & = \face_0(b\cK), \\
\face_1(q\cK) & = \face_1(b\cK) \backslash \{e \in \face_1(b\cK) \mid \text{ both endpoints of $e$
lie in $\face_0(\cK) \cup  \face_0(\cK')$} \},
\end{align*}
 $b\cK$ being the barycentric subdivision of $\cK$ (cf. Sec. 4.4.3 in \cite{Ha7a} for more details).
We equip  the faces of $q\cK$ with an orientation.
 For convenience we choose the orientation on the edges of  $q\cK$  to be ``compatible''\footnote{more precisely,
 for  each $e \in \face_1(q\cK)$ we choose the orientation which is induced by orientation of the unique
 edge $e' \in \face_1(\cK) \cup \face_1(\cK')$ which contains $e$} with
 the orientation on the edges of $\cK$ and $\cK'$.\par

iv) By $\cK \times \bZ_N$ and  $q\cK \times \bZ_N$  we will denote the obvious product (polyhedral) cell complexes.

\subsection{The basic spaces}
\label{subsec3.0}

\subsubsection*{a) The spaces $\cB(q\cK)$, $\cA_{\Sigma}(q\cK)$, and $\cA^{\orth}(q\cK)$}
We first introduce the following simplicial analogues of the spaces $\cB$, $\cA_{\Sigma}$, and $\cA^{\orth}$
in Sec. \ref{subsec2.1} above:
\begin{subequations} \label{eq_basic_spaces}
\begin{align}
 \cB(qK) & :=C^0(q\cK,\ct) \\
 \cA_{\Sigma}(q\cK) & := C^1(q\cK,\cG) \\
\cA^{\orth}(q\cK) & := \Map(\bZ_N,\cA_{\Sigma}(q\cK))
 \end{align}
 \end{subequations}

\noindent
The scalar product $\langle \cdot, \cdot \rangle$ on $\cG$  induces
 scalar products $\ll \cdot, \cdot\gg_{\cB(q\cK)}$ and $\ll \cdot, \cdot\gg_{\cA_{\Sigma}(q\cK)}$
 on $\cB(q\cK)$ and $\cA_{\Sigma}(q\cK)$  in the standard way.
 We define a  scalar product
  $\ll \cdot , \cdot \gg_{\cA^{\orth}(q\cK)}$      on $\cA^{\orth}(q\cK) = \Map(\bZ_N,\cA_{\Sigma}(q\cK))$ by
   \begin{equation} \label{eq_norm_scalarprod}
\ll A^{\orth}_1 , A^{\orth}_2 \gg_{\cA^{\orth}(q\cK)}  =   \tfrac{1}{N} \sum_{t \in \bZ_N} \ll
A^{\orth}_1(t) , A^{\orth}_2(t) \gg_{\cA_{\Sigma}(q\cK)}
\end{equation} for all $A^{\orth}_1 , A^{\orth}_2 \in \cA^{\orth}(q\cK)$.

\subsubsection*{b) The subspaces $\cB(\cK)$, $\cA_{\Sigma}(K)$, and $\cA^{\orth}(K)$}

For technical reasons (cf. Remark \ref{rm_3.1}  below) we will now introduce suitable subspaces
of $\cB(q\cK)$, $\cA_{\Sigma}(q\cK)$, and $\cA^{\orth}(q\cK)$.
It will be convenient to first define these three spaces in an ``abstract'' way and
then to explain how they are embedded into the three aforementioned spaces. We set

\begin{subequations}
\begin{align}
\cB(\cK) & := C^0(\cK,\ct)\\
\cA_{\Sigma}(K) & :=C^1(K_1,\cG)  \oplus  C^1(K_2,\cG)\\
\cA^{\orth}(K) & := \Map(\bZ_N,\cA_{\Sigma}(K))
\end{align}
\end{subequations}

\begin{itemize}
\item We will  identify $\cA_{\Sigma}(K) \cong (C_1(K_1)  \oplus  C_1(K_2)) \otimes_{\bR} \cG$ with a linear subspace of $\cA_{\Sigma}(q\cK) \cong C_1(q\cK)  \otimes_{\bR} \cG$
by means of the linear injection $\psi \otimes \id_{\cG}$
where $\psi:C_1(K_1)  \oplus  C_1(K_2) \to  C_1(q\cK)$ is the linear injection given by
$\psi(e)=e_1 + e_2$  for all $e \in \face_1(K_1) \cup  \face_1(K_2)$
where $e_1, e_2 \in  \face_1(q\cK)$ are the two edges of $q\cK$
``contained'' in $e$.

\item Since $\cA_{\Sigma}(K)$ is now identified with a subspace of $\cA_{\Sigma}(q\cK) $
the space  $\cA^{\orth}(K)$ can be considered as a subspace of $\cA^{\orth}(q\cK)$ in the obvious way.

\item Finally, the space $\cB(\cK)$ will be identified with a subspace of $\cB(q\cK)$
via the linear injection   $\psi: \cB(\cK) \to \cB(q\cK)$
 which associates to each $B \in  \cB(\cK)$ the extension $\bar{B} \in \cB(q\cK)$
given by $\bar{B}(x) = \mean_{y \in S_x} B(y)$ for all $x \in \face_0(q\cK)$.
Here ``mean'' refers to the arithmetic mean and $S_x$ denotes the set of all $y \in \face_0(\cK)$ which
 lie in the closure of the unique open cell of $\cK$  containing $x$.
\end{itemize}

\begin{remark} \label{rm_3.1} \rm
 i) The reason for introducing the subspaces
$\cA_{\Sigma}(K)$ and $\cA^{\orth}(K)$ is that
these spaces allow us  to obtain a nice simplicial analogue of the Hodge star operator,
cf. Sec. \ref{subsec3.2} below.

\medskip

ii) In order to motivate the introduction of the subspace $\cB(\cK)$ of   $\cB(q\cK)$ we remark that
$\ker(\pi \circ d_{q\cK}) \neq \cB_c(q\cK)$ where
 $ \cB_c(q\cK):= \{B \in \cB(q\cK) \mid B \text{ constant}\}$ and where
\begin{equation} \label{eq_pi_proj} \pi: \cA_{\Sigma}(q\cK) \to  \cA_{\Sigma}(K)
\end{equation}
denotes the orthogonal projection. The advantage of working with the space $\cB(\cK)$ is that
\begin{equation}  \label{eq_obs1}
\ker((\pi \circ d_{q\cK})_{|\cB(\cK)}) = \cB_c(q\cK)
\end{equation}
(Observe that $\cB_c(q\cK) \subset \cB(\cK)$).
Eq. \eqref{eq_obs1} will play an important role in Step 2 in the proof of Theorem \ref{theorem1} below.
\end{remark}

\subsubsection*{c) The spaces  $\Check{\cA}^{\orth}(K)$ and $\cA^{\orth}_c(K)$}

In order to obtain  a  simplicial analogue of the decomposition $\cA^{\orth} =  \Check{\cA}^{\orth}
\oplus \cA^{\orth}_c$ in  Eq. \eqref{eq_cAorth_decomp} above
we introduce the following spaces:
\begin{subequations}
\begin{align}
  \cA_{\Sigma,\ct}(K) & := C^1(K_1,\ct) \oplus C^1(K_2,\ct)\\
    \cA_{\Sigma,\ck}(K) & := C^1(K_1,\ck) \oplus C^1(K_2,\ck)\\
\label{eq_CheckcA_disc} \Check{\cA}^{\orth}(K) & :=
 \{ A^{\orth} \in  \cA^{\orth}(K) \mid
\sum\nolimits_{t \in \bZ_{N}}  A^{\orth}(t) \in  \cA_{\Sigma,\ck}(K) \} \\
 \cA^{\orth}_c(K) & := \{ A^{\orth} \in  \cA^{\orth}(K) \mid
 \text{ $A^{\orth}(\cdot)$ is constant and $ \cA_{\Sigma,\ct}(K)$-valued}\} \overset{(*)}{\cong} \cA_{\Sigma,\ct}(K)
 \end{align}
 \end{subequations}
 where in step $(*)$ we made the obvious identification.
 Observe that  we have
\begin{equation} \label{eq4.30}
\cA^{\orth}(K) =  \Check{\cA}^{\orth}(K) \oplus \cA^{\orth}_c(K).
\end{equation}

\begin{convention}  \label{conv_EucSpaces} \rm
In the following we will always consider $\cB(\cK)$, $\cA^{\orth}(K)$
and their subspaces
 as Euclidean vector spaces in the ``obvious''\footnote{More precisely, we will assume that the space
$\cB(\cK)$ (or any subspace of $\cB(\cK)$)  is equipped with the (restriction of the) scalar product $\ll \cdot, \cdot\gg_{\cB(q\cK)}$ on $\cB(q\cK)$, and the space $\cA^{\orth}(K)$ (or any subspace of $\cA^{\orth}(K)$)
is equipped with the  restriction of the scalar product $\ll \cdot, \cdot\gg_{\cA^{\orth}(q\cK)}$,
introduced in Sec. \ref{subsec3.0} above} way.
\end{convention}

\subsection{Discrete analogue of the operator $\tfrac{\partial}{\partial t} + \ad(B)$ in Eq. \eqref{eq_SCS_expl0}}
\label{subsec3.1}

\subsubsection*{a) Discrete analogue(s) of the operator
  $\tfrac{\partial}{\partial t} + \ad(b): C^{\infty}(S^1,\cG) \to  C^{\infty}(S^1,\cG)$, $b \in \ct$}

As a preparation for the next subsection, let us introduce, for fixed $b \in \ct$, two simplicial analogues
 $\hat{L}^{(N)}(b): \Map(\bZ_N,\cG) \to \Map(\bZ_N,\cG)$ and  $\Check{L}^{(N)}(b): \Map(\bZ_N,\cG) \to \Map(\bZ_N,\cG)$ of  the continuum operator $L(b):= \tfrac{\partial}{\partial t} + \ad(b): C^{\infty}(S^1,\cG) \to  C^{\infty}(S^1,\cG)$ by
\begin{align} \label{eq_def_LOp}
 \hat{L}^{(N)}(b) & := N( \tau_1 e^{\ad(b)/N} - \tau_{0})\\
 \Check{L}^{(N)}(b) & := N( \tau_0 - \tau_{-1} e^{-\ad(b)/N})
\end{align}
where $\tau_x$, for  $x \in \bZ_N$,  denotes   the translation operator  $\Map(\bZ_N,\cG) \to  \Map(\bZ_N,\cG)$  given by $(\tau_x f)(t) = f(t +x)$ for all $t \in \bZ_N$.
I want to emphasize that  $\hat{L}^{(N)}(b)$ and  $\Check{L}^{(N)}(b)$
are indeed totally natural simplicial analogues of $L(b)$, see Sec. 5 in \cite{Ha7a} for a detailed motivation.

\begin{remark} \rm The operator $\bar{L}^{(N)}(b):\Map(\bZ_N,\cG) \to \Map(\bZ_N,\cG)$
given by
$$\bar{L}^{(N)}(b)  := \tfrac{N}{2}( \tau_1 e^{\ad(b)/N} - \tau_{-1}  e^{-\ad(b)/N} )$$
might at first look appear to be the natural candidate for a simplicial analogue of the continuum operator
$L(b)$. However, there are several problems with $\bar{L}^{(N)}(b)$.
Firstly, the properties of $\bar{L}^{(N)}(b)$ depend on whether $N$ is odd or even.
Secondly, when $N$ is odd then $\bar{L}^{(N)}(b)$  seems to have the ``wrong'' determinant.
Most probably,  part ii) of Remark \ref{rm_Sec4.3} below  will no longer be  true if we redefine the operator $L^{(N)}(B)$
given in Eq. \eqref{def_LN} below using $\bar{L}^{(N)}(b)$  instead of $\hat{L}^{(N)}(b)$  and $\Check{L}^{(N)}(b)$.
On the other hand, if $N$ is even then  $\bar{L}^{(N)}(b)$   has the ``wrong''\footnote{For example, if $b \in \ct_{reg}$ then we have
$\ker(L(b))  = \{ f \in C^{\infty}(S^1,\cG) \mid f \text{ constant and $\ct$-valued}\}$.
Similarly, we have
$\ker(\hat{L}^{(N)}(b)) = \ker(\Check{L}^{(N)}(b)) = \{f \in \Map(\bZ_N,\cG) \mid f \text{ constant and $\ct$-valued}\}$.
By contrast, $\ker(\bar{L}^{(N)}(b))$ is strictly larger than $ \{f \in \Map(\bZ_N,\cG) \mid f \text{ constant and $\ct$-valued}\}$.} kernel.
\end{remark}

\subsubsection*{b) Discrete analogue of the operator $\tfrac{\partial}{\partial t} + \ad(B):\cA^{\orth} \to \cA^{\orth}$}

For every fixed $B \in \cB(q\cK)$ we first introduce
 the linear operators
 \begin{align*}
 \hat{L}^{(N)}(B) & \text{ on } \Map(\bZ_N, C^1(K_1,\cG)) \cong \oplus_{e \in \face_1(K_1)} \Map(\bZ_N, \cG), \text{ and } \\
 \Check{L}^{(N)}(B) & \text{ on } \Map(\bZ_N, C^1(K_2,\cG)) \cong \oplus_{e \in \face_1(K_2)} \Map(\bZ_N, \cG)
\end{align*}
which are given by
\begin{subequations}
\begin{align} \label{eq_LN_ident1}
\hat{L}^{(N)}(B) & \cong \oplus_{e \in  \face_1(K_1)}
 \hat{L}^{(N)}(B(\bar{e})) \\
 \label{eq_LN_ident2} \Check{L}^{(N)}(B)) & \cong \oplus_{e \in  \face_1(K_2)}
 \Check{L}^{(N)}(B(\bar{e}))
\end{align}
\end{subequations}
where $\bar{e} \in \face_0(q\cK)$ for $e \in  \face_1(K_1) \cup  \face_1(K_2)$
is the barycenter of $e$.

\smallskip

As the simplicial analogue of the the operator $\tfrac{\partial}{\partial t} + \ad(B):\cA^{\orth} \to \cA^{\orth}$
we now take the operator  $L^{(N)}(B):\cA^{\orth}(K) \to \cA^{\orth}(K)$
 which, under the identification
 $$\cA^{\orth}(K) \cong \Map(\bZ_N, C^1(K_1,\cG)) \oplus \Map(\bZ_N, C^1(K_2,\cG)),$$
 is   given by (cf. Remark \ref{rm_Sec4.3}  below for the motivation)
 \begin{equation} \label{def_LN} L^{(N)}(B) = \left( \begin{matrix}
 \hat{L}^{(N)}(B) && 0 \\
0 && \Check{L}^{(N)}(B)
\end{matrix} \right)
\end{equation}

Note that  $L^{(N)}(B)$ leaves the subspace $\Check{\cA}^{\orth}(K)$ of $\cA^{\orth}(K)$
 invariant. The restriction of $L^{(N)}(B)$  to $\Check{\cA}^{\orth}(K)$
 will also be denoted by $L^{(N)}(B)$ in the following.

\subsection{Definition of $S^{disc}_{CS}(A^{\orth},B)$}
\label{subsec3.2}

Recall that $\cK = K_1$ and $\cK' = K_2$ are dual to each other.
As in \cite{Ad1} we can therefore introduce  simplicial Hodge star operators
$\star_{K_1}: C^1(K_1,\bR) \to C^{1}(K_2,\bR)$ and $\star_{K_2}: C^1(K_2,\bR) \to C^{1}(K_1,\bR)$.
These are the linear isomorphisms given by
\begin{equation} \label{eq_Hodge_star_concrete}
   \star_{K_j} e = \pm \Check{e}  \quad \quad  \forall  e \in \face_1(K_j)
 \end{equation}
 for $j=1,2$ where $\Check{e} \in \face_1(K_{3-j})$ is the edge dual to $e$.
The sign $\pm$ above is ``$+$'' if the orientation of $\Check{e}$ is the one induced
by the orientation of $e$ and  the orientation of $\Sigma$, and it is ``$-$'' otherwise.
The two simplicial Hodge star operators above induce
``$\cG$-valued versions'' $\star_{K_1}: C^1(K_1,\cG) \to C^{1}(K_2,\cG)$ and $\star_{K_2}: C^1(K_2,\cG) \to C^{1}(K_1,\cG)$  in the obvious way.\par

Let $\star_K$ be the linear automorphism of
 $\cA_{\Sigma}(K) =  C^1(K_1,\cG) \oplus C^1(K_2,\cG)$ which is given by
\begin{equation} \label{eq_Hodge_matrix}
\star_K := \left(\begin{matrix} 0 && \star_{K_2} \\ \star_{K_1} && 0 \end{matrix}
 \right)
\end{equation}
By $\star_K$ we will also denote the  linear automorphism
of   $\cA^{\orth}(K)$ given by
\begin{equation} \label{eq_star_K_vor_rm}
 (\star_K A^{\orth})(t) = \star_K (A^{\orth}(t)) \quad \quad \forall A^{\orth} \in \cA^{\orth}(K), t \in \bZ_N
 \end{equation}

As the  simplicial analogues of the continuum expression
$S_{CS}(A^{\orth},B)$ in Eq. \eqref{eq_SCS_expl0} above
we  use the expression
 \begin{subequations}  \label{eq_SCS_expl_disc}
\begin{equation}  S^{disc}_{CS}(A^{\orth},B) :=   \pi  k \biggl[ \ll A^{\orth},
\star_K  L^{(N)}(B)
   A^{\orth} \gg_{\cA^{\orth}(q\cK)}
 + 2 \ll  \star_K A^{\orth},  d_{q\cK}  B \gg_{\cA^{\orth}(q\cK)}  \biggr]
\end{equation}
  for $B \in \cB(q\cK)$,  $A^{\orth} \in  \cA^{\orth}(K) \subset \cA^{\orth}(q\cK)$.
Observe that this implies
 \begin{align}  \label{eq_SCS_expl_discb} S^{disc}_{CS}(\Check{A}^{\orth},B) & =   \pi  k  \ll \Check{A}^{\orth},
\star_K  L^{(N)}(B)    \Check{A}^{\orth} \gg_{\cA^{\orth}(q\cK)} \\
\label{eq_SCS_expl_discc} S^{disc}_{CS}(A^{\orth}_c,B) & =  2 \pi  k  \ll  \star_K A^{\orth}_c,  d_{q\cK}  B \gg_{\cA^{\orth}(q\cK)}
\end{align}
 \end{subequations}
for $B \in \cB(q\cK)$,  $\Check{A}^{\orth} \in  \Check{\cA}^{\orth}(K)$, $A^{\orth}_c \in  \cA^{\orth}_c(K)$.

\begin{remark}  \label{rm_Sec4.3} \rm
i) The operator  $\star_K L^{(N)}(B): \cA^{\orth}(K) \to \cA^{\orth}(K)$
is symmetric  w.r.t to the scalar product $\ll  \cdot, \cdot  \gg_{\cA^{\orth}(q\cK)}$,
cf. Proposition 5.3 in \cite{Ha7a}. This would not be the case if on the RHS
of Eq. \eqref{def_LN} we had used  $\hat{L}^{(N)}(B)$ twice (or $\Check{L}^{(N)}(B)$ twice).  \par

ii) According to Proposition 5.1 in \cite{Ha7b} we have
     $\det\bigl(\star_K L^{(N)}(B)_{| \Check{\cA}^{\orth}(K)} \bigr) \neq 0$ for all
\begin{equation}B \in \cB_{reg}(q\cK) := \{ B \in  \cB(q\cK) \mid
B(x) \in \ct_{reg} \text{ for all $x \in \face_0(q\cK)$}\}
\end{equation}
where $\star_K L^{(N)}(B)_{| \Check{\cA}^{\orth}(K)}$ is the restriction
of $\star_K L^{(N)}(B)$ to the invariant
   subspace $\Check{\cA}^{\orth}(K)$ of $\cA^{\orth}(K)$.
\end{remark}

\subsection{Definition of $\Hol^{disc}_{R}(A^{\orth},  B)$  }
\label{subsec3.3}

\subsubsection*{a) Preparation: The simplicial loop case}

A  ``simplicial curve'' in a finite oriented polyhedral cell complex
 $\cP$ is a finite sequence $c=(x^{(k)})_{0 \le k \le n}$, $n \in \bN$, of  vertices in $\cP$
 such that for every $1 \le k \le n$
the two vertices  $x^{(k)}$ and  $x^{(k-1)}$
   either coincide or are the two endpoints
 of an edge $e \in \face_1(\cP)$.
 We will call $n$ the ``length'' of the simplicial curve $c$.
   If $x^{(n)} = x^{(0)}$ we will call
$c= (x^{(k)})_{0 \le k \le n}$ a ``simplicial loop'' in $\cP$.\par

Every simplicial curve  $c= (x^{(k)})_{0 \le k \le n}$  induces a
sequence $(e^{(k)})_{1 \le k \le n}$ of ``generalized edges'', i.e. elements of
$\face_1(\cP) \cup \{0\} \cup (- \face_1(\cP))\subset
C_1(\cP)$ in a natural way. More precisely, we have
$e^{(k)} = 0$ if $x^{(k-1)} = x^{(k)}$ and $e^{(k)} = \pm e$ if $x^{(k-1)} \neq x^{(k)}$
where $e \in \face_1(\cP)$ is the unique edge connecting the vertices $x^{(k-1)}$ and $x^{(k)}$
and where the sign $\pm$ is $+$ if $x^{(k-1)}$ is the starting point of $e$ and $-$ if it is the endpoint.

\begin{convention} \label{conv_loop_pic} For a given simplicial loop  $l= (x^{(k)})_{0 \le k \le n}$
we will  usually write  $\start l^{(k)}$ instead of $x^{(k-1)}$
and $l^{(k)}$ instead of $e^{(k)}$ (for $1 \le k \le n$) where $(e^{(k)})_{1 \le k \le n}$
is the corresponding sequence of generalized edges.
\end{convention}

Let $l=   (x^{(k)})_{0 \le k \le n}$, $n \in \bN$,
 be a simplicial loop in $q\cK \times \bZ_N$
and let  $l_{q\cK}$ and  $l_{\bZ_N}$ be the ``projected'' simplicial loops in $q\cK$ and $\bZ_N$.
Instead of $l_{q\cK}$ and  $l_{\bZ_N}$ we will usually  write
$l_{\Sigma}$ and  $l_{S^1}$. (Recall that $\Sigma$ and $S^1$ are the topological spaces
underlying $q\cK$ and $\bZ_N$.)  \par

For $A^{\orth} \in \cA^{\orth}(K) \subset \cA^{\orth}(q\cK)$ and $B \in \cB(q\cK)$
we now define the following simplicial analogue of the expression $\Hol_{l}(A^{\orth},   B)$
 in Eq. \eqref{eq4.17} (cf. Convention \ref{conv_loop_pic}):
\begin{equation} \label{eq4.18}
\Hol^{disc}_{l}(A^{\orth},   B) :=    \prod_{k=1}^n \exp\biggl(
A^{\orth}(\start l^{(k)}_{S^1})(l^{(k)}_{\Sigma})  +
  B(\start l^{(k)}_{\Sigma}) \cdot dt^{(N)}(l^{(k)}_{S^1}) \biggr)
\end{equation}
with $dt^{(N)} \in C^1(\bZ_{N},\bR)$  given by
 $dt^{(N)}(e)= \tfrac{1}{N}$ for all $e \in \face_1(\bZ_N)$
 and where we  made the identification $C^1(\bZ_{N},\bR) \cong \Hom_{\bR}(C_1(\bZ_{N}),\bR)$
 and  $\cA_{\Sigma}(q\cK) = C^1(q\cK,\cG) \cong \Hom_{\bR}(C_1(q\cK),\cG)$.

\subsubsection*{b)  The simplicial ribbon case}

A ``closed  simplicial  ribbon''  in  a finite oriented polyhedral cell complex $\cP$ is a finite sequence $R = (F_i)_{i \le n}$ of 2-faces of $\cP$ such that every $F_i$ is a tetragon
and such that  $F_i \cap  F_{j} = \emptyset$ unless $i = j$ or $j= i \pm 1$ (mod n).
In the latter case  $F_i$ and $F_{j}$ intersect in  a (full) edge  (cf. Remark 4.3 in Sec. 4.3 in \cite{Ha7a}
and the paragraph before Remark 4.3 in \cite{Ha7a}). \par

From now on we will consider only the special case
where  $\cP = \cK \times \bZ_N$.
Observe that if $R  = (F_k)_{k \le \bar{n}}$, $\bar{n} \in \bN$,
is  a closed simplicial ribbon in  $\cK \times \bZ_N$
 then either all the edges $e_{ij} := F_i \cap  F_{j}$, $j= i \pm 1$ (mod $\bar{n}$),
are parallel to $\Sigma$ or they are all parallel to $S^1$.
In the first case we will call $R$ ``regular''.\par

In the following let  $R  = (F_k)_{k \le \bar{n}}$, $\bar{n} \in \bN$, be a  fixed regular
closed simplicial ribbon in $\cK \times \bZ_N$.
   Observe that $R$  induces three simplicial loops $l^j=(x^{j(k)})_{0 \le k \le n}$, $j=0,1,2$,
   (with\footnote{The common length  $n$ of the three loops is given by $n = 2n_{\Sigma} + n_{S^1}$
     where $n_{\Sigma}$ (and $n_{S_1}$, respectively) is the number of those faces appearing in $R  = (F_k)_{k \le \bar{n}}$,      which are parallel to $\Sigma$ (or are parallel to $S^1$, respectively).
     Observe that since $\bar{n} = n_{\Sigma} + n_{S^1}$ we have
     $\bar{n} \le n \le 2 \bar{n}$.}  $n \le 2 \bar{n}$) in $q\cK \times \bZ_N$   in a natural way,
         $l^1$ and $l^2$ being the two boundary loops of $R$ and $l^0$  being the loop ``inside''  $R$.
[Here we consider $R$ as a subset of $\Sigma \times S^1$ in the obvious way.
    Note that the vertices $(x^{j(k)})_{0 \le k \le n}$, $j=0,1,2$,  appearing above
    are just the elements of $R \cap \face_0(q\cK \times \bZ_N)$.
    The ``starting'' points  $x^{j(0)}$, $j=0,1,2$, of the three simplicial loops $l^j=(x^{j(k)})_{0 \le k \le n}$
     are the three elements of     $e \cap \face_0(q\cK \times \bZ_N)$ where $e \in \face_1(\cK \times \bZ_N)$ is
     the edge $e = e_{1 \bar{n}} =  F_1 \cap F_{\bar{n}}$.] \par
     By  $l^j_{\Sigma}$ and $l^j_{S^1}$, $j=0,1,2$, we will denote
the corresponding ``projected'' simplicial loops in $q\cK$ and $\bZ_N$.\par

Let  $A^{\orth} \in \cA^{\orth}(K) \subset \cA^{\orth}(q\cK)$ and $B \in \cB(q\cK)$.
As the simplicial analogue of the continuum expression
$\Hol_{R}(A^{\orth},   B)$ in Eq. \eqref{eq_sec3.3_1}  we will take
  \begin{multline} \label{eq4.21_full_kurz}
\Hol^{disc}_{R}(A^{\orth},   B) :=
  \prod_{k=1}^n \exp\biggl(  \sum_{j=0}^2 w(j) \cdot \bigl( (A^{\orth}(\start l^{j(k)}_{S^1})\bigr)(l^{j(k)}_{\Sigma}) + B(\start l^{j(k)}_{\Sigma}) \cdot dt^{(N)}(l^{j(k)}_{S^1}) \bigr) \biggr)
\end{multline}
where we use again Convention \ref{conv_loop_pic} and
where we have introduced three weight factors
  $$w(0)=1/2, \quad \quad  w(1)=1/4, \quad \quad w(2)=1/4$$

\begin{remark} \label{rm_subsec3.3} \rm
Other natural choices would be
$$(w(0),w(1),w(2)) = (1/3,1/3,1/3) \quad \quad \text{or} \quad \quad (w(0),w(1),w(2)) = (0,1/2,1/2)$$
However, these two choices would not even lead to the correct values for $\WLO^{disc}_{rig}(L)$ in the special
situation of Sec. \ref{sec5}.
\end{remark}


\subsection{Definition of $\Det^{disc}(B)$}
\label{subsec3.4}

 Let us first try the following
 ansatz  for the discrete analogue $\Det^{disc}(B)$ of  the heuristic expression $\Det(B)$ given by Eq.
 \eqref{eq_Det(B)_rewrite} above.   For every $B \in \cB_{reg}(q\cK)$ we set
     \begin{equation}  \label{eq_def_DetFPdisc_0}
\Det^{disc}(B) := \prod_{p=0}^2  \biggl(
  \det\bigl(\bigl(1_{{\ck}}-\exp(\ad(B))_{| {\ck}}\bigr)^{(p)}\bigr)\biggr)^{(-1)^p /2}
  \end{equation}
where
$\bigl(1_{{\ck}}-\exp(\ad(B))_{| {\ck}}\bigr)^{(p)}: C^p(\cK,\ck) \to C^p(\cK,\ck)$
is the linear operator  given by
 \begin{equation}\label{eq_3.21} (\bigl(1_{{\ck}}-\exp(\ad(B))_{| {\ck}}\bigr)^{(p)}(\alpha)\bigr)(X) = \bigl(1_{{\ck}}-\exp(\ad(B(\sigma_X)))_{| {\ck}}\bigr) \cdot \alpha(X) \quad
 \forall \alpha \in C^p(\cK,\ck), X \in  \face_p(\cK)
 \end{equation}
where  $\sigma_X \in \face_0(q\cK)$ is the barycenter of $X$.
Observe that we can rewrite Eq. \eqref{eq_def_DetFPdisc_0} in the following way:
   \begin{equation}
\Det^{disc}(B) = \prod_{p=0}^2  \biggl( \prod_{F \in \face_p(\cK)}
  \det\bigl(\bigl(1_{{\ck}}-\exp(\ad(B(\bar{F}))_{| {\ck}}\bigr)^{1/2}\biggr)^{(-1)^p}
  \end{equation}
  where $\bar{F}$ is the barycenter of $F$.

  \smallskip

It turns out however, that this ansatz would not lead to the correct values for
 $\WLO^{disc}_{rig}(L)$ defined below. This is why  we will modify our original ansatz.
 In order to do so we
  first choose a smooth  function $\det^{1/2}\bigl(1_{\ck}-\exp(\ad(\cdot))_{|\ck}\bigr): \ct \to \bR$
with the property $\forall b \in \ct: \bigl(\det^{1/2}\bigl(1_{\ck}-\exp(\ad(b))_{|\ck}\bigr)\bigr)^2 =
 \det\bigl(1_{\ck}-\exp(\ad(b))_{|\ck}\bigr)$.
 Observe that every such function will necessarily take both positive and negative values.
   Motivated by the formula
 $$ \det\nolimits\bigl(1_{{\ck}}-\exp(\ad({b}))_{|{\ck}}\bigr)
  =  \prod_{{\alpha} \in {\cR}} (1 - e^{2 \pi i \langle \alpha, b \rangle})
    = \prod_{{\alpha} \in {\cR}_+} \bigl( 4 \sin^2( \pi    \langle \alpha, b \rangle ) \bigr)$$
(with $\cR$ and $\cR_+$ as in  Sec. \ref{subsec5.2} below)
we  will make the choice
 \begin{equation} \label{eq_det_in_terms_of_roots}
 \det\nolimits^{1/2}\bigl(1_{{\ck}}-\exp(\ad({b}))_{|{\ck}}\bigr) =  \prod_{{\alpha} \in {\cR_+}} \bigl( 2 \sin( \pi    \langle \alpha, b \rangle ) \bigr)
 \end{equation}
 and then redefine $\Det^{disc}(B)$ for $B \in \cB_{reg}(q\cK)$ by
  \begin{equation} \label{eq_def_DetFPdisc}
\Det^{disc}(B) := \prod_{p=0}^2  \biggl( \prod_{F \in \face_p(\cK)}
  \det\nolimits^{1/2}\bigl(1_{{\ck}}-\exp(\ad(B(\bar{F}))_{| {\ck}})\bigr)\biggr)^{(-1)^p}
  \end{equation}

\begin{remark} \rm In the published versions of \cite{Ha7a,Ha7b}
there is a notational inaccuracy. When we write $\det\bigl(1_{{\ck}}-\exp(\ad({b}))_{|{\ck}}\bigr)^{1/2}$
in \cite{Ha7a,Ha7b} we actually mean  $\det\nolimits^{1/2}\bigl(1_{{\ck}}-\exp(\ad({b}))_{|{\ck}}\bigr)$
given as in Eq. \eqref{eq_det_in_terms_of_roots} above.
\end{remark}

\subsection{Discrete version of $1_{C^{\infty}(\Sigma,\ct_{reg})}(B)$}
\label{subsec3.6}

Let us fix a $s>0$ which is sufficiently small\footnote{\label{ft_sec3.6} $s$ needs to be smaller than the distance
between the two sets $\ct_{sing}$ and $\ct_{reg} \cap \tfrac{1}{k} \Lambda$
where $k$ is as in Sec. \ref{sec2} and $\Lambda \subset \ct$ is the weight lattice,
cf. Sec. \ref{subsec5.2} below}
 and choose $1^{\smooth}_{\ct_{reg}} \in C^{\infty}(\ct,\bR)$
such that
\begin{itemize}
\item $0 \le 1^{\smooth}_{\ct_{reg}} \le 1$
\item  $1^{\smooth}_{\ct_{reg}} = 0$ on a neighborhood of $\ct_{sing}:= \ct \backslash \ct_{reg}  $
\item $1^{\smooth}_{\ct_{reg}} = 1$ outside the $s$-neighborhood of  $\ct_{sing}$
\item $1^{\smooth}_{\ct_{reg}}$ is  invariant under the operation of the affine Weyl group $\cW_{\aff}$
 on $\ct$ (cf. Sec. \ref{subsec5.2} below).
\end{itemize}

\noindent
For fixed  $B \in \cB(q\cK)$ we will now take
 the expression
\begin{equation} \label{eq_sec3.6}
\prod_{x} 1^{\smooth}_{\ct_{reg}}(B(x)):=
\prod_{x \in \face_0(q\cK)}
1^{\smooth}_{\ct_{reg}}(B(x))
\end{equation}
as  the discrete analogue of $1_{C^{\infty}(\Sigma,\ct_{reg})}(B)$.

\subsection{Oscillatory  Gauss-type measures}
 \label{subsec3.7}

i)   An ``oscillatory
 Gauss-type measure'' on  Euclidean vector space $(V, \langle \cdot, \cdot \rangle)$
 is a  complex Borel measure $d\mu$ on $V$
 of the form
 \begin{equation} \label{eq3.1}
 d\mu(x) = \tfrac{1}{Z} e^{ - \tfrac{i}{2} \langle x - m, S (x-m) \rangle} dx
\end{equation}
with $Z \in \bC \backslash \{0\}$,
   $m \in V$, and where  $S$ is a  symmetric endomorphism of $V$  and
 $dx$  the normalized\footnote{i.e. unit hyper-cubes have volume $1$ w.r.t. $dx$}
 Lebesgue measure on $V$.
 Note that $Z$, $m$ and $S$ are uniquely determined by $d\mu$. We will often use
 the notation  $m_{\mu}$ and $S_{\mu}$ in order to refer to  $m$ and $S$.

\begin{itemize}
\item We call $d\mu$ ``centered''iff $m_{\mu}=0$.

\item  We call $d\mu$ ``degenerate'' iff $S_{\mu}$ is not invertible
 \end{itemize}

 \smallskip

 ii)   Let  $d\mu$  be an oscillatory
 Gauss-type measure on a  Euclidean vector space $(V, \langle \cdot, \cdot \rangle)$.
 A (Borel) measurable function
  $f: V \to \bC$ will be called improperly integrable w.r.t. $d\mu$
  iff\footnote{Observe that
$\int_{\ker(S_{\mu})}  e^{- \eps \|x\|^2} dx =
(\tfrac{\eps}{\pi})^{-n/2}$.
In particular, the factor
$(\tfrac{\eps}{\pi})^{n/2} $ in Eq. \eqref{eq3.2} above  ensures
that also for degenerate oscillatory
 Gauss-type measure the improper integrals  $\int\nolimits_{\sim} 1 \ d\mu$  exists}
 \begin{equation}\label{eq3.2} \int\nolimits_{\sim} f d\mu := \int\nolimits_{\sim} f(x)
   d\mu(x): =
    \lim_{\eps \to 0} (\tfrac{\eps}{\pi})^{n/2} \int f(x) e^{- \eps |x|^2} d\mu(x)
  \end{equation}
  exists. Here  we have set  $n:=\dim(\ker(S_{\mu}))$.
   Note that if $d\mu$ is non-degenerate we have $n=0$ so the factor $(\tfrac{\eps}{\pi})^{n/2}$
is then trivial.
\begin{itemize}
\item
 We call $d\mu$ ``normalized'' iff $\int\nolimits_{\sim} 1 d\mu  = 1$.
 \end{itemize}

\subsection{Simplicial versions of the two Gauss-type measures in Eq. \eqref{eq2.48}}
  \label{subsec3.8}

i) As the simplicial analogue of the heuristic complex measure $d\Check{\mu}^{\orth}_B =
\tfrac{1}{\Check{Z}(B)} \exp(i S_{CS}(\Check{A}^{\orth},B)) D\Check{A}^{\orth}$
in Eq. \eqref{eq2.48} we will take  the (rigorous) complex measure
 \begin{equation} \label{eq_def_mu_orth_disc} d\Check{\mu}^{\orth,disc}_{B}(\Check{A}^{\orth}):= \tfrac{1}{\Check{Z}^{disc}(B)} \exp(iS^{disc}_{CS}(\Check{A}^{\orth},B))  D\Check{A}^{\orth}
\end{equation}
on $\Check{\cA}^{\orth}(K)$ where
$D\Check{A}^{\orth}$ denotes the (normalized) Lebesgue measure
on $\Check{\cA}^{\orth}(K)$ and where we have set
 $\Check{Z}^{disc}(B) := \int\nolimits_{\sim} \exp(iS^{disc}_{CS}(\Check{A}^{\orth},B))  D\Check{A}^{\orth}$.
Observe that  $d\Check{\mu}^{\orth,disc}_{B}(\Check{A}^{\orth})$ is not well-defined
for all $B$. However, if $B \in \cB_{reg}(q\cK)$, which is the only case relevant for us (cf. Sec. \ref{subsec3.6} above), Eq. \eqref{eq_SCS_expl_disc} and Remark \ref{rm_Sec4.3} above imply that
 the complex measure in Eq. \eqref{eq_def_mu_orth_disc} is indeed a well-defined,
 non-degenerate,  centered, normalized  oscillatory Gauss type measure on $\Check{\cA}^{\orth}(K)$.

 \medskip

\noindent ii) As the simplicial analogue of the heuristic complex measure
$\exp(i S_{CS}(A^{\orth}_c, B)) (DA^{\orth}_c \otimes DB)$ in Eq.  \eqref{eq2.48} we will take the (rigorous) complex measure on $\cA^{\orth}_c(K) \oplus \cB(\cK)$
\begin{equation} \label{eq_def_mu2_disc}  \exp(i  S^{disc}_{CS}(A^{\orth}_c,B))    (DA^{\orth}_c \otimes  DB)
\end{equation}
where  $DA^{\orth}_c$  denotes the (normalized) Lebesgue measure on $\cA^{\orth}_c(K)$
and  $DB$  the (normalized) Lebesgue measure on $\cB(\cK)$.\par

According to Eq. \eqref{eq_SCS_expl_disc} above, the (rigorous) complex measure in Eq. \eqref{eq_def_mu2_disc}
is a  centered  oscillatory Gauss type measure on $\cA^{\orth}_c(K) \oplus \cB(\cK)$.

\subsection{Definition of $\WLO^{disc}_{rig}(L)$ and $\WLO^{disc}_{norm}(L)$}
 \label{subsec3.9}

A finite tuple $L= (R_1, R_2, \ldots, R_m)$, $m \in \bN$, of closed simplicial ribbons in
$\cK \times \bZ_{N}$ which do not intersect each other
will be called a ``simplicial ribbon link'' in $\cK \times \bZ_{N}$.
 For every such simplicial ribbon link $L= (R_1, R_2, \ldots, R_m)$ in
$\cK \times \bZ_{N}$ equipped with a tuple of ``colors''
$(\rho_1,\rho_2,\ldots,\rho_m)$, $m \in \bN$,
we now introduce  the following
 simplicial   analogue $\WLO^{disc}_{rig}(L)$
of the heuristic expression $\WLO(L)$ in Eq. \eqref{eq2.48}:
\begin{multline} \label{eq_def_WLOdisc}
\WLO^{disc}_{rig}(L)  :=
  \sum_{y \in I}\int\nolimits_{\sim}  \bigl( \prod_{x} 1^{\smooth}_{\ct_{reg}}(B(x))
  \bigr) \Det^{disc}(B)\\
\times \biggl[
\int\nolimits_{\sim}   \prod_{i=1}^m  \Tr_{\rho_i}\bigl( \Hol^{disc}_{R_i}(\Check{A}^{\orth} +  A^{\orth}_c, B)\bigr)   d\Check{\mu}^{\orth,disc}_{B}(\Check{A}^{\orth}) \biggr] \\
 \times        \exp\bigl( - 2 \pi i k  \langle y, B(\sigma_0) \rangle \bigr)
  \exp(i  S^{disc}_{CS}(A^{\orth}_c,B))    (DA^{\orth}_c \otimes  DB)
\end{multline}
{where $\sigma_0$ is an arbitrary fixed point of $\face_0(q\cK)$
which does not lie in   $\bigcup_{i \le m}  \Image(\pi_{\Sigma} \circ R_i)$.
Here we  consider  each $R_i$ as a continuous map $[0,1] \times S^1 \to \Sigma \times S^1$
in the obvious way (cf. Remark 4.3 in Sec. 4.3 in \cite{Ha7a}).

\smallskip

Apart from considering the  simplicial analogue $\WLO^{disc}_{rig}(L)$ of the heuristic expression
$\WLO(L)$ in Eq. \eqref{eq2.48} it will also
be convenient to introduce a  simplicial analogue of the normalized heuristic expression
$$\WLO_{norm}(L):= \frac{\WLO(L)}{\WLO(\emptyset)}$$
where $\emptyset$ is the ``empty link'' in $M = \Sigma \times S^1$.
Accordingly, we will now define
\begin{equation} \label{eq_def_WLO_norm}
\WLO^{disc}_{norm}(L):= \frac{\WLO^{disc}_{rig}(L)}{\WLO^{disc}_{rig}(\emptyset)}
\end{equation}
where $L$ is the colored simplicial ribbon link fixed above
and where $\WLO^{disc}_{rig}(\emptyset)$ is defined in the obvious way, i.e.
by the expression we get from the RHS of Eq. \eqref{eq_def_WLOdisc}
after replacing the product   $\prod_{i=1}^m  \Tr_{\rho_i}\bigl( \Hol^{disc}_{R_i}(\Check{A}^{\orth} +  A^{\orth}_c, B)\bigr)$  by $1$.

\medskip

We conclude this section with four important remarks.
In Remark \ref{rm_sec3.9} we compare the main aims \& results of the present paper
with those in \cite{Ha7a,Ha7b}. In Remark \ref{rm_sec3.9'} we make some comments regarding
the case of general ribbon links $L$.
In Remark \ref{rm_sec3.9b} we describe how the main result of the present paper fits
into the bigger picture of the ``simplicial program'' for Chern-Simons theory (cf. also ``Goal 1''
of the Introduction). Finally, Remark \ref{rm_sec3.9c} we clarify some points related to ``Goal 2''
of the Introduction.

\begin{remark} \label{rm_sec3.9} \rm
 In \cite{Ha7a,Ha7b} a simplicial analogue of $\WLO_{norm}(L)$
which is closely related to the simplicial analogue $\WLO^{disc}_{norm}(L)$  above was
evaluated explicitly for a simple type of simplicial ribbon links $L$, cf. Theorem 6.4 in \cite{Ha7a}.
An analogous result can be obtained for our $\WLO^{disc}_{norm}(L)$ above\footnote{or rather for
  $\WLO^{disc}_{norm}(L)$ after making the modification (M2) described in Sec. \ref{subsec3.10} below}.
More precisely,  it can be shown that
for every  simplicial ribbon link $L= (R_1, R_2, \ldots, R_m)$ in $\cK \times \bZ_{N}$
which   fulfills an analogue of Conditions (NCP)' and (NH)' in \cite{Ha7a}
we have   $\WLO^{disc}_{norm}(L) = |L|/|\emptyset|$
 where $|\cdot|$ is the shadow invariant on  $M= \Sigma \times S^1$ associated to $\cG$ and $k$ as above.
This result is interesting because it shows how major ``building blocks''
of the shadow invariant arise within the torus gauge approach to the CS path integral.
However, from a knot theoretic point of view the class of simplicial ribbon links $L$
fulfilling (the analogue of) Condition (NCP)' in \cite{Ha7a} is not very interesting.
In particular, this class of  (simplicial ribbon) links does not include any non-trivial knots. \par

One of the main aims of the present paper is to show that the torus gauge approach to the CS path integral
also allows the treatment of non-trivial knots, namely a large class of torus (ribbon)
knots in $S^2 \times S^1$, cf. Definition \ref{def5.3} and  Theorem \ref{theorem1} in Sec. \ref{sec5} below.
\end{remark}

\begin{remark} \label{rm_sec3.9'} \rm
The simplicial ribbon knots/links $L= (R_1, R_2, \ldots, R_m)$, $m \in \bN$,
  mentioned in Remark \ref{rm_sec3.9} above  have the special property that
  the projected ribbons $\pi_{\Sigma} \circ R_i$, $i \le m$, in $\Sigma$ have either no (self-)intersections
   (= the situation in \cite{Ha7a,Ha7b})    or only ``longitudinal'' self-intersections (= the situation in Definition \ref{def5.3},  Theorem \ref{theorem1} and Theorem \ref{theorem2} below).
As explained in Sec. 6  in \cite{Ha7b},  if we want to have a chance of
 obtaining the correct values for the rigorous version of $\WLO_{norm}(L)$
  for general simplicial ribbon links $L= (R_1, R_2, \ldots, R_m)$
  (where  the projected ribbons $\pi_{\Sigma} \circ R_i$, $i \le m$,
  are allowed to  have ``transversal'' intersections) we will probably have
  to modify our approach in a suitable way. One way to do so is to  make  what  in Sec. 7 in \cite{Ha7b}
 was called the ``transition to the $BF$-theory setting''.
Alternatively, one can use a ``mixed''  approach where
some of the simplicial spaces are embedded naturally into suitable continuum
spaces\footnote{For example, we can exploit the embeddings
 $C^p(\cK,V) \hookrightarrow C^p(b\cK,V) \overset{W}{\hookrightarrow} \Omega^p(\Sigma^{(2)},V)$
for $p=0,1,2$ and  $V \in \{\cG,\ct\}$, where $W$ is  the Whitney map of the simplicial complex
$b\cK$ and $\Sigma^{(2)}$ is the complement of the 1-skeleton of $b\cK$ in $\Sigma$.
Observe that $b\cK$ induces in a natural way a Riemannian metric on $\Sigma^{(2)}$,
which gives rise to a Hodge star operator $\star:\Omega^p(\Sigma^{(2)},V) \to \Omega^{2-p}(\Sigma^{(2)},V)$},
cf.  \cite{Ha10}. This leads to a greater flexibility
and allows us, for example, to work with (continuum) Hodge star operators and continuum ribbons instead
of the simplicial Hodge star operators  and simplicial ribbons mentioned above.
\end{remark}

\begin{remark} \label{rm_sec3.9b} \rm
 The longterm goal of what in Sec.  \ref{sec1} was called the ``simplicial program'' for Chern-Simons theory
 (cf. Sec. 3 in \cite{Ha7a} and see also \cite{Mn2})
      is to find for every oriented closed 3-manifold $M$   and every colored (ribbon) link $L$ in $M$
      a rigorous simplicial realization $\WLO^{disc}(L)$
      of the original or   gauge-fixed CS path integral for the WLOs associated
      to $L$  such that $\WLO^{disc}(L)$  coincides with the corresponding Reshetikhin-Turaev invariant $RT(M,L)$.

      In the present paper we are much less ambitious. Firstly, we only consider
      the special case $M= \Sigma \times S^1$ (and from Sec. \ref{sec5} on we restrict ourselves
      to the case $\Sigma = S^2$) and secondly, we only deal with a restricted
      class of simplicial ribbon links  $L$ (cf.  Theorem \ref{theorem1} and Theorem \ref{theorem2} below).
 \end{remark}

 \begin{remark} \label{rm_sec3.9c} \rm In view of ``Goal 2''
 in Comment \ref{comm1}  of the Introduction note that  $\WLO^{disc}_{norm}(L)$
can be interpreted  as a (convenient) ``lattice  regularization'' of the heuristic continuum expressions
 $\WLO_{norm}(L)$  above.
 Usually, when one  works with a lattice regularization in Quantum Gauge Field Theory
 one has to perform a suitable continuum limit.
 We can do this here as well\footnote{\label{ft_distinguished} There is, however, a major difference compared to the standard  situation in QFT where the continuum limit is usually independent of the lattice regularization.
 In the case of the  Chern-Simons path integral  (in the torus gauge) the value of the continuum limit
 will depend on the lattice regularization. In particular, only a distinguished subclass of
 lattice regularizations will  lead to the correct result, cf. \cite{Ha10}
 for an interpretation of this phenomenon}.
 So, instead of working with a fixed $\cK$ and
 $\bZ_N$ with fixed $N \in \bN$ let us now consider a sequence $(\cK^{(n)})_{n \in \bN}$ of consecutive refinements
 of $\cK$   and $(\bZ_{N^{(n)}})_{n \in \bN}$ where $N^{(n)}:= n \cdot N$.
 By doing so we can approximate every ``horizontal''\footnote{i.e. a ribbon link in $M  = \Sigma \times S^1$
 which, when considered as a framed link instead of a ribbon link,
 is ``horizontally framed'' in the sense that the framing vector field
 is   ``parallel'' to the $\Sigma$-component of  $M  = \Sigma \times S^1$}
  ribbon  link $L$   in $M  = \Sigma \times S^1$
 by a suitable sequence $(L^{(n)})_{n \in \bN}$  of simplicial ribbon links $L^{(n)}$ in $\cK^{(n)} \times \bZ_{N^{(n)}}$.\par

 Let us now restrict our attention to horizontal ribbon links $L$ in $M  = \Sigma \times S^1$
 which are analogous  to the simplicial ribbon links
appearing in Theorem \ref{theorem1} and Theorem \ref{theorem2} below
and let $(L^{(n)})_{n \in \bN}$ be a suitable approximating sequence as above.
Then we obtain, informally\footnote{\label{ft_distinguished2}and under the assumption that the simplicial framework we use in Sec. \ref{sec3}
indeed belongs to the ``distinguished subclass'' of lattice regularizations mentioned in Footnote \ref{ft_distinguished} above},
\begin{equation} \label{eq_WLO_appr}
\WLO_{norm}(L) = \lim_{n \to \infty} \WLO^{disc}_{norm}(L^{(n)})
\end{equation}
where $\WLO^{disc}_{norm}(L^{(n)})$ is defined in an analogous way
as $\WLO^{disc}_{norm}(L)$ above (with $\cK^{(n)}$  playing the role of
$\cK$ and $\bZ_{N^{(n)}}$ playing the role of $\bZ_{N}$).\par

But from (the proof of) Theorem \ref{theorem1} and Theorem \ref{theorem2}
it follows that  $\WLO^{disc}_{norm}(L^{(n)})$
does not depend on $n$ (provided that $\cK$ was chosen  fine enough and $N$  large enough).
Accordingly, the $n \to \infty$-limit in Eq. \eqref{eq_WLO_appr} is trivial
and we simply obtain
$$\WLO_{norm}(L) =  \WLO^{disc}_{norm}(L^{(1)})$$
So in order to evaluate the heuristic expression $\WLO_{norm}(L)$
(for the special type of continuum ribbon links $L$ we are considering here) it is enough to compute
$\WLO^{disc}_{norm}(L^{(1)})$. And this is exactly what is done in
Theorem \ref{theorem1} and Theorem \ref{theorem2} (with $\cK^{(1)}$ replaced by $\cK$).
\end{remark}

\subsection{Modification of the definition of $\WLO^{disc}_{rig}(L)$ and $\WLO^{disc}_{norm}(L)$}
\label{subsec3.10}

As we will see later the definition of  $\WLO^{disc}_{rig}(L)$ and of $\WLO^{disc}_{norm}(L)$
above need\footnote{I consider this to be a purely technical issue which can probably be resolved by using an alternative  way for making rigorous sense
of the RHS of Eq. \eqref{eq2.48}, cf.  Remark  \ref{rm_subsec3.11} below}
 to be modified slightly if we want to obtain the correct values for $\WLO^{disc}_{norm}(L)$.
Without such a modification a ``wrong'' factor $1_{\ct_{reg}}(B(Z_0))$ will appear at the end of the computations
in Step 4 in Sec. \ref{subsec5.4} below.
 Here are two modifications of the current approach for each of which this extra factor does not appear and one indeed obtains the correct values  for  $\WLO_{norm}^{disc}(L)$:
\begin{enumerate}

\item[(M1)] Instead of working with closed simplicial ribbons in $\cK \times \bZ_N$
  we could work with closed simplicial ribbons in $q\cK \times \bZ_N$.
  In fact, this is exactly what was done in \cite{Ha7a,Ha7b} in the situation studied there.
  The disadvantage of this kind of modification is that the space $\cB(\cK)$ above needs to be replaced by a
  less natural space. Moreover, the proof of the analogue of  Lemma \ref{lem2} in Sec. \ref{subsec5.4}
  will become unnaturally complicated.

 \item[(M2)] We regularize the  RHS of  Eq. \eqref{eq_def_WLOdisc}  in a suitable way. In order to do so we
  first choose a fixed vector $v \in \ct$ which is not orthogonal
     to any of the roots $\alpha \in \cR$. Then we define $B_{displace} \in \cB(q\cK)$ by
            $$ B_{displace}(x) =
             \begin{cases} 0 & \text{ if } x \in \face_0(\cK) \\
                          v & \text{ if } x \in \face_0(q\cK) \backslash \face_0(\cK)
             \end{cases} $$
       and set, for each $\beta >0$ and each $B \in \cB(\cK) \subset \cB(q\cK)$,
             $$B(\beta) = B + \beta B_{displace}$$
          After that we replace $B$ by $B(\beta)$  in each of the  three terms
          $\prod_{x} 1^{\smooth}_{\ct_{reg}}(B(x))  \bigr)$,
          $\Det^{disc}(B)$, and $d\Check{\mu}^{\orth,disc}_{B}(\Check{A}^{\orth})$
         appearing  on the RHS of   Eq. \eqref{eq_def_WLOdisc}.
       Finally, we let\footnote{without letting $s\to 0$ first,
      the $\beta \to 0$ limit has no effect}  $s \to 0$ and later $\beta \to 0$.
     More precisely, we add $\lim_{\beta \searrow 0}  \lim_{s \searrow 0}    \cdots $
     in front of the (modified) RHS of Eq.   \eqref{eq_def_WLOdisc}.
($\WLO^{disc}_{norm}(L)$ is again defined by Eq. \eqref{eq_def_WLO_norm}.)
 \end{enumerate}

 During the proof of Theorem \ref{theorem1} below, which will be given in Sec. \ref{subsec5.4} below
 we will first  work with the original definition of  $\WLO^{disc}_{rig}(L)$ in Sec. \ref{subsec3.9}
 until the end of ``Step 4''. This is instructive because we see how the factor $1_{\ct_{reg}}(B(Z_0))$
 arises. After that we switch to the modified definition of  $\WLO^{disc}_{rig}(L)$
 (using either of the two options (M1) and (M2)  above) and complete the proof.

\begin{remark}  \label{rm_subsec3.11} \rm
The simplicial approach described above for obtaining a  rigorous realization
of $\WLO(L)$ is  simple and  fairly natural and it will be  sufficient
for the goals  of the present paper, cf. ``Goal 1'' and ``Goal 2'' in Comment \ref{comm1}
in the Introduction. \par

That  said I want to emphasize that
even though  the approach above is probably one of the simplest ways
for making rigorous sense of the RHS of Eq. \eqref{eq2.48}
(for the special simplicial ribbon links we are interested in in the present paper)
I do not claim that it is the best way of obtaining such a rigorous realization. It is  likely that
 several improvements are possible and that, in particular,
there is an alternative to modification (M1) or modification (M2) which is more natural.
\end{remark}

 \section{Some useful results on oscillatory Gauss-type measures}
\label{sec4}

In the present section we will review (without proof and in a slightly modified form)
 some of the basic definitions and results in \cite{Ha7b}
 on oscillatory Gauss-type measures.

 \smallskip

In the following let  $(V, \langle \cdot, \cdot \rangle)$ be a Euclidean vector space
and $d\mu$  an  oscillatory  Gauss-type measure on $(V, \langle \cdot, \cdot \rangle)$,
cf. Sec. \ref{subsec3.7} above.

\begin{proposition}  \label{obs1}
  If $d\mu$ is normalized and  non-degenerate then we have for all $v, w \in V$
\begin{equation} \label{eq3.4}
 \int\nolimits_{\sim} \langle v, x \rangle \ d\mu(x)  =   \langle v, m \rangle  , \quad \quad
 \int\nolimits_{\sim} \langle v, x \rangle \langle w, x \rangle \ d\mu(x)
  = \tfrac{1}{i} \langle v, S^{-1}  w \rangle +  \langle v, m \rangle \langle w, m \rangle
\end{equation}
where  $m = m_{\mu}$ and $S = S_{\mu}$.
\end{proposition}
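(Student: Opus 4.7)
The plan is to reduce the statement to a standard computation in the centered, diagonal case and then transfer back. First, I would translate by $m$. Setting $y = x - m$ and using translation invariance of Lebesgue measure, the measure becomes the centered measure $d\mu_0(y) = \tfrac{1}{Z} e^{-\tfrac{i}{2}\langle y, Sy\rangle} dy$. Normalization of $d\mu$ then forces $d\mu_0$ to be normalized too (since the $\eps$-regularization in Eq.~\eqref{eq3.2} is translation-invariant and $n=0$ in the non-degenerate case). Expanding $\langle v, x\rangle = \langle v, y\rangle + \langle v, m\rangle$ and similarly for $w$, the two assertions reduce to the centered identities
\begin{equation*}
\int\nolimits_{\sim} \langle v, y\rangle\, d\mu_0(y) = 0, \qquad \int\nolimits_{\sim} \langle v, y\rangle \langle w, y\rangle\, d\mu_0(y) = \tfrac{1}{i} \langle v, S^{-1} w\rangle,
\end{equation*}
since the cross terms such as $\langle v, m\rangle \int_{\sim}\langle w, y\rangle\, d\mu_0$ vanish by the first identity and normalization eats the constant term.

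Next, I would diagonalize. Since $S$ is symmetric, choose an orthonormal basis $(e_j)$ of $V$ in which $S = \mathrm{diag}(\lambda_1,\ldots,\lambda_d)$ with all $\lambda_j \neq 0$. The measure $d\mu_0$ then factorizes as a product of 1D oscillatory Gauss-type measures, and the two statements reduce to the standard 1D facts
\begin{equation*}
\int\nolimits_{\sim} y\, \tfrac{1}{z_j} e^{-\tfrac{i}{2}\lambda_j y^2}\, dy = 0, \qquad \int\nolimits_{\sim} y^2\, \tfrac{1}{z_j} e^{-\tfrac{i}{2}\lambda_j y^2}\, dy = \tfrac{1}{i \lambda_j},
\end{equation*}
where $z_j$ is the 1D normalization constant. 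The first vanishes by oddness of the integrand (the $\eps$-regulator $e^{-\eps y^2}$ is even, so symmetry survives). The second follows from differentiating $\int_{\sim} e^{-\tfrac{i}{2}\lambda y^2 - \eps y^2} dy$ with respect to $\lambda$, or equivalently by analytic continuation from the positive-real Gaussian case where $\int y^2 e^{-\tfrac{a}{2} y^2} dy / \int e^{-\tfrac{a}{2}y^2} dy = 1/a$, with $a = i\lambda + 2\eps \to i\lambda$.

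Assembling, the quadratic piece gives $\tfrac{1}{i}\sum_j \lambda_j^{-1} v_j w_j = \tfrac{1}{i}\langle v, S^{-1} w\rangle$ by orthonormality. Untranslating yields the stated formula. The only delicate point is justifying that the $\eps \to 0$ limits commute with the product/diagonalization and with the differentiation in $\lambda$; but since $d$ is finite, $S$ is invertible, and every factor is a well-defined 1D improper integral with uniform convergence on compact sets of $\lambda_j$ away from zero, this is routine, so I expect no real obstacle here — the main ``content'' is really just the 1D Fresnel computation.
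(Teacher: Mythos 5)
Your proof is correct, and it is the standard argument one would expect: the paper itself states this proposition without proof (it is quoted from \cite{Ha7b}), so there is no in-text proof to compare against, but translating to the centered case, diagonalizing the symmetric operator $S$, and reducing to the one-dimensional Fresnel moments $\int_{\sim} y\,d\mu_0 = 0$ and $\int_{\sim} y^2\,d\mu_0 = \tfrac{1}{i\lambda_j}$ is exactly the expected route. One small imprecision: the regulator $e^{-\eps|x|^2}$ in Eq.~\eqref{eq3.2} is \emph{not} translation-invariant, so after the substitution $y = x-m$ it becomes $e^{-\eps|y+m|^2}$ and the odd-symmetry argument does not apply verbatim for fixed $\eps$; what saves you is that for non-degenerate $S$ the $\eps\to 0$ limit is insensitive to recentering the regulator (the correction terms are $O(\eps)$ times quantities that stay bounded as $\eps \to 0$). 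This falls under the commutation-of-limits issues you already flag as routine, so it is not a gap, but it deserves to be named explicitly rather than asserted via ``translation invariance of the regularization.''
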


\begin{definition} \label{def3.3}
 By $\cP_{exp}(V)$ we denote the subalgebra
 of $\Map(V,\bC)$
which is generated by the polynomial functions  $f:V \to \bC$
and  all  functions $f:V \to \bC$ of the form
$f = \theta \circ \exp_{\End(\bC^n)} \circ \varphi$, $n \in \bN$, where
 $\theta: \End(\bC^n) \to \bC$ is linear, $\varphi: V \to \End(\bC^n)$ is affine, and
$\exp_{\End(\bC^n)}:\End(\bC^n) \to \End(\bC^n)$ is the exponential map of
the (normed) $\bR$-algebra $\End(\bC^n)$.
\end{definition}

\begin{proposition} \label{prop3.1} For every  $f \in  \cP_{exp}(V)$ the  improper integral  $\int\nolimits_{\sim} f \ d\mu \in \bC$ exists.
\end{proposition}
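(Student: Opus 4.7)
The plan is to use the algebra structure of $\cP_{exp}(V)$ to reduce to an elementary ``building block'' case and then verify existence by standard Fresnel/Gaussian manipulations. First I would show that every $f \in \cP_{exp}(V)$ is a finite $\bC$-linear combination of functions of the form $f_0(x) = P(x)\,\theta(\exp(\varphi(x)))$, with $P$ polynomial, $\varphi: V \to \End(\bC^n)$ affine and $\theta: \End(\bC^n) \to \bC$ linear. Indeed, products of polynomials are polynomial, and a product of two matrix-exponential generators satisfies
\begin{equation*}
\theta_1(\exp\varphi_1(x)) \cdot \theta_2(\exp\varphi_2(x)) = (\theta_1 \otimes \theta_2)\bigl( \exp(\varphi_1(x) \otimes 1 + 1 \otimes \varphi_2(x)) \bigr),
\end{equation*}
since $\varphi_1(x) \otimes 1$ and $1 \otimes \varphi_2(x)$ commute; the right-hand side is again of the prescribed form on $\End(\bC^{n_1 n_2})$. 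By linearity of $\int_{\sim} \cdot\, d\mu$ in the integrand, it therefore suffices to prove the proposition for a single such $f_0$.

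Next I would translate by $m = m_{\mu}$, which leaves the class of building blocks invariant and centers the phase. Decomposing $V = V_0 \oplus V_1$ with $V_0 := \ker S_{\mu}$ and $V_1 := V_0^{\perp}$, writing $y = y_0 + y_1$ correspondingly and using Fubini, the regularized integral becomes
\begin{equation*}
(\eps/\pi)^{n/2} \int_{V_0} e^{-\eps |y_0|^2} \left[ \int_{V_1} f_0(y_0 + y_1)\, e^{- \tfrac{i}{2} \langle y_1, S_1 y_1 \rangle - \eps |y_1|^2}\, dy_1 \right] dy_0,
\end{equation*}
with $S_1 := S_{\mu}|_{V_1}$ invertible and $n = \dim V_0$. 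For each fixed $y_0$ and $\eps > 0$, the inner integral is an absolutely convergent matrix-valued Gaussian: the quadratic form $iS_1 + 2\eps \cdot \mathrm{Id}_{V_1}$ has invertible real part, and after completing the square in $y_1$ (treating the matrix-exponential as a generating function for its moments and splitting off the polynomial part $P$ into Gaussian derivatives) it evaluates to a finite sum of polynomial-times-matrix-exponential terms in $y_0$ whose coefficients depend smoothly on $\eps \ge 0$ up to $\eps = 0$.

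The remaining outer $V_0$-integral has trivial phase and is the expectation under the centered Gaussian probability measure on $V_0$ of variance $(2\eps)^{-1}$. Since the integrand there is again of polynomial-times-matrix-exponential type in $y_0$, this expectation is again computable in closed form by completing the square and gives an explicit analytic expression in $\eps$; the claimed limit $\eps \to 0$ is then just the existence of a one-sided limit of this explicit expression. The main obstacle is the bookkeeping in this last step: one has to track that the $\eps^{-k}$ divergences coming from Gaussian moments in the kernel directions are exactly compensated by the regularization prefactor $(\eps/\pi)^{n/2}$ and by the inverse-covariance factors produced by the matrix-exponential of an $i$-multiple linear form in $y_0$. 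This is precisely where the careful choice of regularization in the definition of $\int_{\sim} \cdot\, d\mu$ pays off and where Proposition \ref{obs1} (and its higher-moment analogues) plays its natural role.
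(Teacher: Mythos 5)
Your reduction of a general $f\in\cP_{exp}(V)$ to a single building block $P(x)\,\theta(\exp(\varphi(x)))$ via the tensor-product identity is correct, and the treatment of the non-degenerate directions is sound in spirit: for $\eps>0$ the $V_1$-integral converges absolutely, and the existence of its $\eps\to 0$ limit can be made rigorous by extending the integrand to an entire function of exponential type on $V_1\otimes\bC$ and rotating each eigen-coordinate of $S_1$ by $e^{\mp i\pi/4}$, after which dominated convergence applies. (One small caveat: when the matrix coefficients of $\varphi$ do not commute, the inner integral is not literally again of ``polynomial times matrix-exponential'' type in $y_0$; what survives, and what you actually need, is only that it is entire of exponential type with bounds uniform in $\eps\in[0,1]$.) Note that the paper itself gives no proof of this proposition --- Sec. \ref{sec4} reviews it from \cite{Ha7b} without proof --- so there is no in-paper argument to compare against.

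The genuine gap is your last step, i.e. the degenerate directions. The claim that the $\eps^{-k}$ divergences from Gaussian moments in $\ker(S_\mu)$ are ``exactly compensated by the regularization prefactor $(\eps/\pi)^{n/2}$'' is false: that prefactor does nothing more than turn $e^{-\eps|y_0|^2}dy_0$ into a probability measure (this is exactly the content of the footnote to Eq. \eqref{eq3.2}, which only guarantees $\int_\sim 1\,d\mu$ exists), and the $2k$-th moment of that probability measure still scales like $\eps^{-k}$. Concretely, take $V=\bR$, $S_\mu=0$, $m_\mu=0$, $Z=1$ and $f(x)=x^2\in\cP_{exp}(V)$: then $(\eps/\pi)^{1/2}\int x^2 e^{-\eps x^2}dx = (2\eps)^{-1}\to\infty$, so the improper integral does not exist. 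The ``inverse-covariance factors'' you invoke only help when the exponential carries a nonzero purely imaginary linear form in the kernel directions, which is not the generic case (polynomials are generators of $\cP_{exp}(V)$). So either the proposition must be read with a non-degeneracy hypothesis on $d\mu$ --- which is consistent with how it is actually used in the paper, namely to feed Proposition \ref{prop3.3} and Step 1 of the proof of Theorem \ref{theorem1}, where $d\check{\mu}^{\orth,disc}_B$ is non-degenerate --- or the degenerate case has to be handled by restricting the integrand (boundedness plus periodicity, as in Remark \ref{rm_last_sec3} and Proposition \ref{prop3.5}), not by the compensation mechanism you describe. Your proof cannot be repaired as written for the statement taken literally; you should either add the non-degeneracy hypothesis or exhibit why your class of integrands avoids polynomial growth along $\ker(S_\mu)$.
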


\begin{proposition} \label{prop3.3} If $d\mu$ is  normalized and  non-degenerate and if
 $(Y_k)_{k \le n}$, $n \in \bN$, is  a sequence of affine maps $V \to \bR$ such that
\begin{equation} \label{eq3.10}  \int\nolimits_{\sim} Y_i Y_j d\mu  = \bigl( \int\nolimits_{\sim} Y_i d\mu  \bigr) \bigl(
\int\nolimits_{\sim} Y_j d\mu \bigr) \quad \forall i,j \le n
\end{equation}
 then  we have for every  $\Phi \in \cP_{exp}(\bR^n)$
\begin{equation} \label{eq3.11} \int\nolimits_{\sim}  \Phi((Y_k)_{k}) d\mu
 =\Phi\bigl(\bigl(  \int\nolimits_{\sim}  Y_k  d\mu \bigr)_{k} \bigr)
\end{equation}

A totally analogous statement holds in the situation where
 instead of the sequence $(Y_k)_{k \le n}$,
$n \in \bN$, we have  a family $(Y^{a}_k)_{k \le n,
a \le d}$ of affine maps fulfilling the obvious analogue of Eq.
\eqref{eq3.10} and where $\Phi \in \cP_{exp}(\bR^{n \times d})$.
\end{proposition}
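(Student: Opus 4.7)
The plan is to first reformulate the hypothesis as a transparent algebraic condition on the linear parts of the $Y_k$, then to dispatch the polynomial case via a Wick-type identity, and finally to extend to all of $\cP_{exp}(\bR^n)$ by expanding the matrix exponentials as power series. I would begin by writing each $Y_k$ as $Y_k(x) = \langle v_k,x\rangle + c_k$ with $v_k \in V$, $c_k \in \bR$, setting $m_k := \int_{\sim} Y_k\, d\mu$, and applying Proposition~\ref{obs1} to compute
\[ \int\nolimits_{\sim} Y_i Y_j\, d\mu = \tfrac{1}{i}\langle v_i, S_\mu^{-1} v_j\rangle + m_i m_j. \]
Hence the hypothesis \eqref{eq3.10} is equivalent to $\langle v_i, S_\mu^{-1} v_j\rangle = 0$ for all $i,j$, i.e.\ the pairwise ``covariances'' of the $Y_k$ under $d\mu$ all vanish.

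For the polynomial case I would invoke the standard Wick/Isserlis identity, which extends to normalized non-degenerate oscillatory Gauss-type measures and yields, for $r \geq 1$,
\[ \int\nolimits_{\sim} \prod_{\ell=1}^{r}(Y_{i_\ell} - m_{i_\ell})\, d\mu = \sum_{\text{pairings}} \prod_{\{\ell,\ell'\}} \tfrac{1}{i}\langle v_{i_\ell}, S_\mu^{-1} v_{i_{\ell'}}\rangle, \]
with the sum over perfect pairings of $\{1,\ldots,r\}$ (empty, hence giving $0$, when $r$ is odd). Under our hypothesis every covariance factor $\langle v_{i_\ell}, S_\mu^{-1} v_{i_{\ell'}}\rangle$ vanishes, so all centered mixed moments of order $\geq 1$ are zero. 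Taylor-expanding a polynomial $\Phi((Y_k)_k)$ around $(m_k)_k$ and integrating termwise then gives $\int_\sim \Phi((Y_k)_k)\, d\mu = \Phi((m_k)_k)$.

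To pass to a general $\Phi \in \cP_{exp}(\bR^n)$ it is enough, by linearity, to handle products of polynomials with generators of the form $f = \theta \circ \exp_{\End(\bC^n)} \circ \varphi$. Writing the affine map $\varphi$ as $\varphi(y) = A_0 + \sum_k y_k A_k$, the absolutely convergent expansion $\exp(\varphi(y)) = \sum_{j \geq 0} \varphi(y)^j / j!$ exhibits $f \circ (Y_k)_k$ (and any product of a polynomial with several such $f$'s) as a series of scalar polynomials in $(Y_k)_k$. Applying the polynomial case termwise and resumming reconstructs $\theta(\exp(\varphi((m_k)_k))) = f((m_k)_k)$, and similarly for the products. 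The two-index version for $(Y_k^a)$ follows by identical reasoning with $(k,a)$ playing the role of a single index.

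The main obstacle is justifying the interchange of the improper integral $\int_\sim$, which is itself defined as an $\eps \to 0$ limit of regularized Lebesgue integrals, with the infinite sum arising from the matrix exponential. This requires a uniform-in-$\eps$ bound on the regularized integrals of the polynomial tails, obtainable from explicit Gauss-type moment estimates combined with the operator-norm bound $\|\varphi(y)^j\| \leq (\|A_0\| + \sum_k |y_k|\, \|A_k\|)^j$. I would import this technical step from \cite{Ha7b}.
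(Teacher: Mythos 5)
The paper states Proposition \ref{prop3.3} without proof, deferring to \cite{Ha7b}, so there is no in-paper argument to compare against; judged on its own, your proposal is correct and is the natural (and almost certainly the intended) route. Reducing hypothesis \eqref{eq3.10} via Proposition \ref{obs1} to the vanishing of all $\tfrac{1}{i}\langle v_i, S_\mu^{-1} v_j\rangle$, killing all centered mixed moments by the Wick pairing formula, and then handling the generators of $\cP_{exp}$ by termwise integration of the matrix-exponential series is exactly what the structure of Definition \ref{def3.3} is set up for. The two points you leave unproved --- the validity of the Wick identity for oscillatory Gauss-type measures and the uniform-in-$\eps$ moment bounds needed to interchange $\int_{\sim}$ with the series --- are genuinely the only technical content; both follow from the regularized covariance $(2\eps+iS_\mu)^{-1}$ having operator norm bounded by $\|S_\mu^{-1}\|$ uniformly in $\eps>0$, so deferring them to \cite{Ha7b} is legitimate.
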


\begin{definition} \label{conv3.1} \rm
Let $f:V \to \bC$ be  a continuous function, let $d:= \dim(V)$,
and let $dx$ be the normalized Lebesgue measure on $V$.
 We set
\begin{equation}
 \int^{\sim}_{V} f(x) dx := \tfrac{1}{\pi^{d/2}} \lim_{\eps \to 0} \eps^{d/2}   \int_{V} e^{-\eps |x|^2} f(x) dx
\end{equation}
whenever  the expression on the RHS  of the previous equation is well-defined.
\end{definition}

\begin{remark}\label{rm_last_sec3}  \rm
Let $\Gamma$ be a lattice in $V$ and  $f:V \to \bC$ a $\Gamma$-periodic continuous function.
Then  $\int^{\sim}_{V} f(x) dx$ exists and we have
 \begin{equation}  \label{eq2_lastrmsec3}
\int^{\sim}_{V} f(x) dx = \frac{1}{vol(Q)} \int_{Q}  f(x) dx
 \end{equation}
 with  $Q :=\{\sum_i x_i e_i \mid 0 \le x_i \le 1 \forall i \le d\}$
 where $(e_i)_{i \le d}$ is  an arbitrary basis of the lattice $\Gamma$
 and where  $vol(Q)$ denotes the volume of $Q$.
 Clearly, Eq. \eqref{eq2_lastrmsec3} implies
 \begin{equation}  \label{eq1_lastrmsec3}  \forall y \in   V: \quad
 \int^{\sim}_{V} f(x) dx =  \int^{\sim}_{V} f(x + y) dx
 \end{equation}
\end{remark}

\begin{proposition} \label{prop3.5}
Assume that $V= V_0 \oplus V_1 \oplus V_2$ where $V_0$, $V_1$, $V_2$ are pairwise
orthogonal subspaces of $V$. (We will denote  the $V_j$-component of $x \in V$ by  $x_j$ in the following.)
Assume also that $d\mu$ is a  (centered) normalized  oscillatory  Gauss-type measure on  $(V, \langle \cdot, \cdot \rangle)$ of the form $d\mu(x)=  \tfrac{1}{Z} \exp(i \langle x_2,M x_1) dx$
for some linear isomorphism $M: V_1 \to V_2$.
Then, for every  $v \in V_2$
and every   bounded uniformly  continuous function $F:V_0 \oplus V_1 \to \bC$
the LHS of the following equation exists  iff the RHS  exists
and in this case we have
\begin{equation} \label{eq3.18}
\int_{\sim} F(x_0 + x_1) \exp(i \langle x_2,v \rangle ) d\mu(x) =  \int^{\sim}_{V_0} F(x_0  - M^{-1} v) dx_0,
\end{equation}
where $dx_0$ is the normalized Lebesgue measure on $V_0$.
\end{proposition}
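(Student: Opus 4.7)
The plan is to compute the left-hand side of Eq.~\eqref{eq3.18} by Fubini, doing the Gaussian integrals over $V_2$ and $V_1$ explicitly and showing that the resulting pre-factors combine with the normalization $Z$ of $d\mu$ to give precisely the $V_0$-integral on the right-hand side. Before starting, I would observe that pairwise orthogonality makes the regulator $e^{-\eps|x|^2}$ factor across $V_0$, $V_1$, $V_2$; that the symmetric operator $S_\mu$ couples only $V_1$ and $V_2$ (with off-diagonal blocks $-M$, $-M^{*}$) and therefore has $\ker S_\mu = V_0$; and that $M$ being an isomorphism forces $d_1 := \dim V_1 = \dim V_2 =: d_2$. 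Consequently the exponent in the regularization prescription \eqref{eq3.2} is $n = \dim V_0 =: d_0$.

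For the main calculation I would first perform the $V_2$-integral via the Fresnel formula $\int_{V_2} e^{i\langle x_2, w\rangle - \eps|x_2|^2}\, dx_2 = (\pi/\eps)^{d_2/2}\, e^{-|w|^2/(4\eps)}$ with $w = v + M x_1$, then make the translation $x_1 = y_1 - M^{-1}v$ so that the concentrating exponential becomes $e^{-|My_1|^2/(4\eps)}$, and finally rescale $y_1 = (4\eps)^{1/2} z_1$. After rescaling, the concentrating factor becomes the fixed Gaussian $e^{-|Mz_1|^2}$, while the auxiliary regulator $e^{-\eps|y_1 - M^{-1}v|^2}$ tends pointwise to $1$. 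Dominated convergence, with majorant $2\|F\|_\infty e^{-|Mz_1|^2}$, then permits passing to the limit in the $V_1$-integrand, which converges uniformly in $x_0$ to $F(x_0 - M^{-1}v)\, e^{-|Mz_1|^2}$; the resulting $z_1$-integral equals $\pi^{d_1/2}/\sqrt{\det(M^{*}M)}$. Tracking all pre-factors and using $d_1 = d_2$, the LHS of Eq.~\eqref{eq3.18} reduces to $(C/Z)\lim_{\eps\to 0}(\eps/\pi)^{d_0/2}\int_{V_0} F(x_0 - M^{-1}v) e^{-\eps|x_0|^2}\, dx_0$, with $C := (4\pi)^{d_1/2}\pi^{d_1/2}/\sqrt{\det(M^{*}M)}$, and both sides of this reduction exist simultaneously.

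The normalization ratio $C/Z$ is then pinned down to $1$ by running the same computation with $F \equiv 1$ and $v = 0$: the $V_0$-integral collapses to $(\eps/\pi)^{d_0/2}(\pi/\eps)^{d_0/2} = 1$ for every $\eps$, so the condition $\int_\sim 1\, d\mu = 1$ forces $Z = C$. In view of Definition~\ref{conv3.1} this yields Eq.~\eqref{eq3.18}. I expect the only delicate step to be the DCT-based interchange of $\eps \to 0$ with the $V_1$-integration after rescaling; the uniform continuity of $F$ bounds the modulus $|F(x_0 + (4\eps)^{1/2}z_1 - M^{-1}v) - F(x_0 - M^{-1}v)|$ uniformly in $x_0$ by a modulus of continuity independent of the $V_0$-regulator, which both closes that step and makes the ``iff'' clause of the proposition transparent.
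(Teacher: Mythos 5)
Your proof is correct and complete. Note that the paper itself does not prove Proposition \ref{prop3.5}: Section \ref{sec4} explicitly states that these results are quoted ``without proof'' from \cite{Ha7b}, so there is no in-paper argument to compare against; what you have written is the natural direct verification. All the essential points are in place: the block structure of $S_\mu$ (off-diagonal blocks $-M$, $-M^{*}$ coupling $V_1$ and $V_2$) correctly gives $\ker S_\mu = V_0$ and hence the exponent $n=\dim V_0$ in the regularization \eqref{eq3.2}; Fubini is legitimate for $\eps>0$ because $F$ is bounded and the regulator is integrable; the Fresnel integral over $V_2$, the shift $x_1 \mapsto x_1 + M^{-1}v$, and the rescaling by $\sqrt{4\eps}$ are all carried out with the right Jacobians, and $d_1=d_2$ makes the $\eps$-powers from $V_1$ and $V_2$ cancel. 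The one genuinely delicate step — that the $z_1$-integral converges to $F(x_0-M^{-1}v)\,\pi^{d_1/2}/\sqrt{\det(M^{*}M)}$ \emph{uniformly in} $x_0$, which is what lets you replace the integrand of the outer $(\eps/\pi)^{d_0/2}\int_{V_0}(\cdots)e^{-\eps|x_0|^2}dx_0$ and deduce the two-sided existence claim (since $(\eps/\pi)^{d_0/2}\int_{V_0}e^{-\eps|x_0|^2}dx_0=1$ absorbs a uniformly small error) — is exactly the point you isolate and justify via the modulus of continuity of $F$ together with the elementary bound $|e^{-a}-1|\le a$ for the auxiliary regulator. Pinning down $Z$ by running the computation with $F\equiv 1$, $v=0$ and invoking the normalization hypothesis is clean and avoids computing $Z$ independently. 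No gaps.
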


\section{Evaluation of $\WLO^{disc}_{rig}(L)$ for torus ribbon knots $L$ in $S^2\times S^1$}
\label{sec5}

From now on we will only consider the special case $\Sigma = S^2$.

\subsection{A certain class of torus (ribbon) knots in $S^2\times S^1$}
\label{subsec5.1}

Recall that a torus knot  in $S^3$ is a knot
which is contained in an unknotted torus $\tilde{\cT}  \subset  S^3$.
Motivated by this definition we will now introduce an analogous notion
for knots in the manifold $M = S^2 \times S^1$.

\begin{definition} \label{def5.1}
A torus knot in $S^2 \times S^1$ of standard type is a knot in $S^2 \times S^1$
which is contained in a torus  $\cT$ in $ S^2 \times S^1$  fulfilling
the following condition
\begin{description}
\item[(T)]  $\cT$ is of the form $\cT = \psi(\cT_0)$ with $\cT_0 := C_0 \times S^1$
 where $C_0$ is an embedded circle in $S^2$ and  $\psi:S^2 \times S^1 \to S^2 \times S^1$ is a  diffeomorphism.
\end{description}
\end{definition}

\begin{remark} \rm \label{rm_sec5.1} Note that every unknotted torus $\tilde{\cT}$ in $S^3$ can be obtained
from a  torus  $\cT$ in $ S^2 \times S^1$ fulfilling condition (T) by performing a suitable
Dehn surgery on a separate  knot in  $ S^2 \times S^1$.
Consequently, every torus knot $\tilde{K}$ in $S^3$ can be obtained from a torus knot $K$ in $S^2 \times S^1$ of standard type by performing such a Dehn surgery.
Moreover, even if we restrict ourselves to the special situation where
$K$ lies in $\cT_0 = C_0 \times S^1$ for $C_0$ as above
we can still obtain all torus knots in $S^3$ up to equivalence by performing a suitable Dehn surgery.
 We will exploit this fact in Sec. \ref{subsec6.2} below.
\end{remark}

Let us now go back to the simplicial setting introduced in Sec. \ref{sec3}.
Recall that in Sec. \ref{sec3} we fixed two polyhedral cell complexes  $\cK$ and $\bZ_N$
and considered also their product $\cK \times \bZ_N$.
The topological space underlying $\cK \times \bZ_N$ is $\Sigma \times S^1 = S^2 \times S^1$.
We want to find a ``simplicial analogue'' of Definition \ref{def5.1} above.
In view of Remark \ref{rm_sec5.1} we will work with the following definition:

\begin{definition} \label{def5.2} Let $l$ be a simplicial loop in $\cK \times \bZ_N$
(which we will consider as a continuous map $S^1 \to S^2 \times S^1$ in the obvious way).
We say that $l$ is a simplicial torus knot of standard type iff  $l: S^1 \to S^2 \times S^1$ is an embedding
and also the following condition is fulfilled:
\begin{description}
\item[(TK)] $\Image(l)$ is contained in  $\cT_0 := C_0 \times S^1$
where $C_0$ is some embedded  circle in $S^2$ which lies on the 1-skeleton of  $\cK$.
\end{description}
By ${\mathbf p}(l)$ and  ${\mathbf q}(l)$ we will denote the
winding numbers of $\pi_i \circ l:S^1 \to S^1$, $i = 1,2$, where
$\pi_1$ and $\pi_2$ are the two canonical projections $\cT_0 = C_0 \times S^1 \cong S^1 \times S^1 \to S^1$
where for the identification $C_0 \cong S^1$ we picked an  orientation on $C_0$.
(Observe that ${\mathbf p}(l)$ and  ${\mathbf q}(l)$ will always be coprime.)
\end{definition}

\begin{definition} \label{def5.3} Let $R$ be a closed simplicial ribbon in $\cK \times \bZ_N$
(which we will consider as a continuous map $S^1 \times [0,1] \to S^2 \times S^1$
in the obvious way). We say that $R$  is a simplicial torus ribbon knot of standard type
iff it is regular (cf. Sec. \ref{subsec3.3} above) and also the following condition is fulfilled:
\begin{description}
\item[(TRK)] Each of the two simplicial loops $l_1$ and $l_2$ on the boundary of $R$  fulfills   condition (TK) above.
\end{description}
The two integers  ${\mathbf p}:= {\mathbf p}(l_1) = {\mathbf p}(l_2)$ and
${\mathbf q}:= {\mathbf q}(l_1) = {\mathbf q}(l_2)$  will be called the winding numbers of $R$.
\end{definition}

\begin{definition} \label{def5.4} Let $R=(F_i)_{i \le n}$ be a closed simplicial ribbon in $\cK \times \bZ_N$.
 We say that $R$  is vertical
 iff $R$ is regular and moreover,  every 2-face $F_i \in \face_2(\cK \times \bZ_N)$ is ``parallel'' to $S^1$.
 In this case the three simplicial loops $l^{j}$ ($j=0,1,2$) in $q\cK \times \bZ_N$,
 associated to $R$ (cf. Sec. \ref{subsec3.3} above) will  be ``parallel'' to $S^1$ as well.
 More precisely,  for each $l^{j}$ the image of the projected simplicial loop  $l^{j}_{\Sigma}$ in $q\cK$
  will simply consist of  a single point  $\sigma^j \in \face_0(q\cK)$.\par

 Observe that every  vertical closed simplicial ribbon is a simplicial torus ribbon knot of standard type with ${\mathbf p}=0$ and ${\mathbf q}=\pm1$.
 If ${\mathbf q}=1$ we say that $R$ has standard orientation.
 \end{definition}

\subsection{Some notation}
\label{subsec5.2}

Recall that in Sec. \ref{sec2}, above we have fixed a scalar product $\langle \cdot, \cdot \rangle$
on $\ct$. Using this scalar product we will now make  the obvious identification $\ct \cong \ct^*$.

\begin{itemize}
\item $\cR \subset \ct^*$ will denote the set of real roots associated to ($\cG, \ct)$

\item $\Check{\cR}$ denotes the set of  real coroots, i.e. $\Check{\cR} := \{\Check{\alpha} \mid \alpha \in \cR\} \subset \ct$
       where $\Check{\alpha}: = \frac{ 2 \alpha}{\langle \alpha,  \alpha \rangle}$.

\item $\Lambda \subset \ct^*$ denotes the real weight lattice associated to $(\cG,\ct)$.

\item $\Gamma  \subset \ct$ will denote the lattice generated by the set of real coroots.

 \item  A Weyl alcove  is a connected component
 of the set

 \smallskip

 $\ct_{\reg} = \exp^{-1}(T_{reg}) = \ct \backslash \bigcup_{\alpha \in \cR, k \in \bZ} H_{\alpha,k}
 \quad \text{ where $H_{\alpha,k}:= \alpha^{-1}(k)$.} $

 \item $\cW$ will denote the Weyl group associated to $(\cG,\ct)$

\item  $\cW_{\aff}$ will denote the affine Weyl group associated to $(\cG,\ct)$, i.e.
  the group of isometries of $\ct \cong \ct^*$
   generated by the orthogonal reflections on the hyperplanes $H_{\alpha,k}$, $\alpha \in \cR$, $k \in \bZ$,
    defined  above.
    Equivalently, one can define $\cW_{\aff}$ as the group of isometries of $\ct \cong \ct^*$ generated by
      $\cW$ and  the translations associated to the coroot lattice $\Gamma$.
 For $\tau \in \cW_{\aff}$ we will denote the sign
  of $\tau$ by  $(-1)^{\tau}$.
\end{itemize}

Recall that in Sec. \ref{sec2} above we fixed $k \in \bN$.
Let us now also fix a Weyl chamber $\CW$.

\begin{itemize}
\item $\cR_+$ denotes the set of positive (real) roots
associated to $(\cG,\ct)$ and $\CW$.

 \item  $\Lambda_+$ denotes the set of dominant (real) weights
 associated to $(\cG,\ct)$ and $\CW$.

\item  $\rho$  denotes the half-sum of the positive (real) roots
\item $\theta$  denotes the unique  long (real) root in  $\overline{\CW}$.
\item We set $ \cg:= 1 + \langle \theta,\rho \rangle$ ($\cg$ is the dual Coxeter number of $\cG$)

\item For $\lambda \in  \Lambda_+$ let $\lambda^* \in
\Lambda_+$ denote the weight conjugated to $\lambda$ and
$\bar{\lambda} \in \Lambda_+ $ the weight conjugated to $\lambda$
``after applying a shift by $\rho$''. More precisely, $\bar{\lambda}$
is given by $\bar{\lambda} + \rho = (\lambda + \rho)^*$.

\item  We set
$ \Lambda_+^k :=  \{ \lambda \in \Lambda_+  \mid  \langle \lambda + \rho ,\theta \rangle < k \}
=  \{ \lambda \in \Lambda_+  \mid  \langle \lambda,\theta \rangle\leq k - \cg\}$.
\end{itemize}

\begin{remark}  \label{rm_shift_in_level} \rm In Sec. \ref{sec1} I mentioned that for a given
oriented closed 3-manifold $M$ the Reshetikhin-Turaev invariants associated to $(M,\cG_{\bC},q)$
are widely believed to be equivalent to  Witten's heuristic path integral expressions
   based on the Chern-Simons action function     associated to $(M,G,k)$  where
 $G$ is the simply connected, compact Lie group corresponding to the compact real form
  $\cG$ of $\cG_{\bC}$   and $k \in \bN$ is chosen suitably.
 It it commonly believed that this relationship between $q$ and $k$ is given by
 $$q = e^{2 \pi i/(k+\cg)}, \quad \quad k \in \bN$$
The appearance of  $k + \cg$ instead of $k$ (i.e. the replacement $k \to k + \cg$)
is the famous ``shift of the level'' $k$.
However,  several  authors have argued (cf., e.g., \cite{GMM2}) that the occurrence
(and magnitude) of such a shift in the level depends on the regularization procedure and
renormalization prescription which is used for making sense of the heuristic path integral.
Accordingly, it should not be surprising that there  are several papers (cf. the references in \cite{GMM2}) where the shift $k \to k + \cg$ is not observed and one is therefore led to
 the following relationship  between $q$ and\footnote{In view of the definition of the set $\Lambda_+^k$
   above it is clear that the situation $k \le \cg$ is not interesting} $k$:
$$q = e^{2 \pi i/k}, \quad \quad k \in \bN \text{ with\ } k > \cg$$
This is also the case in \cite{Ha7a,Ha7b} and the present paper\footnote{by contrast, in \cite{HaHa}
 a shift $k \to k + \cg$ was inserted by hand into several formulas. Accordingly, several definitions
in the present paper differ from the definitions in \cite{HaHa}}.
\end{remark}

Let $C$ and $S$  be the $\Lambda_+^k \times \Lambda_+^k$ matrices with complex entries    given by
\begin{subequations} \label{eq_def_C+S}
\begin{align}
  C_{\lambda \mu} & := \delta_{\lambda \bar{\mu}}, \\
\label{eq_def_S} S_{\lambda \mu} & :={i^{\# \cR _{+}}\over k^{\dim(\ct)/2}}  \frac{1}{|\Lambda/\Gamma|^{1/2}}
\sum_{\tau \in \cW} (-1)^{\tau} e^{- {2\pi i\over k} \langle \lambda + \rho , \tau \cdot
(\mu + \rho) \rangle }
\end{align}
\end{subequations}
for all $\lambda, \mu \in \Lambda_+^k$
where $\# \cR _{+}$ is the number of elements of $\cR _{+}$.
We have
\begin{equation} \label{eq_S2=C} S^2 = C
\end{equation}

It will be convenient to generalize the definition of $S_{\lambda \mu}$ to the situation
of general $\lambda, \mu \in \Lambda$ using again Eq. \eqref{eq_def_S}. \par

 Let $\theta_{\lambda}$ and $d_{\lambda}$ for  $\lambda \in \Lambda$
 be  given by\footnote{\label{ft_warning}For $r \in \bQ$ we will write $\theta_{\lambda}^r$ instead of
$e^{r \cdot\frac{\pi i}{k} \langle \lambda,\lambda+2\rho \rangle}$.
Note that this notation is somewhat dangerous since $\theta_{\lambda_1} = \theta_{\lambda_2}$
does, of course, in general not imply $\theta_{\lambda_1}^r = \theta_{\lambda_2}^r$}
\begin{subequations} \label{eq_def_th+d}
\begin{align}
\label{eq_def_th}
 \theta_{\lambda}  & := e^{\frac{\pi i}{k} \langle \lambda,\lambda+2\rho \rangle}\\
 \label{eq_def_d}
d_{\lambda} & := \frac{S_{\lambda 0}}{S_{00}}
\overset{(*)}{=} \prod_{\alpha \in \cR_+}
\frac{\sin(\frac{\pi}{k} \langle \lambda+\rho,\alpha \rangle) }{\sin(\frac{\pi}{k} \langle \rho,\alpha \rangle)}
\end{align}
\end{subequations}
where step $(*)$ follows, e.g., from part iii) in Theorem 1.7 in Chap. VI in \cite{Br_tD}.  \par

For every $\lambda \in \Lambda_+$ we denote by $\rho_{\lambda}$
the (up to equivalence) unique
irreducible, finite-dimensional, complex representation of $G$
with highest weight $\lambda$.
For every $\mu \in \Lambda$ we will denote by
$m_{\lambda}(\mu)$  the   multiplicity   of $\mu$
as a weight in $\rho_{\lambda}$.
It will be convenient to introduce $  \bar{m}_{\lambda}: \ct \to \bZ$  by
 \begin{equation}\label{eq_mbar_def}
  \bar{m}_{\lambda}(b) =
\begin{cases} m_{\lambda}(b) & \text{ if } b \in \Lambda\\
0 & \text{ otherwise }
\end{cases}
\end{equation}
Instead of  $\bar{m}_{\lambda}$ we will simply write  $m_{\lambda}$ in the following. \par

Finally, let us define $\ast: \cW_{\aff} \times \ct \to \ct$  by
\begin{equation} \label{eq_def_ast}
\tau \ast b = k \bigl( \tau \cdot \tfrac{1}{k} (b+\rho)\bigr) - \rho, \quad \quad
\text{for all $\tau \in \cW_{\aff}$ and $b \in \ct$}
\end{equation}
and set, for all $\lambda \in \Lambda_+$,  $\mu, \nu \in \Lambda$, $\mathbf p \in \bZ \backslash \{0\}$,
and  $\tau \in \cW_{\aff}$
\begin{equation} \label{eq_def_plethysm}
  m^{\mu \nu }_{\lambda, \mathbf p}(\tau)  :=  (-1)^{\tau} m_{\lambda}\bigl(\tfrac{1}{\mathbf p} (\mu - \tau \ast \nu)\bigr) \in \bZ
\end{equation}
and
\begin{equation} \label{eq_def_plethysm_org}
 M^{\mu \nu }_{\lambda, \mathbf p}  :=  \sum_{\tau \in \cW_{\aff}}
  m^{\mu \nu }_{\lambda, \mathbf p}(\tau) \in \bZ
\end{equation}

\subsection{The two main results}
\label{subsec5.3}

From now on we will always assume that $k > \cg$, cf. Remark \ref{rm_shift_in_level} above.

\begin{theorem} \label{theorem1} Let $L=(R_1)$ be a simplicial ribbon link in $\cK \times \bZ_N$
colored with $\rho_1$ where $R_1$ is simplicial torus ribbon knot
of standard type   with winding numbers ${\mathbf p} \in \bZ \backslash \{0\}$ and $\mathbf q \in \bZ$ (cf. (TRK) in Sec. \ref{subsec5.1}).
Assume that $\lambda_1 \in \Lambda_+^k$ where  $\lambda_1$ is the highest weight of $\rho_1$.
 Then  $\WLO^{disc}_{norm}(L)$
is well-defined and we have
\begin{equation} \label{eq_theorem1}
 \WLO^{disc}_{norm}(L)   = S_{00}^2
 \sum_{\eta_1, \eta_2 \in \Lambda_+^k} \sum_{\tau \in \cW_{\aff}} m^{\eta_1\eta_2}_{\lambda_1,\mathbf p}(\tau) \
d_{\eta_1} d_{\eta_2} \ \theta_{\eta_1}^{ \frac{{\mathbf q} }{{\mathbf p}}} \theta_{\tau \ast \eta_2}^{- \frac{{\mathbf q} }{{\mathbf p}}}
\end{equation}
\end{theorem}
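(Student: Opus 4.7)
The plan is to evaluate the nested expression \eqref{eq_def_WLOdisc} (both for $L=(R_1)$ and for $L=\emptyset$) from the inside out, following the natural torus-gauge roadmap: (Step 1) the inner oscillatory Gauss integration over $\Check{\cA}^{\orth}(K)$; (Step 2) the integration over $\cA^{\orth}_c(K)$, which will collapse $B$ onto a constant; (Step 3) the sum over $y \in I$ together with the remaining integral over the constant mode $b\in\ct$, via Poisson-type summation; (Step 4) the switch to modification (M1)/(M2) to dispose of the spurious factor $1_{\ct_{reg}}(B(Z_0))$; (Step 5) a purely Lie-theoretic identification with the right-hand side of \eqref{eq_theorem1}. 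Computing $\WLO^{disc}_{rig}(\emptyset)$ in parallel with $\WLO^{disc}_{rig}(L)$ is what will ultimately produce the prefactor $S_{00}^2$.

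For Step~1, for $B \in \cB_{reg}(q\cK)$ the measure $d\Check{\mu}^{\orth,disc}_B$ is a non-degenerate, centered, normalized oscillatory Gauss-type measure (Sec.~\ref{subsec3.8}), and the ribbon holonomy \eqref{eq4.21_full_kurz} is a product of exponentials whose exponents are \emph{affine} in $\Check{A}^{\orth}$, so $\Tr_{\rho_1}(\Hol^{disc}_{R_1})$ lies in $\cP_{exp}(\Check{\cA}^{\orth}(K))$ and Proposition \ref{prop3.3} is applicable. I would expand the trace via the character identity $\Tr_{\rho_1}(e^X) = \sum_\mu m_{\lambda_1}(\mu) e^{\langle\mu, X\rangle}$ and reduce the inner integral to a Wick contraction governed by the inverse covariance $(\star_K L^{(N)}(B))^{-1}$ on $\Check{\cA}^{\orth}(K)$; by Remark~\ref{rm_Sec4.3}(ii) this covariance is well defined for $B \in \cB_{reg}(q\cK)$. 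Because $R_1$ is regular with winding $(\mathbf{p},\mathbf{q})$, and because of the particular weights $w(0)=\tfrac{1}{2}$, $w(1)=w(2)=\tfrac{1}{4}$ in \eqref{eq4.21_full_kurz} (cf.~Remark~\ref{rm_subsec3.3}), the $\mathbf{p}$-fold wrap around $C_0$ scales the weight insertion to $\mu/\mathbf{p}$, while the $\mathbf{q}$-fold wrap around $S^1$ contributes a linear $b$-term proportional to $\mathbf{q}$ with $b := B(\bar{\sigma})$. The output of Step~1 is, up to a $B$-independent factor, a character-type sum $\sum_\mu m_{\lambda_1}(\mu)\,\theta_{\mu}^{\mathbf{q}/\mathbf{p}}\,e^{2\pi i (\mathbf{q}/\mathbf{p}) \langle \mu+\rho, b\rangle}$ (or a close variant), with the quadratic twist arising from the Gaussian self-contraction.

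For Steps~2--4, the only $A^{\orth}_c$-dependence left sits in $\exp(iS^{disc}_{CS}(A^{\orth}_c,B)) = \exp\bigl(2\pi i k \ll \star_K A^{\orth}_c, d_{q\cK} B\gg\bigr)$ via \eqref{eq_SCS_expl_discc}, so Proposition~\ref{prop3.5} applies with $V_1=\cA^{\orth}_c(K)$ and $V_2$ the image of $\pi\circ d_{q\cK}$ on $\cB(\cK)$. It constrains $(\pi\circ d_{q\cK}) B = 0$, which by \eqref{eq_obs1} forces $B$ to be a constant $b\in \ct$. The expression then takes the form $\sum_{y\in I}\int^{\sim}_{\ct} db\, 1^{\smooth}_{\ct_{reg}}(b)\, \Det^{disc}(b)\, F_{\lambda_1,\mathbf{p},\mathbf{q}}(b)\, e^{-2\pi i k\langle y,b\rangle}$, with $F$ the character sum from Step~1, which is $\Lambda$-periodic in $kb$. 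Poisson summation in $y \in I = \Gamma$ (cf.\ Remark~\ref{rm_last_sec3}) collapses the $b$-integral to a finite sum over $b\in \tfrac{1}{k}\Lambda/\Gamma$, and letting $s\searrow 0$ restricts further to $\ct_{reg}$. At this point the spurious factor $1_{\ct_{reg}}(B(Z_0))$ flagged in Sec.~\ref{subsec3.10} appears, and I switch to modification (M1) or (M2) to eliminate it.

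For Step~5, I would write $\Det^{disc}(b)$ via \eqref{eq_def_DetFPdisc} and \eqref{eq_det_in_terms_of_roots} as a (signed) Weyl-denominator product, partition the finite sum over $\tfrac{1}{k}\Lambda/\Gamma \cap \ct_{reg}$ as an orbit sum under $\cW_{\aff}$ acting by $\ast$ on representatives of $\tfrac{1}{k}(\Lambda_+^k+\rho)$, and identify the resulting anti-symmetrized sums with entries of the modular $S$-matrix \eqref{eq_def_S}. Dividing by $\WLO^{disc}_{rig}(\emptyset)$ (which the same computation evaluates, with $\rho_1$ replaced by the trivial representation, to something proportional to $S_{00}^{-1}$) produces the $S_{00}^2$ prefactor. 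The plethysm coefficient $m^{\eta_1\eta_2}_{\lambda_1,\mathbf{p}}(\tau) = (-1)^{\tau} m_{\lambda_1}\bigl(\tfrac{1}{\mathbf{p}}(\eta_1-\tau\ast\eta_2)\bigr)$ then emerges from the $1/\mathbf{p}$-scaling of the weight $\mu$ in Step~1 (the integrality constraint forces $m_{\lambda_1}$ to vanish unless $(\eta_1-\tau\ast\eta_2)/\mathbf{p}\in\Lambda$) combined with the Weyl sign, and the twists $\theta_{\eta_1}^{\mathbf{q}/\mathbf{p}}\theta_{\tau\ast\eta_2}^{-\mathbf{q}/\mathbf{p}}$ come from the quadratic exponent $\tfrac{\pi i}{k}\langle\mu,\mu+2\rho\rangle$ of Step~1 after the $\rho$-shift (with care about Footnote~\ref{ft_warning} since the exponent is rational). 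The hard part will be Step~1: verifying that the covariance $(\star_K L^{(N)}(B))^{-1}$ on a ribbon of winding $(\mathbf{p},\mathbf{q})$ in $\cK\times\bZ_N$ produces \emph{exactly} the rational exponents $\mathbf{q}/\mathbf{p}$ and $1/\mathbf{p}$ required, with no residual $N$-dependence, and that the specific weights $w(0)=\tfrac{1}{2}$, $w(1)=w(2)=\tfrac{1}{4}$ are essentially forced by this matching (as foreshadowed in Remark~\ref{rm_subsec3.3}).
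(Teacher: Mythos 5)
Your overall roadmap (inner Gaussian integration, then the $\cA^{\orth}_c$--$B$ integration, Poisson summation, the switch to (M1)/(M2), normalization by $\WLO^{disc}_{rig}(\emptyset)$) matches the paper's, but the mechanism you propose for generating the framing factors $\theta_{\eta_1}^{\mathbf q/\mathbf p}\theta_{\tau\ast\eta_2}^{-\mathbf q/\mathbf p}$ is wrong, and this is not a cosmetic point. You attribute the ``quadratic twist'' to a Gaussian self-contraction of $\Check{A}^{\orth}$ along the knot, governed by $(\star_K L^{(N)}(B))^{-1}$. For a simplicial torus ribbon knot of standard type these self-contractions vanish identically: the relevant covariances are $\ll f,(\star_K L^{(N)}(B)_{|\Check{\cA}^{\orth}(K)})^{-1}f'\gg$ with $f,f'$ built from projected edges of the loops, and these are zero because $\star_K\pi(l^{j(k)}_{\Sigma})\neq\pm\pi(l^{j'(k')}_{\Sigma})$ (Eq. \eqref{eq_Ende_Step1}) --- the Hodge dual of an edge lying on $C_0\times S^1$ is transverse to $C_0$. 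This vanishing is precisely what makes Proposition \ref{prop3.3} applicable, and the entire output of Step 1 is just $\Tr_{\rho_1}(\Hol^{disc}_{R_1}(A^{\orth}_c,B))$, i.e. the holonomy evaluated at $\Check{A}^{\orth}=0$, with no $\theta$-factor whatsoever (Eq. \eqref{eq5.1}).

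The second, related error is your claim that the $A^{\orth}_c$-integration ``constrains $(\pi\circ d_{q\cK})B=0$ and forces $B$ to be constant.'' That would be the case only if the integrand were independent of $A^{\orth}_c$; but the holonomy couples to $A^{\orth}_c$ through the $\Sigma$-edges of the ribbon, producing the source $v=2\pi\alpha\,\mathbf p\,(w\mathfrak l)_{\Sigma}\in V_2$, and Proposition \ref{prop3.5} then evaluates $B$ at $b-M^{-1}v=b+\tfrac{1}{k}\alpha\,\mathbf p\,f$, where $f$ is the nonconstant step function of Lemma \ref{lem2_pre} jumping by $\pm 1$ across the projected ribbon. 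It is exactly this shift that makes $B(Z_1)\neq B(Z_2)$, yields two independent labels $\eta_i=kB(Z_i)-\rho$, and converts the \emph{linear} holonomy term $\exp(\pi i\mathbf q\langle\alpha_1,B(Z_1)+B(Z_2)\rangle)$ into the \emph{quadratic} expression $\theta_{\eta_1}^{\mathbf q/\mathbf p}\theta_{\eta_2}^{-\mathbf q/\mathbf p}$ via $\alpha_1=\tfrac{1}{\mathbf p}(\eta_1-\eta_2)$ and $B(Z_1)+B(Z_2)=\tfrac{1}{k}(\eta_1+\eta_2+2\rho)$ (Eq. \eqref{eq_prep_diagonal_argument}). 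In your scheme, with $B$ constant and the self-contraction equal to zero, there is no source left for either the quadratic exponents or the double sum over $\eta_1,\eta_2$ in \eqref{eq_theorem1}, so the argument as outlined cannot close.
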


The following  generalization of Theorem \ref{theorem1} will play a crucial role in Sec. \ref{subsec6.2} below.

\begin{theorem} \label{theorem2}
Let $L=(R_1, R_2)$ be  simplicial ribbon link in $\cK \times \bZ_N$ colored with $(\rho_1,\rho_2)$
where  $R_1$ is a simplicial torus ribbon knot of standard type
 with winding numbers ${\mathbf p} \in \bZ \backslash \{0\}$ and $\mathbf q \in \bZ$
and $R_2$ is vertical with standard orientation.
 Let us assume that $R_1$ winds around $R_2$ in the ``positive direction''\footnote{cf. Sec. \ref{subsec5.5}
 below for a precise definition}
 and  that $\lambda_1, \lambda_2 \in \Lambda_+^k$
 where $\lambda_1$ and  $\lambda_2$ are the highest weights of $\rho_1$ and $\rho_2$.
 Then  $\WLO^{disc}_{norm}(L)$ is well-defined and we have
\begin{equation}  \label{eq_theorem2}
 \WLO^{disc}_{norm}(L)   =  S_{00} \sum_{\eta_1, \eta_2 \in \Lambda_+^k} \sum_{\tau \in \cW_{\aff}} m^{\eta_1\eta_2}_{\lambda_1,\mathbf p}(\tau)  \
d_{\eta_1} S_{\lambda_2 \eta_2} \ \theta_{\eta_1}^{ \frac{{\mathbf q} }{{\mathbf p}}} \theta_{\tau \ast \eta_2}^{- \frac{{\mathbf q} }{{\mathbf p}}}
\end{equation}
\end{theorem}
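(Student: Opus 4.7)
The plan is to follow the five-step strategy used in the proof of Theorem~\ref{theorem1} and to treat the extra factor coming from $R_2$ as an inert $B$-dependent multiplier. The key preliminary observation is that since $R_2$ is vertical, each of the three simplicial loops $l^{0},l^{1},l^{2}$ in $q\cK\times\bZ_N$ associated to $R_2$ projects to a single vertex $\sigma^{j}_2\in\face_0(q\cK)$, so the ``$\Sigma$-edge'' arguments $l^{j(k)}_\Sigma$ appearing in Eq.~\eqref{eq4.21_full_kurz} all vanish. Consequently
\[
\Hol^{disc}_{R_2}(\Check{A}^{\orth}+A^{\orth}_c,B)=\exp\Bigl(\tfrac{1}{2}B(\sigma^0_2)+\tfrac{1}{4}B(\sigma^1_2)+\tfrac{1}{4}B(\sigma^2_2)\Bigr),
\]
which depends only on $B$ and so can be pulled out both of the Gaussian integration over $\Check{A}^{\orth}$ against $d\Check{\mu}^{\orth,disc}_B$ and of the integration over $A^{\orth}_c$.

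Having isolated the $R_2$-factor, I would run Steps~1--3 of the proof of Theorem~\ref{theorem1} verbatim on the $R_1$-part. Step~1 integrates over $\Check{A}^{\orth}$ using Propositions~\ref{obs1} and~\ref{prop3.3}; Step~2 decomposes $\Tr_{\rho_1}$ via the weight-space decomposition to produce the multiplicity factor $m_{\lambda_1}$ and, after combining with the winding numbers $(\mathbf p,\mathbf q)$, the quantity $m^{\eta_1\eta_2}_{\lambda_1,\mathbf p}(\tau)$; and Step~3 integrates out $A^{\orth}_c$ via Proposition~\ref{prop3.5}. By Eq.~\eqref{eq_obs1} this last step localizes $B$ to be constant on each $\cK$-face, so choosing $\cK$ fine enough guarantees that $\sigma^0_2,\sigma^1_2,\sigma^2_2$ all lie in the closure of a single such $\cK$-face (on the side of $C_0$ containing $R_2$), and the $R_2$-factor collapses to $\chi_{\lambda_2}(\exp B_{\rm val})$ for the common value $B_{\rm val}$ of $B$ on that face.

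In Step~4, summation over $y\in I$ together with the integration over the remaining $B$-directions localizes $B$ (modulo $\cW_{\aff}$) to the finite set $\{\tfrac{1}{k}(\eta+\rho):\eta\in\Lambda_+^k\}$; the positive-winding hypothesis on $R_1$ around $R_2$ singles out the side of $C_0$ containing $R_2$ and fixes the value of $B$ there to be $\tfrac{1}{k}(\eta_2+\rho)$. The $R_2$-factor then becomes $\chi_{\lambda_2}\bigl(\exp(\tfrac{2\pi i}{k}(\eta_2+\rho))\bigr)$, whereas in the proof of Theorem~\ref{theorem1} this slot gave the trivial factor $1$. Finally in Step~5, the Weyl character formula together with Eqs.~\eqref{eq_def_S} and \eqref{eq_def_d} yields an identity of the form $d_{\eta_2}\cdot\chi_{\lambda_2}\bigl(\exp(\tfrac{2\pi i}{k}(\eta_2+\rho))\bigr)=S_{\lambda_2\eta_2}/S_{00}$, which converts the factor $S_{00}^2 d_{\eta_1}d_{\eta_2}$ of Theorem~\ref{theorem1} into the factor $S_{00}\, d_{\eta_1}\, S_{\lambda_2\eta_2}$ required by Eq.~\eqref{eq_theorem2}.

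The main obstacle is the orientation/side-selection bookkeeping in Step~4: one must verify that the hypothesis ``$R_1$ winds around $R_2$ in the positive direction'', combined with the weights $(w(0),w(1),w(2))=(1/2,1/4,1/4)$ in Eq.~\eqref{eq4.21_full_kurz}, does force the localized value of $B$ on the $\cK$-component of the $\sigma^j_2$'s to be $\tfrac{1}{k}(\eta_2+\rho)$ (as opposed to $\tfrac{1}{k}(\eta_1+\rho)$ or an affine-Weyl translate of either). A secondary, purely computational concern is checking that the precise constants in the definitions of $S$ and $d$ in Eqs.~\eqref{eq_def_S}--\eqref{eq_def_d} produce the claimed character-formula identity with no residual prefactor.
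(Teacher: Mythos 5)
Your proposal follows essentially the same route as the paper's proof: the vertical ribbon $R_2$ contributes a factor depending only on $B$, which survives Steps 1--3 unchanged, collapses after localization to $\Tr_{\rho_2}(\exp(B(\sigma_2^0)))$ with $B(\sigma_2^0)=\tfrac{1}{k}(\eta_2+\rho)$ on the region selected by the positive-winding hypothesis, and is converted via the Weyl character formula and $d_{\eta_2}=S_{0\eta_2}/S_{00}$ into the factor $S_{\lambda_2\eta_2}/S_{00}$. The only slips are cosmetic: the localized $B$ is constant on each connected component $Z_i$ of the complement of $\pi_\Sigma(R_1)$ (Observation \ref{obs3}) rather than ``on each $\cK$-face'', and one should also record the $I$-periodicity of $b\mapsto\Tr_{\rho_2}(\exp(b))$ and the equivariance $S_{\lambda_2(\tau\ast\eta)}=(-1)^{\tau}S_{\lambda_2\eta}$ needed to fold the $\cW_{\aff}$-sum in Step 5.
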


\begin{remark} \label{rm_theorems} \rm
In the special case where $\mathbf p  = 1$ we have
$\theta_{\tau \ast \eta_2}^{- \frac{{\mathbf q} }{{\mathbf p}}} = \theta_{\tau \ast \eta_2}^{- {\mathbf q}} =
\theta_{\eta_2}^{- {\mathbf q}} $ (cf. Eq. \eqref{eq_theta_inv} below)
and Eq. \eqref{eq_theorem1} can be rewritten as
$$ \WLO^{disc}_{norm}(L)   = S_{00}^2
 \sum_{\eta_1, \eta_2 \in \Lambda_+^k}  M^{\eta_1\eta_2}_{\lambda_1,1}
d_{\eta_1} d_{\eta_2} \ \theta_{\eta_1}^{ {\mathbf q} } \theta_{\eta_2}^{- {\mathbf q} }$$
where  $M^{\mu \nu}_{\lambda,1}$ is as in Eq. \eqref{eq_def_plethysm_org} above.
(A totally analogous remark applies to Eq. \eqref{eq_theorem2}). But
\begin{equation}  \label{eq_rm5.8_1}
M^{\mu \nu}_{\lambda,1 } = \sum_{\tau \in \cW_{\aff}} (-1)^{\tau} m_{\lambda}\bigl(\mu - \tau \ast \nu\bigr) \overset{(*)}{= } N_{\lambda \nu}^{\mu}
\end{equation}
 where $N_{\lambda \nu}^{\mu}$, $\lambda, \mu, \nu \in \Lambda_+^k$,
  are the so-called fusion coefficients, see e.g., \cite{Saw} for the
 definition of $N_{\lambda \nu}^{\mu}$ and for the proof of the equality $(*)$ (called
  the ``quantum Racah formula'' in \cite{Saw}).\par
Eq. \eqref{eq_rm5.8_1} implies that the RHS of both theorems can be rewritten in terms of Turaev's shadow invariant
(or, equivalently, the Reshetikhin-Turaev invariant), cf. Remark \ref{rm_sec3.9} above.
Accordingly, in this case it is clear\footnote{Note, for example, that
if $\mathbf p  = 1$  then the simplicial ribbon link
 $L=(R_1)$  appearing in Theorem \ref{theorem1} will fulfill the analogue of
 Conditions (NCP)' and (NH)' in \cite{Ha7a} mentioned
in  Remark \ref{rm_sec3.9} above.}  that both theorems give the results expected in the literature, i.e. Conjecture \ref{conj0} below is indeed true if $\mathbf p  = 1$.
\end{remark}

 \subsection{Proof of Theorem \ref{theorem1}}
\label{subsec5.4}

Let $L=(R_1)$  where $R_1$ is a simplicial torus ribbon knot
of standard type  in $\cK \times \bZ_N$, colored with $\rho_1$ and with winding numbers ${\mathbf p} \in \bZ \backslash \{0\}$ and $\mathbf q \in \bZ$. (In the following we sometimes write $R$ instead of $R_1$.)
 Let $n \in \bN$  be the length of the three simplicial loops $l^j$, $j =0,1,2$, in $q\cK \times \bZ_N$ associated
to $R=R_1$, cf. Sec. \ref{subsec3.3} above.\par

The symbol $\sim$ will denote equality up to a multiplicative (non-zero) constant
which is allowed to depend on $G$, $k$, $\cK$ and $N$ but not
on the colored ribbon knot considered\footnote{in particular,  it will  depend neither on $\mathbf p$ nor on  $\mathbf q$ nor on $\rho_1$}.\par

Recall that, as mentioned in Sec. \ref{subsec3.10} above, until the end of ``Step 4'' below
we will work with the original definition of  $\WLO^{disc}_{rig}(L)$.
 Then we will explain how Steps 1--4 need to be modified if the new definition is used.
In Step 5--6 we then work with the new definition of $\WLO^{disc}_{rig}(L)$.

\subsubsection*{a) Step 1: Performing the $\int\nolimits_{\sim}  \cdots
 d\Check{\mu}^{\orth,disc}_{B}(\Check{A}^{\orth}) $
integration in Eq. \eqref{eq_def_WLOdisc}}

We will prove below that under the assumptions  on  $L=(R_1)$
 made above we have for every fixed $A^{\orth}_c \in  \cA^{\orth}_c(K)$ and $B \in \cB_{reg}(q\cK)$
\begin{equation}  \label{eq5.1}\int\nolimits_{\sim}    \Tr_{\rho_1}\bigl( \Hol^{disc}_{R_1}(\Check{A}^{\orth} + A^{\orth}_c,   B)\bigr) d\Check{\mu}^{\orth,disc}_{B}(\Check{A}^{\orth}) =   \Tr_{\rho_1}\bigl(
\Hol^{disc}_{R_1}(A^{\orth}_c,   B)\bigr)
\end{equation}
By taking into account that
  $ \prod_{x} 1^{\smooth}_{\ct_{reg}}(B(x))
\neq 0$ for $B \in \cB(\cK) \subset  \cB(q\cK)$ implies $B \in \cB_{reg}(q\cK)$
 we then obtain from  Eq. \eqref{eq5.1} and  Eq. \eqref{eq_def_WLOdisc}
\begin{multline} \label{eq5.14}
\WLO^{disc}_{rig}(L) =  \sum_{y \in I} \int\nolimits_{\sim} \biggl\{
 \bigl( \prod_{x} 1^{\smooth}_{\ct_{reg}}(B(x)) \bigr)
    \Tr_{\rho_1}\bigl(\Hol^{disc}_{R_1}( A^{\orth}_c, B)\bigr) \Det^{disc}(B)  \biggr\}\\
  \times \exp\bigl( -  2 \pi i k   \langle y, B(\sigma_0) \rangle \bigr)
  \exp(i  S^{disc}_{CS}(A^{\orth}_c,B))    (DA^{\orth}_c \otimes  DB)
\end{multline}

{\it Proof of Eq. \eqref{eq5.1}:}  Let  $A^{\orth}_c \in \cA^{\orth}_c(K)$ and   $B \in \cB_{reg}(q\cK)$ be fixed.
 We will prove Eq. \eqref{eq5.1} by applying Proposition \ref{prop3.3}
 to the special situation where
 \begin{itemize}
 \item   $V= \Check{\cA}^{\orth}(K)$ \quad and \quad $d\mu=  d\Check{\mu}^{\orth,disc}_{B}$,

\item  $(Y^{a}_{k})_{k \le n,a \le \dim(\cG)}$  is the
 family   of maps $Y^{a}_{k}:  \Check{\cA}^{\orth}(K) \to \bR$ given by
\begin{multline}  \label{eq5.4}
Y^{a}_k(\Check{A}^{\orth})  =  \bigl\langle e_a,
\sum_{j=0}^2 w(j) \bigl( \Check{A}^{\orth}(\start l^{j(k)}_{S^1})(l^{j(k)}_{\Sigma})
+  A^{\orth}_c(l^{j(k)}_{\Sigma})
+ B(\start l^{j(k)}_{\Sigma})  \cdot dt^{(N)}(l^{j(k)}_{S^1}) \bigr) \bigr\rangle
\end{multline}
for all $\Check{A}^{\orth} \in \Check{\cA}^{\orth}(K)$,

\item $\Phi: \bR^{n \times \dim(\cG)} \to \bC$  is given by
\begin{equation}  \label{eq5.5} \Phi((x^{a}_k)_{k,a}) =    \Tr_{\rho_1}(\prod_{k=1}^n \exp(\sum_{a=1}^{\dim(\cG)} e_a x^{a}_k))
\quad \quad \text{for all $(x^{a}_k)_{k,a} \in \bR^{n \times \dim(\cG)}$}
\end{equation}
Here  $(e_a)_{a \le \dim(\cG)}$ is an arbitrary but fixed
$\langle \cdot, \cdot  \rangle$-orthonormal basis of $\cG$.
\end{itemize}
Note that
$d\Check{\mu}^{\orth,disc}_{B}$ is a  well-defined normalized,  non-degenerate, centered oscillatory Gauss-type measure.    (Since by assumption $B \in \cB_{reg}(q\cK)$ this follows from
   the remarks in Sec. \ref{subsec3.8}).
Moreover, we have
\begin{equation}
 \Tr_{\rho_1}\bigl( \Hol^{disc}_{R_1}(\Check{A}^{\orth} + A^{\orth}_c, B)\bigr)   =   \Tr_{\rho_1}(\prod_{k=1}^n \exp(\sum_{a=1}^{\dim(\cG)} e_a Y^{a}_k(\Check{A}^{\orth}))) \quad \forall \Check{A}^{\orth} \in \Check{\cA}^{\orth}(K)
 \end{equation}
Finally, since   $ d\Check{\mu}^{\orth,disc}_{B}$ is centered
  and  normalized  we have for every  $k \le n$ and $a \le \dim(\cG)$
\begin{equation}  \label{eq5.7}
\int\nolimits_{\sim}  Y^{a}_k \  d\Check{\mu}^{\orth,disc}_{B} = Y^{a}_k(0)
\end{equation}
  Consequently, we obtain
\begin{multline}  \label{eq5.11}  \int\nolimits_{\sim}
   \Tr_{\rho_1}\bigl( \Hol^{disc}_{R_1}(\Check{A}^{\orth} + A^{\orth}_c,
  B)\bigr)   d\Check{\mu}^{\orth,disc}_{B}(\Check{A}^{\orth})\\
= \int\nolimits_{\sim}    \Tr_{\rho_1}(\prod_k \exp(\sum_a e_a Y^{a}_k))
d\Check{\mu}^{\orth,disc}_{B} =  \int\nolimits_{\sim}  \Phi((Y^{a}_k)_{k,a}) \ d\Check{\mu}^{\orth,disc}_{B} \overset{(*)}{=}
 \Phi((\int\nolimits_{\sim} Y^{a}_k \ d\Check{\mu}^{\orth,disc}_{B})_{k,a}) \\
=  \Phi((Y^{a}_k(0))_{k,a}) =   \Tr_{\rho_1}( \prod_k
\exp( \sum_a e_a Y^{a}_k(0))) = \Tr_{\rho_1}\bigl(
\Hol^{disc}_{R_1}(A^{\orth}_c,   B)\bigr)
\end{multline}
where  step $(*)$ follows from Proposition \ref{prop3.3}   above.
The following remarks show that the assumptions of Proposition \ref{prop3.3} are indeed fulfilled.
\begin{enumerate}

\item We have $\Phi \in \cP_{exp}(\bR^{n \times \dim(\cG)})$. In order to see this note  first that
\begin{multline}  \label{eq5.6}
\Phi((x^{a}_k)_{k,a}) =    \Tr_{\rho_1}(\prod_k
\exp(\sum_a e_a x^{a}_k))
  =  \Tr_{\End(V_1)} \bigl( \prod_k  \rho_1(\exp( \sum_a e_a x^{a}_k )) \bigr) \\
 =    \Tr_{\End(V_1)} \bigl( \prod_k \exp_{\End(V_1)}( \sum_a
 (\rho_1)_* (e_a) x^{a}_k ) \bigr)
 \end{multline}
where $V_1$ is the representation space of $\rho_1$,   $\exp_{\End(V_1)}$ is the  exponential map of the associative algebra $\End(V_1)$, and  $(\rho_1)_* : \cG \to \gl(V_1)$  is the Lie algebra representation
 induced by $\rho_1: G \to \GL(V_1)$. Without loss of generality we can assume that
 $V_1 = \bC^d$ where $d=\dim(V_1)$. From Definition \ref{def3.3} it now easily follows that we have indeed $\Phi \in \cP_{exp}(\bR^{n \times \dim(\cG)})$.

\item For all  $k,k' \le n$, $a,a' \le \dim(\cG)$ we have
\begin{equation}  \label{eq5.8} \int\nolimits_{\sim}  Y^{a}_k Y^{a'}_{k'} \  d\Check{\mu}^{\orth,disc}_{B}
 =   \int\nolimits_{\sim}  Y^{a}_k \  d\Check{\mu}^{\orth,disc}_{B}
 \int\nolimits_{\sim} Y^{a'}_{k'} \ d\Check{\mu}^{\orth,disc}_{B}
 \end{equation}
 This follows from Eq. \eqref{eq5.7} above and
\begin{align}  \label{eq5.9}
& \int\nolimits_{\sim}  (Y^{a}_k - Y^{a}_k(0))
 (Y^{a'}_{k'} - Y^{a'}_{k'}(0)) d\Check{\mu}^{\orth,disc}_{B} =
 \int\nolimits_{\sim}  \ll \cdot , f \gg \ll \cdot ,
 f' \gg d\Check{\mu}^{\orth,disc}_{B} \nonumber \\
& \quad \overset{(*)}{\sim}  \quad
\ll f, \bigl(\star_K L^{(N)}(B)_{| \Check{\cA}^{\orth}(K)} \bigr)^{-1}
f' \gg \overset{(**)}{=} 0
\end{align}
where $\ll \cdot, \cdot \gg$ is the scalar product on\footnote{which, according to
Convention \ref{conv_EucSpaces} above, is the the scalar product  induced by
 $\ll  \cdot , \cdot \gg_{\cA^{\orth}(q\cK)}$} $\Check{\cA}^{\orth}(K)$
and, for given $k,k' \le n$, $a,a' \le \dim(\cG)$,
 $f, f' \in \Check{\cA}^{\orth}(K)$ are chosen  such that $Y^{a}_k(\Check{A}^{\orth}) - Y^{a}_k(0)
=   \ll  \Check{A}^{\orth} ,f \gg$
and $Y^{a'}_{k'}(\Check{A}^{\orth}) - Y^{a'}_{k'}(0)
=   \ll  \Check{A}^{\orth} ,f' \gg$ for all  $\Check{A}^{\orth} \in \Check{\cA}^{\orth}(K)$.
Here in step $(*)$  we have used  Proposition \ref{obs1}  (cf. also Remark \ref{rm_Sec4.3}),
and in step $(**)$ we have used that  for all non-trivial
$l^{j(k)}_{\Sigma}$ and $l^{j'(k')}_{\Sigma}$, $k, k' \le n, j, j' \in \{0,1,2\}$,
 appearing on the RHS of Eq. \eqref{eq5.4} we have
\begin{equation} \label{eq_Ende_Step1}
 \star_K \pi(l^{j(k)}_{\Sigma}) \neq \pm \pi(l^{j'(k')}_{\Sigma})
 \end{equation}
where  $\pi: C^1(q\cK,\bR) \to  C^1(K,\bR)$
is the real analogue of the orthogonal projection given in Eq. \eqref{eq_pi_proj}.
Eq. \eqref{eq_Ende_Step1} follows from  Eq. \eqref{eq_Hodge_star_concrete} in Sec. \ref{subsec3.2} above
and from our  assumption that  $R_1$ is a simplicial torus ribbon knot of standard type  in $\cK \times \bZ_N$.
\end{enumerate}

\subsubsection*{b) Step 2: Performing the $\int\nolimits_{\sim}  \cdots
 \exp(i  S^{disc}_{CS}(A^{\orth}_c,B))    (DA^{\orth}_c \otimes  DB)$-integration in \eqref{eq5.14}}

Note that the  remaining fields $A^{\orth}_c$ and  $B$  in Eq.  \eqref{eq5.14}
take values in the Abelian Lie algebra  $\ct$.
For  fixed  $A^{\orth}_c$ and $B$ we can therefore rewrite  $\Hol^{disc}_{R_1}(A^{\orth}_c,B)$  as
\begin{equation}  \label{eq5.16} \Hol^{disc}_{R_1}( A^{\orth}_c, B)
  = \exp\bigl(  \Psi(B) +   \sum_{k=1}^n  \sum_{j=0}^2 w(j) A^{\orth}_c(l^{j(k)}_{\Sigma})   \bigr)
\end{equation}
where we  have set
\begin{equation} \label{eq5.17} \Psi(B)  :=  \sum_k  \sum_{j=0}^2 w(j)
 B(\start l^{j(k)}_{\Sigma})  dt^{(N)}(l^{j(k)}_{S^1}) \quad \in \ct
\end{equation}

\begin{observation} \label{obs_red_loops} \rm From the assumption that $R = R_1$ is a simplicial torus ribbon knot in $\cK \times \bZ_N$
of standard type with first winding number ${\mathbf p} \neq 0$
 it follows that
 for each $j=0,1,2$ there is a simplicial
loop ${\mathfrak l}^j_{\Sigma} = ({\mathbf x}^{j(k)}_{\Sigma})_{0 \le k \le \mathbf n}$, $\mathbf n \in \bN$,
 in $q\cK$ which, considered as a continuous map  $S^1 \to S^2$, is an embedding and fulfills
 (cf. Convention \ref{conv_loop_pic})
$$ \sum_{k=1}^n  l^{j(k)}_{\Sigma} =  {\mathbf p}  \sum_{k=1}^{\mathbf n}  {\mathfrak l}^{j(k)}_{\Sigma} $$
\end{observation}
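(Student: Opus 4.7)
The plan is to exhibit, for each $j \in \{0,1,2\}$, an embedded circle $\tilde C_j$ sitting inside the $1$-skeleton of $q\cK$ such that $l^j_\Sigma$ traces $\tilde C_j$ exactly $\mathbf p$ times (with cancellations) and then take $\mathfrak l^j_\Sigma$ to be a once-around simplicial parametrisation of $\tilde C_j$.

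Step 1 (identification of $\tilde C_j$). For $j=1,2$, the simplicial loop $l^j$ refines a boundary loop $l_j$ of $R$, so by condition (TK) the image of $l^j$ lies in a torus $C_j\times S^1$ where $C_j\subset S^2$ is an embedded circle on the $1$-skeleton of $\cK$. Since $\cK$'s $1$-skeleton is contained in that of $q\cK$ (each edge of $\cK$ is a union of two edges of $q\cK$ meeting at its barycentre), $C_j$ is automatically an embedded circle on the $1$-skeleton of $q\cK$; set $\tilde C_j := C_j$. For $j=0$, a direct inspection of the combinatorics of a regular closed simplicial ribbon $R=(F_k)_{k\le\bar n}$ shows that the shared edges $e_{k,k+1}=F_k\cap F_{k+1}$ are parallel to $\Sigma$, and the central loop $l^0$ passes through their midpoints, which are vertices of $b\cK\times\bZ_N$ lying in $\face_0(q\cK\times\bZ_N)$. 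The sequence of their $\Sigma$-projections forms a simplicial loop $\tilde C_0$ in $q\cK$ that is parallel to (and sandwiched between) $C_1$ and $C_2$; it is embedded in $S^2$ because $l^0$ itself is embedded in $\tilde C_0\times S^1$ and its $\pi_\Sigma$-image has winding number $\mathbf p\neq 0$, which forces the vertices $\mathbf x^{0(k)}_\Sigma$ to visit each point of $\tilde C_0$ the same number of times.

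Step 2 (degree count and cancellation). The $\Sigma$-projection $l^j_\Sigma$ takes values in $\tilde C_j$ and its generalised edges $l^{j(k)}_\Sigma$ are either zero (vertical in $q\cK\times\bZ_N$) or $\pm$ an edge of $q\cK$ lying on $\tilde C_j$. The winding number of $\pi_1\circ l^j\colon S^1\to C_j\cong S^1$ equals $\mathbf p(l)=\mathbf p$. Orienting $\tilde C_j$ via the orientation of $C_0$ fixed in Definition \ref{def5.2}, the $1$-chain
\begin{equation*}
\sum_{k=1}^n l^{j(k)}_\Sigma \;\in\; C_1(q\cK)
\end{equation*}
represents the class $\mathbf p\cdot[\tilde C_j]\in H_1(\tilde C_j)\cong\bZ$. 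Because every edge of $\tilde C_j$ is traversed with the same algebraic count (any backtrack would destroy injectivity of $l^j$ on $S^1$ in the $S^1$-direction once the monotone winding $\mathbf p$ has been accounted for), this identity holds at the level of chains (not merely homology classes): the sum equals $\mathbf p$ times the fundamental cycle of $\tilde C_j$.

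Step 3 (definition of $\mathfrak l^j_\Sigma$ and conclusion). Let $\mathbf n$ be the number of edges of $q\cK$ lying on $\tilde C_j$ and define $\mathfrak l^j_\Sigma=({\mathbf x}^{j(k)}_\Sigma)_{0\le k\le\mathbf n}$ to be the simplicial loop in $q\cK$ that starts at the $\Sigma$-projection of the base point $x^{j(0)}$ of $l^j$ and traverses $\tilde C_j$ once in the chosen orientation. Injectivity on $[0,1)$ together with continuity on the compact Hausdorff space $S^1$ make $\mathfrak l^j_\Sigma\colon S^1\to S^2$ an embedding. By construction
\begin{equation*}
\sum_{k=1}^{\mathbf n}\mathfrak l^{j(k)}_\Sigma \;=\; [\tilde C_j],
\end{equation*}
so combining with Step 2 gives the required equality $\sum_{k=1}^n l^{j(k)}_\Sigma=\mathbf p\sum_{k=1}^{\mathbf n}\mathfrak l^{j(k)}_\Sigma$ in $C_1(q\cK)$.

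The main obstacle is the case $j=0$: condition (TK) is imposed only on the boundary loops, so one must argue carefully that the midpoints of the shared edges of $R$ actually form an embedded polygon on the $1$-skeleton of $q\cK$ rather than something that self-crosses. This requires using both the regularity of $R$ (so the shared edges are all parallel to $\Sigma$) and the embeddedness of the entire ribbon (so consecutive midpoints are connected by a unique $q\cK$-edge and the resulting polygon is simple).
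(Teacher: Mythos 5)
The paper itself offers no proof of this Observation --- it is simply asserted as a consequence of the definitions --- so there is no argument of the author's to compare yours against; I can only assess your proposal on its own terms. Your overall strategy (identify an embedded circle $\tilde C_j$ in the $1$-skeleton of $q\cK$ carrying the support of $\sum_k l^{j(k)}_\Sigma$, show that this chain is $\mathbf p$ times the fundamental cycle of $\tilde C_j$, and let ${\mathfrak l}^j_\Sigma$ traverse $\tilde C_j$ once) is the natural one and reaches the correct conclusion.

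There is, however, a flaw in the justification of the crucial chain-level identity in your Step 2. You argue that ``any backtrack would destroy injectivity of $l^j$'', but this is false: a backtrack of the projection $l^j_\Sigma$ in $\Sigma$ is perfectly compatible with $l^j$ being an embedding in $\Sigma\times S^1$, since the loop can advance in the $S^1$-direction while its $\Sigma$-projection retraces an edge. Fortunately the conclusion does not require the absence of backtracks, and the correct argument is simpler: since $l^j_\Sigma$ is a closed simplicial loop, $\sum_{k=1}^n l^{j(k)}_\Sigma$ is a $1$-cycle (its boundary telescopes to $x^{j(n)}_\Sigma-x^{j(0)}_\Sigma=0$); a $1$-cycle supported on the edge set of an embedded circle subcomplex must assign the same coefficient to every edge, because at each vertex of the circle exactly two supporting edges meet and the cycle condition there forces their coefficients to agree; and this common coefficient equals the degree of the projection, namely $\mathbf p$. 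Backtracks simply cancel in the chain. Two further points: your embeddedness argument for $\tilde C_0$ is circular as written (you invoke ``$l^0$ embedded in $\tilde C_0\times S^1$'' before $\tilde C_0$ is known to be a circle); the clean route is the combinatorial one implicit in the paper's proof of Lemma \ref{lem2_pre}, namely that the projected ribbon tiles the closed annulus between $\arc({\mathfrak l}^1_\Sigma)$ and $\arc({\mathfrak l}^2_\Sigma)$ by a cyclic sequence of $2$-faces of $\cK$, each with one edge on each boundary circle, so that the midline through the barycenters of these faces and of the crossing edges is an embedded circle in the $1$-skeleton of $q\cK$. Finally, the statement uses a single length $\mathbf n$ for all three loops ${\mathfrak l}^j_\Sigma$, so the shorter ones should be padded with stationary steps (zero generalized edges), which does not affect the associated chains.
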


Since  $\rho_1 = \rho_{\lambda_1}$ it follows from the definitions
in Sec. \ref{subsec5.2} above that
\begin{equation} \label{eq5.18}
\Tr_{\rho_1}(\exp(b))  = \sum_{\alpha \in
\Lambda} m_{\lambda_1}(\alpha) e^{2 \pi i \langle \alpha, b \rangle} \quad \forall b \in \ct
\end{equation}
 Combining Eqs. \eqref{eq5.16} -- \eqref{eq5.18}  with Observation \ref{obs_red_loops} we obtain
\begin{align} \label{eq5.19}
&  \Tr_{\rho_1}\bigl( \Hol^{disc}_{R_1}( A^{\orth}_c,   B) \bigr) \nonumber \\
 & =   \sum_{\alpha \in \Lambda}  m_{\lambda_1}(\alpha)
\bigl(  \exp(2 \pi i \langle \alpha, \Psi(B) \rangle )  \bigr)
 \exp\bigl(  2 \pi i  \ll  A^{\orth}_c ,   \alpha  \, {\mathbf p} \, (w{\mathfrak l})_{\Sigma} \gg_{\cA^{\orth}(q\cK)}  \bigr)
\end{align}
where  we have set
\begin{equation} \label{eq_mathfrak_l_def}
 (w{\mathfrak l})_{\Sigma} :=  \sum_{k=1}^{\mathbf n}  \sum_{j=0}^2 w(j)  {\mathfrak l}^{j(k)}_{\Sigma}    \in C_1(q\cK)
 \end{equation}

Let us now introduce  for each     $y \in I$,
$\alpha \in \Lambda$ and $B \in \cB(\cK) \subset \cB(q\cK)$:
\begin{multline}  \label{eq5.22}
F_{\alpha,y}(B)
 :=  \bigl( \prod_{x}
1^{\smooth}_{\ct_{reg}}(B(x)) \bigr)
 \bigl(  \exp(2 \pi i \langle \alpha, \Psi(B) \rangle )  \bigr) \Det^{disc}(B)  \exp\bigl( - 2 \pi i k    \langle y, B(\sigma_0)  \rangle \bigr)
 \end{multline}
After doing so we can rewrite  Eq. \eqref{eq5.14} as
\begin{multline} \label{eq5.21} \WLO^{disc}_{rig}(L)
=  \sum_{\alpha \in \Lambda}   m_{\lambda_1}(\alpha)
 \sum_{y \in I}\\
 \times \int_{\sim} F_{\alpha,y}(B)
  \exp\bigl(  2 \pi i  \ll  A^{\orth}_c ,   \alpha {\mathbf p} (w{\mathfrak l})_{\Sigma}  \gg_{\cA^{\orth}(q\cK)}  \bigr) \exp(i  S^{disc}_{CS}(A^{\orth}_c,B))    (DA^{\orth}_c \otimes  DB)
\end{multline}

For each fixed  $y \in I$ and  $\alpha \in \Lambda$
we will now  evaluate the corresponding integral in Eq. \eqref{eq5.21} by
applying Proposition \ref{prop3.5} above in the special situation where
\begin{itemize}
\item $V  = \cA^{\orth}_c(K) \oplus  \cB(\cK)$. For $V$ we use the decomposition
   $V = V_0 \oplus V_1 \oplus V_2$ given by
\begin{align*}
V_0 & := \cB_c(q\cK) \oplus (\Image(\star_K \circ \pi \circ d_{q\cK}) )^{\orth}\\
V_1 & := (\ker(\star_K \circ
\pi \circ d_{q\cK}))^{\orth}  = (\ker(\pi \circ d_{q\cK}))^{\orth} \overset{(*)}{=} (\cB_c(q\cK))^{\orth} \subset \cB(\cK)\\
V_2 & := \Image(\star_K \circ \pi \circ d_{q\cK}) \subset  \cA^{\orth}_c(K)
\end{align*}
where   $\pi:\cA_{\Sigma,\ct}(q\cK) \to  \cA_{\Sigma,\ct}(K) \cong \cA^{\orth}_c(K)$ is the orthogonal
 projection of Eq. \eqref{eq_pi_proj},  $d_{q\cK}$ is a short notation for  $(d_{q\cK})_{| \cB(\cK)}$,
 and $(\cdot)^{\orth}$ denotes the orthogonal complement in $\cA^{\orth}_c(K)$ and $\cB(\cK)$, respectively.
Note that step $(*)$ follows from  Eq. \eqref{eq_obs1}.

\item $d\mu  = d\nu^{disc} := \tfrac{1}{Z^{disc}} \exp(iS^{disc}_{CS}(A^{\orth}_c,B))
(DA^{\orth}_c  \otimes DB)$ where
$$Z^{disc}:= \int_{\sim} \exp(i  S^{disc}_{CS}(A^{\orth}_c,B))    (DA^{\orth}_c \otimes  DB)$$

\item  $F = F_{\alpha,y} \circ p$
where $p: V_0 \oplus V_1 =  \cB(\cK) \oplus (V_2)^{\orth} \to \cB(\cK)$
is the canonical projection.

\item $v =  2 \pi    \alpha   {\mathbf p} (w{\mathfrak l})_{\Sigma}$
\end{itemize}

\noindent Before we continue we need to verify  that  the assumptions of Proposition \ref{prop3.5}  above are indeed fulfilled.

\begin{enumerate}
\item[1.]      $d\nu^{disc}$ is  a normalized, centered oscillatory Gauss type measure on  $\cA^{\orth}_c(K) \oplus \cB(q\cK)$. In order to see this we rewrite $d\nu^{disc}$  as\footnote{recall Eq. \eqref{eq_SCS_expl_discc}
 and observe that  $ \ll  \star_K A^{\orth}_c,  d_{q\cK}  B \gg_{\cA^{\orth}(q\cK)}
  =  \ll  \star_K A^{\orth}_c,  \pi( d_{q\cK}  B) \gg_{\cA^{\orth}(q\cK)}
 = - \ll   A^{\orth}_c, \star_K ( \pi( d_{q\cK}  B)) \gg_{\cA^{\orth}(q\cK)}$}
 $$ d\nu^{disc}   = \tfrac{1}{Z^{disc}} \exp(  i
 \ll  A^{\orth}_c,  -   2 \pi  k (\star_K \circ \pi \circ d_{q\cK}) B \gg_{\cA^{\orth}(q\cK)}  ) (DA^{\orth}_c \otimes DB)$$
 Accordingly,  $d\nu^{disc}$ has the form as in Proposition \ref{prop3.5}
    with $V_0$, $V_1$, and $V_2$ as above and where
     $M: V_1 \to V_2$  is the well-defined   linear isomorphism given by
   \begin{equation} \label{eq_def_M} M  =  -   2 \pi  k (\star_K \circ \pi \circ d_{q\cK})_{| V_1}
   \end{equation}

\item[2.]  $F = F_{\alpha,y} \circ p$ is a bounded and  uniformly continuous function.

 \item[3.] $v$ is an element of $V_2$. In order to prove this
  we introduce a linear map
  $$m_{\bR}:  C^0(q\cK,\bR) \to  C^1(K,\bR)$$
   by  $m_{\bR} := \star_K \circ \pi \circ  d_{q\cK}$
  where $\pi: C^1(q\cK,\bR) \to C^1(K,\bR)$, $d_{q\cK}:C^0(q\cK,\bR) \to  C^1(q\cK,\bR)$,
    and  $\star_{K}:C^1(K,\bR) \to  C^1(K,\bR)$
 are the ``real analogues'' of the three maps appearing on the RHS of Eq. \eqref{eq_def_M} above.
    From Lemma \ref{lem2_pre} in Step 3 below it follows that
 there is a unique\footnote{here $C^0(\cK,\bR)$ is embedded into $C^0(q\cK,\bR)$
 in the same way as $\cB(\cK)$ is embedded into $\cB(q\cK)$, cf. Sec. \ref{subsec3.0}}
  $f \in C^0(\cK,\bR) \subset C^0(q\cK,\bR)$
 such that
\begin{equation} \label{eq5.28}
  (w{\mathfrak l})_{\Sigma} =   m_{\bR} \cdot f
 \end{equation}
 and also
 \begin{equation} \label{eq5.28''} \sum_{x \in \face_0(q\cK)} f(x) = 0
\end{equation}
  That $v \in V_2$ now follows from
\begin{equation} \label{eq_def_v} v =  2 \pi    \alpha  {\mathbf p} (w{\mathfrak l})_{\Sigma} =
   2 \pi    \alpha  {\mathbf p} ( m_{\bR} \cdot f)   = (\star_K \circ \pi \circ d_{q\cK}) \cdot \bigl( 2 \pi    \alpha {\mathbf p} f\bigr)
   \end{equation}
\end{enumerate}

\noindent
By applying Proposition \ref{prop3.5}  we now obtain
\begin{align} \label{eq5.33} & \tfrac{1}{ Z^{disc}} \int_{\sim} F_{\alpha,y}(B)
  \exp\bigl(  2 \pi i \ll  A^{\orth}_c ,   \alpha   {\mathbf p} (w{\mathfrak l})_{\Sigma}  \gg_{\cA^{\orth}(q\cK)}  \bigr) \exp(iS^{disc}_{CS}(A^{\orth}_c,B)) (DA^{\orth}_c  \otimes DB)   \nonumber \\
  & = \int_{\sim} F(B)
  \exp\bigl(   i \ll  A^{\orth}_c , v  \gg_{\cA^{\orth}(q\cK)}  \bigr)
    d\nu^{disc}(A^{\orth}_c,B)   \nonumber \\
 & \quad \quad \sim     \int^{\sim}_{V_0}
  F( x_0 - M^{-1} v)  dx_0  \overset{(*)}{\sim}
   \int^{\sim}_{\ct}    F_{\alpha,y}(b - M^{-1} v) db
\end{align}
Here $\int^{\sim} \cdots dx_0$ and
$\int^{\sim} \cdots db$ are improper integrals
 defined  according to Definition \ref{conv3.1}  above.
(Remark \ref{rm_last_sec3} above  and a ``periodicity argument'', which will be given in  Step 4 below,
imply that these improper integrals  are well-defined.)
Step $(*)$ above follows because $F( x_0 - M^{-1} v)$ depends only on the $\cB_{c}(q\cK)$-component of
$x_0 \in V_0 =  \cB_{c}(q\cK) \oplus (V_2)^{\orth} \cong \ct  \oplus (V_2)^{\orth}$.

\smallskip

According to Eq. \eqref{eq5.28''} we have $\alpha  f  \in V_1 = (\cB_c(q\cK))^{\orth}$, so Eq. \eqref{eq_def_v}  implies that
\begin{equation} \label{eq5.34} M^{-1} v =    - \tfrac{1}{k}  \alpha  {\mathbf p} f
\end{equation}
\noindent  Combining  Eqs. \eqref{eq5.22}, \eqref{eq5.21}, \eqref{eq5.33}, and  \eqref{eq5.34}
 we obtain
\begin{multline} \label{eq5.35}
\WLO^{disc}_{rig}(L) \sim  \sum_{\alpha \in \Lambda}   m_{\lambda_1}(\alpha)
 \sum_{y \in I} \\
 \times      \int^{\sim}_{\ct} db \ \biggl[ \exp\bigl( - 2 \pi i k    \langle y, B(\sigma_0)  \rangle \bigr)  \bigl( \prod_{x}
1^{\smooth}_{\ct_{reg}}(B(x)) \bigr) \\
\times  \bigl(  \exp(2 \pi i \langle \alpha, \Psi(B) \rangle )  \bigr) \Det^{disc}(B)
   \biggr]_{| B  = b + \tfrac{1}{k}     \alpha   {\mathbf p} f }
    \end{multline}

 It is not difficult to see that
Eq. \eqref{eq5.35} also holds   if we redefine $f$
using  the  normalization condition
 \begin{equation} \label{eq5.28'}  f(\sigma_0)=0
\end{equation}
instead of the normalization condition \eqref{eq5.28''} above\footnote{this follows from Eq. \eqref{eq1_lastrmsec3}
 in Remark \ref{rm_last_sec3} above
 and the periodicity properties of the integrand in
 $\int^{\sim}_{\ct} \cdots db$ (for fixed $y$ and $\alpha$),
 cf.  Step 4 below}.

\subsubsection*{c) Step 3: Rewriting Eq. \eqref{eq5.35}}

\begin{lemma} \label{lem2_pre}
Assume that $ (w{\mathfrak l})_{\Sigma}  \in C_1(q\cK) \cong C^1(q\cK,\bR)$ is as in Eq. \eqref{eq_mathfrak_l_def} above. Then  there is a $f \in C^0(\cK,\bR) \subset  C^0(q\cK,\bR)$, unique up to an additive constant,
such that $ (w{\mathfrak l})_{\Sigma}  = m_{\bR} \cdot f $.
\end{lemma}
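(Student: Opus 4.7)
The proof separates naturally into a uniqueness part and an existence part.

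For uniqueness, I would argue as follows. Suppose $f_1, f_2 \in C^0(\cK, \bR)$ both satisfy $m_{\bR} f_i = (w\mathfrak{l})_{\Sigma}$. Then $m_{\bR}(f_1 - f_2) = 0$. Since $\star_K$ is a linear isomorphism of $C^1(K, \bR)$, this is equivalent to $(\pi \circ d_{q\cK})(f_1 - f_2) = 0$, where $f_1 - f_2$ is regarded as an element of $C^0(q\cK, \bR)$ via the canonical averaging extension from Sec. \ref{subsec3.0}. The real-valued analogue of Eq. \eqref{eq_obs1} then identifies this kernel with the space of constant functions on $\face_0(q\cK)$; pulled back along the embedding $C^0(\cK, \bR) \hookrightarrow C^0(q\cK, \bR)$, these correspond to constants on $\face_0(\cK)$. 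Hence $f_1 - f_2$ is constant.

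For existence, I would construct $f$ explicitly, exploiting Poincaré duality on $\Sigma = S^2$. Since $C_0$ is embedded on the $1$-skeleton of $\cK$ in $S^2$, it separates $S^2$ into two open disks $D_+$ and $D_-$ (with labelling fixed compatibly with the orientation of $C_0$). I would set
\[
f(v) = \begin{cases} 1 & \text{if } v \in \face_0(\cK) \cap D_+ \\ 0 & \text{if } v \in \face_0(\cK) \cap (C_0 \cup D_-) \end{cases}
\]
This is the natural simplicial incarnation of the Poincaré dual of $[C_0]$: a $0$-cochain on $\cK$ whose ``discrete gradient'' is concentrated along $C_0$.

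The verification $m_{\bR} f = (w\mathfrak{l})_{\Sigma}$ then reduces to a term-by-term comparison. On the $m_\bR$ side, after extending $f$ to $\bar f \in C^0(q\cK, \bR)$, the $K_1$-component of $\pi d_{q\cK} \bar f$ (which equals $\tfrac{1}{2} df$ on edges of $\cK$) is supported on those $\cK$-edges having one endpoint in $D_+$ and one on $C_0$, while the $K_2$-component (which equals $\tfrac{1}{2}$ times the difference of mean values of $f$ on adjacent $\cK$-face barycenters) is supported on $\cK'$-edges that cross $C_0$; applying $\star_K$ swaps and exchanges these two contributions. On the $(w\mathfrak{l})_{\Sigma}$ side, condition (TK) forces the boundary loops $\mathfrak{l}^1_\Sigma, \mathfrak{l}^2_\Sigma$ to lie on $C_0$, while the interior loop $\mathfrak{l}^0_\Sigma$ detours through midpoints of $C_0$-edges and barycenters of $\cK$-faces adjacent to $C_0$. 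The weighted combination with coefficients $w(0) = 1/2$, $w(1) = w(2) = 1/4$ then produces a $1$-cochain in $C^1(q\cK, \bR)$ that, edge by edge, matches $\psi(m_{\bR} f)$.

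The main obstacle is the term-by-term matching in the previous paragraph: it requires a delicate combinatorial bookkeeping near each vertex in $\face_0(\cK) \cap C_0$ and each adjacent $\cK$-face, systematically distinguishing the $\cK$-type and $\cK'$-type sub-edges of $q\cK$ and tracking relative orientations. The precise values of the weights $w(j)$ (as opposed to e.g. $1/3, 1/3, 1/3$) are dictated by the requirement that this matching succeed; this is the combinatorial content behind Remark \ref{rm_subsec3.3}.
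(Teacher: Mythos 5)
Your uniqueness argument is correct and is exactly the paper's: combine the real analogue of Eq. \eqref{eq_obs1} with the bijectivity of $\star_K$ to identify $\ker(m_{\bR}|_{C^0(\cK,\bR)})$ with the constants.

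The existence part, however, rests on the wrong geometric picture, and this is a genuine gap rather than a bookkeeping issue. You treat the situation as if there were a single separating circle $C_0$ and take $f$ to be the $\{0,1\}$-valued indicator of one of the two disks it bounds. But condition (TRK) only says that \emph{each} boundary loop of $R_1$ lies on \emph{some} circle on the $1$-skeleton; since $R_1$ is a regular ribbon with ${\mathbf p}\neq 0$, the two reduced boundary loops ${\mathfrak l}^1_{\Sigma}$ and ${\mathfrak l}^2_{\Sigma}$ are two \emph{disjoint} embedded circles (the two boundary circles of the projected ribbon annulus), so $\Sigma\setminus(\arc({\mathfrak l}^1_{\Sigma})\cup\arc({\mathfrak l}^2_{\Sigma}))$ has \emph{three} components $Z_1, Z_0, Z_2$, with ${\mathfrak l}^0_{\Sigma}$ inside the middle one. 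The target cochain $(w{\mathfrak l})_{\Sigma}$ is supported on all three of these circles with weights $\tfrac14,\tfrac14,\tfrac12$, whereas $m_{\bR}\cdot f$ for your two-valued $f$ is supported only in a neighborhood of the single circle $C_0$; the two sides cannot agree. The correct ansatz (which is what the paper uses) is a \emph{three}-valued function, constant equal to $c$, $c\pm\tfrac12$, $c\pm 1$ on $\overline{Z_1}$, $Z_0$, $\overline{Z_2}$ respectively, so that $d_{q\cK}f$ has jumps of size $\tfrac12$ across each of the two boundary circles; the weights $w(j)$ then come out of counting that each index $k$ is hit once by edges of type $j=1,2$ and twice by edges of type $j=0$. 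Finally, note that you defer precisely this verification (``delicate combinatorial bookkeeping'') without carrying it out; since it is the substance of the lemma, and since with your choice of $f$ it would in fact fail, the existence half of the proof is not established.
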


\begin{proof}   That $f$ is unique up to an additive constant follows by combining the definition of $m_{\bR}$
 with   the real analogue  of Eq. \eqref{eq_obs1} in Sec. \ref{subsec3.0} above and
 the fact that  $\star_{K}:C^1(K,\bR) \to  C^1(K,\bR)$   is a bijection.\par

In order to show the existence of $f$ we observe first that the assumption that
$R = R_1$ is a simplicial torus ribbon knot of standard type  in $\cK \times \bZ_N$ implies that
$\Sigma \backslash (\arc({\mathfrak l}^1_{\Sigma}) \cup \arc({\mathfrak l}^2_{\Sigma}))$
has three connected components.
Let us denote these three connected components by
 $Z_0, Z_1, Z_2$. The enumeration is chosen such that
   $Z_0$ is the connected component containing $\arc({\mathfrak l}^0_{\Sigma})$
 and $Z_1$ is the other connected component having $ \arc({\mathfrak l}^1_{\Sigma})$ on its boundary.
Let $f: \face_0(\cK) \to \bR$ be given by
\begin{equation} \label{eq_def_f}
f(\sigma):=
 \begin{cases} c & \text{ if } \sigma \in \overline{Z_1} \\
 c \pm \tfrac{1}{2} & \text{ if } \sigma \in Z_0 \\
c \pm 1 & \text{ if } \sigma \in \overline{Z_2} \\
\end{cases}
\end{equation}
for all $\sigma \in \face_0(\cK)$
where   $\overline{Z_i}$ is the closure of  $Z_i$ in $\Sigma$,
$c \in \bR$ is an arbitrary constant,
and  the sign  $\pm$ is ``$+$'' if for any $k \le \mathbf n$
the edge $\star_K ( \pi ( {\mathfrak l}^{0(k)}_{\Sigma}))$
points from $Z_2$ to $Z_1$ and ``$-$'' otherwise.
In order to conclude the proof of   Lemma \ref{lem2_pre}
we have to show that
\begin{equation} \label{eq_lemma3_crucial}
 (w{\mathfrak l})_{\Sigma}  = m_{\bR} \cdot f = \star_K(\pi(d_{q\cK}f))
 \end{equation}

 Let $S $  be the set of those $e \in \face_1(q\cK)$ which are contained in $Z_0$
but do  not lie on $\arc({\mathfrak l}^0_{\Sigma})$.
Now observe that $(d_{q\cK} f)(e) = 0$   if $e \notin S$.
On the other hand if $e \in S$ we have
$$(d_{q\cK} f)(e) =  \pm \tfrac{1}{2} \sgn(e) $$
where the sign $\pm$ is the same as in Eq. \eqref{eq_def_f} and
 where $\sgn(e) = 1$ if the (oriented) edge  $e$   ``points from'' the region $Z_1$ to the region $Z_2$ and $\sgn(e) = -1$ otherwise.  \par

From the definition of $q\cK$ it follows that every $e \in \face_1(q\cK)$ has
exactly one endpoint in $ \face_0(K_1) \cup \face_0(K_2) \subset \face_0(q\cK)$
and one endpoint in  $(\face_0(K_1) \cup \face_0(K_2))^c := \face_0(q\cK) \backslash (\face_0(K_1) \cup \face_0(K_2))$.
On the other hand, if $e \in S$ then both endpoints of $e$ will be in $\bigcup_{j=0}^2 \arc({\mathfrak l}^{j}_{\Sigma})$.
Accordingly, for every  $e \in S$ we can distinguish between exactly three types:
 $$\text{ $e$ is of type $j$ ($j=0,1,2$) if $e$ has an  endpoint in $\arc({\mathfrak l}^{j}_{\Sigma}) \cap (\face_0(K_1) \cup \face_0(K_2))^c$ }$$
 Next we observe that  for each fixed $e \in S$  of type $j$ there exist exactly two indices $k \le \mathbf n$
   such that
\begin{equation} \label{eq_lemma_step3}  \star_K (\pi(\sgn(e) \cdot e)) =  \pi({\mathfrak l}^{j(k)}_{\Sigma})
\end{equation}
Moreover, if we let $e \in S$ vary then for  $j =1,2$  every $k \le \mathbf n$ arises exactly once on the RHS
of Eq. \eqref{eq_lemma_step3}
and for $j=0$ every $k \le \mathbf n$ arises exactly twice
Taking this into account we arrive at
\begin{multline*} \star_K(\pi(d_{q\cK}f)) = \sum_{e \in S} \tfrac{1}{2}  \star_K (\pi(\sgn(e) \cdot e))
=  \sum_{k=1}^{\mathbf n} \bigl( \tfrac{1}{4} \pi({\mathfrak l}^{1(k)}_{\Sigma}) + \tfrac{1}{4} \pi({\mathfrak l}^{2(k)}_{\Sigma})  + 2 \tfrac{1}{4} \pi({\mathfrak l}^{0(k)}_{\Sigma}) \bigr) \\
 =  \sum_{k=1}^{\mathbf n} \bigl( \tfrac{1}{4} {\mathfrak l}^{1(k)}_{\Sigma} + \tfrac{1}{4} {\mathfrak l}^{2(k)}_{\Sigma}  +  \tfrac{1}{2} {\mathfrak l}^{0(k)}_{\Sigma} \bigr) =
  \sum_{k=1}^{\mathbf n} \sum_{j=0}^2 w(j)  {\mathfrak l}^{j(k)}_{\Sigma} =
  (w{\mathfrak l})_{\Sigma}
\end{multline*}
\end{proof}

In the following we assume that  $B \in \cB(\cK) \subset \cB(q\cK) = C^0(q\cK,\ct) $  is of the form
\begin{equation} \label{eq5.37}
  B= b + \tfrac{1}{k}   \alpha {\mathbf p} f
 \end{equation}
with $b \in \ct$, $\alpha \in \Lambda$ and where $f$ is given by
Eq. \eqref{eq5.28} above in combination with \eqref{eq5.28'}.

 \begin{observation} \label{obs3}  Let $Z_0, Z_1, Z_2$ be as in the proof of Lemma \ref{lem2_pre} above.
 Then the restriction of $B: \face_0(q\cK) \to \ct$
to $\face_0(q\cK) \cap Z_i$ is constant for each $i = 0,1,2$.
Moreover, if $i=1,2$ then also the restriction of
$B: \face_0(q\cK) \to \ct$
to $\face_0(q\cK) \cap \overline{Z_i}$ is constant.
 \end{observation}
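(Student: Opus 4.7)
Since $B = b + \tfrac{1}{k}\alpha\,\mathbf{p}\,\bar f$, where $\bar f \in \cB(q\cK)$ denotes the $\psi$-extension of the function $f \in \cB(\cK)$ defined in Eq.~\eqref{eq_def_f}, the observation reduces to the corresponding constancy statements for $\bar f$. Recall that for $x \in \face_0(q\cK)$, the value $\bar f(x)$ is the arithmetic mean of $f$ over the set $S_x$ of $\cK$-vertices contained in the closure of the unique open $\cK$-cell $C_x$ whose barycenter is $x$ (cf.\ Sec.~\ref{subsec3.0}). The basic structural input, which is also used in the proof of Lemma~\ref{lem2_pre}, is that $\arc({\mathfrak l}^1_{\Sigma}) \cup \arc({\mathfrak l}^2_{\Sigma})$ lies on the 1-skeleton of $\cK$, so each open $\cK$-cell lies entirely in exactly one of the three regions $Z_0, Z_1, Z_2$.

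For $i \in \{1,2\}$ the argument is direct. If $x \in \face_0(q\cK) \cap \overline{Z_i}$, then $C_x$ lies either inside $Z_i$ or on the boundary arc $\arc({\mathfrak l}^i_{\Sigma})$ (using $\overline{Z_0} \cap \overline{Z_i} = \arc({\mathfrak l}^i_{\Sigma})$), so $\overline{C_x} \subseteq \overline{Z_i}$. Every vertex of $S_x$ therefore lies in $\overline{Z_i} \cap \face_0(\cK)$, where by Eq.~\eqref{eq_def_f} $f$ is identically equal to $c$ (if $i=1$) or $c \pm 1$ (if $i=2$). Hence $\bar f$ takes this same constant value on all of $\face_0(q\cK) \cap \overline{Z_i}$.

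For $i = 0$ I would invoke the identity $(d_{q\cK}\bar f)(e) = 0$ for every $e \in \face_1(q\cK) \setminus S$, which is established in passing during the proof of Lemma~\ref{lem2_pre}. This forces $\bar f$ to be constant on each connected component of the graph $(\face_0(q\cK), \face_1(q\cK) \setminus S)$, so it suffices to exhibit $\face_0(q\cK) \cap Z_0$ as being contained in a single such component. By the standard-type structure of the ribbon $R$, the $q\cK$-vertices lying in the open $Z_0$ are precisely the barycenters of the $\cK$-2-faces in $\overline{Z_0}$ together with the midpoints of the ``crossing'' $\cK$-edges joining $\arc({\mathfrak l}^1_{\Sigma})$ to $\arc({\mathfrak l}^2_{\Sigma})$, i.e.\ exactly the vertices traced by the middle loop $\arc({\mathfrak l}^0_{\Sigma})$; since the edges of $\arc({\mathfrak l}^0_{\Sigma})$ do not belong to $S$, this loop survives in the reduced graph and connects all of $\face_0(q\cK) \cap Z_0$ into a single component. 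The common constant value $c \pm 1/2$ is then read off at the midpoint of any crossing edge, where one computes directly $\bar f = \tfrac{1}{2}\bigl(c + (c \pm 1)\bigr)$. The main point requiring care, and the only nontrivial step, is the vertex identification in the open $Z_0$: a direct but somewhat tedious unpacking of the definition of $q\cK$ as a sub-complex of the barycentric subdivision $b\cK$, combined with the geometric form of a simplicial torus ribbon knot of standard type, showing that the ``thin annulus'' $\overline{Z_0}$ contains no $\cK$-vertices in its interior.
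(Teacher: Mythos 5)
Your argument is correct, and it fills a gap the paper simply leaves open: Observation \ref{obs3} is stated without proof, so there is no ``paper's argument'' to compare against. Your treatment of $i=1,2$ is the natural one: each open cell $C_x$ of $\cK$ is either contained in $\arc({\mathfrak l}^1_{\Sigma})\cup\arc({\mathfrak l}^2_{\Sigma})$ or disjoint from it, which gives $\overline{C_x}\subseteq\overline{Z_i}$ and hence $S_x\subseteq\overline{Z_i}\cap\face_0(\cK)$, where $f$ is constant by Eq.~\eqref{eq_def_f}; the reduction from $B$ to $\bar f$ is immediate since $b$ is constant.

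For $i=0$ your route is formally admissible but logically backwards. The identity $(d_{q\cK}\bar f)(e)=0$ for $e\notin S$ is asserted, not proved, in the proof of Lemma~\ref{lem2_pre}; restricted to the edges of $\arc({\mathfrak l}^0_{\Sigma})$ it is precisely the statement that $\bar f$ takes the same value at adjacent tetragon barycenters and crossing-edge midpoints, i.e.\ the equality you are trying to establish. Citing it therefore relocates the work rather than doing it. The self-contained argument is a direct mean-value computation at the two kinds of $q\cK$-vertices lying in $Z_0$: a tetragon of the band has exactly two $\cK$-vertices on each of the two boundary circles, so its barycenter receives $\tfrac14\bigl(2c+2(c\pm1)\bigr)=c\pm\tfrac12$, while a crossing edge has one endpoint on each circle, so its midpoint receives $\tfrac12\bigl(c+(c\pm1)\bigr)=c\pm\tfrac12$. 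This also shows that the essential structural input is not only the absence of $\cK$-vertices in the open $Z_0$ (which you single out as the one delicate point) but equally the fact that the $\cK$-vertices of each cell of the band are split evenly between the two boundary circles; without that, the two means need not coincide. With this computation in place your connectivity argument via the middle loop becomes unnecessary, and the constancy plus the common value $c\pm\tfrac12$ drop out simultaneously.
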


 We  set  $B(Z_i) :=  B(x)$ for any $x \in Z_i \cap \face_0(q\cK)$, which
 -- according to Observation \ref{obs3} -- is well-defined.

\begin{lemma} \label{lem2} For every $B \in \cB(\cK)$
of the form \eqref{eq5.37} and fulfilling  $\prod_{x}
1^{\smooth}_{\ct_{reg}}(B(x)) \neq 0$   we have\footnote{Note that the expression on the RHS of Eq. \eqref{eq5.70} is well-defined since by assumption $\prod_{x} 1^{\smooth}_{\ct_{reg}}(B(x)) \neq 0$, which implies that
$B(x) \in \ct_{reg}$ and therefore  $\det\bigl(1_{\ck}-\exp(\ad(B(x)))_{|\ck}\bigr) \neq 0$
(recall the definition of $\ct_{reg}$ and cf. Eq \eqref{eq_det_in_terms_of_roots} above)}
\begin{align} \label{eq5.70}
\Det^{disc}(B) &  = \prod_{i=0}^2   \det\nolimits^{1/2}\bigl(1_{\ck}-\exp(\ad(B(Z_i)))_{|\ck}\bigr)^{\chi(Z_i)}
 \end{align}
 where $Z_i$, $i=0,1,2$ are as in the proof of Lemma \ref{lem2_pre}
and $\chi(Z_i)$  is the Euler characteristic of $Z_i$.
 \end{lemma}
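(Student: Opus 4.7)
The plan is to reduce $\Det^{disc}(B)$ from its definition~\eqref{eq_def_DetFPdisc} to a product over the three regions $Z_0, Z_1, Z_2$, and then to recognize each exponent as the Euler characteristic of the corresponding region. The key structural input is that, because $R_1$ is a simplicial torus ribbon knot of standard type, each of the two boundary loops $l_1, l_2$ of $R_1$ fulfills (TK) of Definition~\ref{def5.2}; together with Observation~\ref{obs_red_loops} this places the images of the reduced loops $\mathfrak{l}^1_{\Sigma}$ and $\mathfrak{l}^2_{\Sigma}$ on the 1-skeleton of $\cK$. Consequently all three closures $\overline{Z_0}, \overline{Z_1}, \overline{Z_2}$ are closed subcomplexes of $\cK$, and for any such subcomplex $X$ and any $F \in \face_p(\cK)$ one has $\bar F \in X$ iff $F \subset X$ (since $\bar F$ lies in the relative interior $F^{\circ}$, so $\bar F \in X$ forces $F^{\circ} \cap X \neq \emptyset$, which in turn forces $F \subset X$ by closure under subfaces).

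Combining this with Observation~\ref{obs3}, I obtain a disjoint partition $\face_p(\cK) = \face_p^{(0)} \sqcup \face_p^{(1)} \sqcup \face_p^{(2)}$, where $\face_p^{(i)} := \{F \in \face_p(\cK) : \bar F \in \overline{Z_i}\}$ for $i=1,2$ and $\face_p^{(0)} := \{F : \bar F \in Z_0\}$, and such that $B(\bar F) = B(Z_i)$ precisely when $F \in \face_p^{(i)}$ (the three sets being disjoint because $Z_0, \mathfrak{l}^1_{\Sigma}, Z_1, \mathfrak{l}^2_{\Sigma}, Z_2$ partition $\Sigma$ and cells on $\mathfrak{l}^i_{\Sigma}$ are absorbed into $\face_p^{(i)}$ for $i=1,2$). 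Inserting this partition into~\eqref{eq_def_DetFPdisc} yields
\begin{equation*}
\Det^{disc}(B) = \prod_{i=0}^{2} \det\nolimits^{1/2}\bigl(1_{\ck} - \exp(\ad(B(Z_i)))_{|\ck}\bigr)^{\chi_i}, \qquad \chi_i := \sum_{p=0}^{2} (-1)^p |\face_p^{(i)}|,
\end{equation*}
so the lemma is reduced to checking $\chi_i = \chi(Z_i)$ for each $i$.

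For $i = 1, 2$ this is immediate: $\face_p^{(i)}$ is exactly the set of $p$-cells of the closed-disk subcomplex $\overline{Z_i}$, whence $\chi_i = \chi(\overline{Z_i}) = \chi(Z_i) = 1$. For $i = 0$, the subcomplex $\overline{Z_0}$ splits disjointly, at the level of cells of $\cK$, as $\face_p(\overline{Z_0}) = \face_p^{(0)} \sqcup \{F : F \subset \arc(\mathfrak{l}^1_{\Sigma})\} \sqcup \{F : F \subset \arc(\mathfrak{l}^2_{\Sigma})\}$, and each of the last two summands contributes $\chi(S^1) = 0$ to the alternating sum, leaving $\chi_0 = \chi(\overline{Z_0}) = \chi(Z_0) = 0$ (closed annulus). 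The only substantive input is the subcomplex property of the $\overline{Z_j}$, which is granted for free by (TK); beyond that the argument is routine bookkeeping with cell counts and Euler characteristics.
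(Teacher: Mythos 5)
Your proof is correct and follows essentially the same route as the paper's: split the product in Eq.~\eqref{eq_def_DetFPdisc} over the partition $\Sigma = Z_0 \sqcup \overline{Z_1} \sqcup \overline{Z_2}$ using Observation~\ref{obs3}, identify the resulting alternating cell counts with $\chi(\overline{Z_i})=\chi(Z_i)$ via the subcomplex property, and absorb the boundary circles using $\chi(S^1)=0$. The only difference is that you spell out why the $\overline{Z_i}$ are subcomplexes of $\cK$ (via (TK)) and why ``barycenter in $X$'' equals ``cell contained in $X$,'' which the paper takes for granted.
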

\begin{proof} We set $\face_p(\overline{Z_i}) :=
 \{F \in \face_p(\cK) \mid \bar{F} \in  \overline{Z_i}\} = \{F \in \face_p(\cK) \mid F \subset \overline{Z_i}\}$,
 for $i=0,1,2$, and $\face_p(Z_0) := \{F \in \face_p(\cK) \mid \bar{F} \in  Z_0\}$.
Since $\Sigma = S^2$ is the disjoint union of the three sets   $Z_0$, $\overline{Z_1}$, and $\overline{Z_2}$
we obtain from  Eq. \eqref{eq_def_DetFPdisc}
      in Sec. \ref{subsec3.4} and Observation  \ref{obs3}
\begin{align}  \label{eq_5.44b}
 \Det^{disc}(B) & = \prod_{p=0}^2 \biggl[ \biggl( \prod_{i=1}^2
 \det\nolimits^{1/2}\bigl(1_{\ck}-\exp(\ad(B(Z_i)))_{|\ck}\bigr)^{(-1)^p} \biggr)^{\#
\face_p(\overline{Z_i})} \nonumber \\
& \quad \quad \quad \times  \biggl(\det\nolimits^{1/2}\bigl(1_{\ck}-\exp(\ad(B(Z_0)))_{|\ck}\bigr)^{(-1)^p} \biggr)^{\# \face_p(Z_0)} \biggr]  \nonumber \\
& \overset{(*)}{=} \prod_{i=0}^2  \det\nolimits^{1/2}\bigl(1_{\ck}-\exp(\ad(B(Z_i)))_{|\ck}\bigr)^{\sum_{p=0}^2 (-1)^p \# \face_p(\overline{Z_i})}
\end{align}
where in step $(*)$ we have used that
$\sum_{p=0}^2 (-1)^p \# \face_p(Z_0) = \sum_{p=0}^2 (-1)^p \# \face_p(\overline{Z_0})$.
 The assertion of the lemma now follows by combining Eq. \eqref{eq_5.44b} with
$$ \chi(Z_i)= \chi(\overline{Z_i}) = \sum_{p=0}^2 (-1)^p \# \face_p(\overline{Z_i}), \quad i = 0,1,2 $$
where we have used that   $\overline{Z_i}$ is a  subcomplex of the CW complex $\cK$.
\end{proof}

Taking into account that $\arc(l^j_{\Sigma}) \subset \overline{Z_j}$ for $j=1,2$ and
  $\arc(l^{0}_{\Sigma}) \subset Z_0$   we see that
Observation  \ref{obs3} implies that
\begin{equation} \label{eq5.38} B(Z_j) =
B(\start l^{j(k)}_{\Sigma}) \quad \forall k \le n
\end{equation}
for every $B$ of the form in Eq. \eqref{eq5.37}. Moreover, for such $B$ we have
 \begin{equation} B(Z_0) = \tfrac{1}{2} ( B(Z_1) + B(Z_2))
\end{equation}
 Finally, note that for $j=0,1,2$ we have
\begin{equation} \label{eq5.39}  {\mathbf q}   = \wind(l^j_{S^1}) =  \sum_k dt^{(N)}(l^{j(k)}_{S^1})
\end{equation}
In order to see this recall that
 ${\mathbf q}$ is the second winding number of the simplicial torus ribbon knot of standard type $R_1$,
 which coincides with the winding number
$\wind(l^j_{S^1})$ of $l^j_{S^1}$, considered as a continuous map $S^1 \to S^1$.\par

 Combining  Eq. \eqref{eq5.35} and Eq. \eqref{eq5.17}   with
 Eqs. \eqref{eq5.38} -- \eqref{eq5.39} and Lemma \ref{lem2}, and taking into account
Eq. \eqref{eq5.28'} above (cf. the paragraph before Observation \ref{obs3})
\begin{equation}  \label{eq5.50} \WLO^{disc}_{rig}(L)
 \sim  \sum_{\alpha \in \Lambda}
   m_{\lambda_1}(\alpha) \   \sum_{y \in I}
  \int^{\sim}_{\ct} db \   e^{ - 2\pi  i k \langle y,  b \rangle}   F_{\alpha}(b)
\end{equation}
where for $b \in \ct$ and $\alpha \in \Lambda$ we have set
 \begin{multline} \label{eq5.49}
 F_{\alpha}(b) :=  \bigl[ \bigl(  \prod_{x}
1^{\smooth}_{\ct_{reg}}(B(x)) \bigr) \bigl(  \exp(   \pi i {\mathbf q}   \langle \alpha,
 B(Z_1) + B(Z_2)  \rangle )   \\
 \times \prod_{i=0}^2  \det\nolimits^{1/2}\bigl(1_{\ck}-\exp(\ad(B(Z_i)))_{|\ck}\bigr)^{\chi(Z_i)}  \bigr) \bigr]_{| B  = b + \tfrac{1}{k}     \alpha {\mathbf p} f }
\end{multline}

\subsubsection*{d) Step 4:  Performing  $\int^{\sim} \cdots db$ and  $\sum_{y \in I}$  in Eq. \eqref{eq5.50}}

For all $y \in I$ and $\alpha \in \Lambda$ the function
$\ct \ni b \mapsto e^{ - 2\pi  i k \langle y,  b \rangle}   F_{\alpha}(b)
\in \bC$ is invariant under all translations of the form
$b \mapsto b + x$ where $x \in I = \ker(\exp_{| \ct}) \cong \bZ^{\dim(\ct)}$.
In order to prove this it is enough to show that
for all $b \in \ct$ and $x \in I$ we have
\begin{subequations}  \label{eq5.51}
\begin{align}
1^{\smooth}_{\ct_{reg}}(b + x) & = 1^{\smooth}_{\ct_{reg}}(b)\\
e^{ 2\pi  i \eps \langle \alpha,  b + x \rangle} & = e^{ 2\pi  i \eps \langle \alpha,  b \rangle}
\quad \text{  for all $\alpha \in \Lambda$, $\eps \in \bZ$} \\
\det\nolimits^{1/2}(1_{\ck} - \exp(\ad(b + x))_{| \ck})  & =  \det\nolimits^{1/2}(1_{\ck} - \exp(\ad(b))_{| \ck})\\
e^{ - 2\pi  i k \langle y,  b + x \rangle} & = e^{ - 2\pi  i k \langle y,  b\rangle} \quad \text{
 for all $y \in I$}
\end{align}
\end{subequations}
Note that  because of the assumption that
 $G$ is simply-connected we have $\Gamma = I$.
The first of the four equations above
 therefore  follows from the assumption in Sec. \ref{subsec3.6} that $1^{\smooth}_{\ct_{reg}}$
 is invariant under $\cW_{\aff}$.
    The second  equation  follows because  by definition,
$\Lambda$ is the lattice dual to $\Gamma = I$.
 The third  equation follows from  Eq. \eqref{eq_det_in_terms_of_roots} above
 by taking into account that $(-1)^{ \sum_{\alpha \in \cR_+} \langle \alpha, x \rangle}
= (-1)^{ 2  \langle \rho, x \rangle} = 1$ for $x \in  \Gamma = I$
because  $\rho = \tfrac{1}{2} \sum_{\alpha \in \cR_+}$
is an element of the weight lattice $\Lambda$.
 Finally, in order to see that the fourth equation holds, observe
that   it is enough to show that
\begin{equation} \label{eq_CartanMatrix} \langle \Check{\alpha}, \Check{\beta} \rangle \in \bZ \quad \text{ for all coroots  $\Check{\alpha}, \Check{\beta}$}
\end{equation}
According to the general theory of semi-simple Lie algebras we have
 $2\tfrac{\langle \Check{\alpha}, \Check{\beta} \rangle}{\langle \Check{\alpha}, \Check{\alpha} \rangle}
\in \bZ$. Moreover,  there are at most two different (co)roots lengths
and  the quotient between the square lengths of the long and short coroots is either 1, 2, or 3.
Since  the normalization of $\langle
\cdot,  \cdot \rangle$ was chosen such that
$\langle \Check{\alpha}, \Check{\alpha} \rangle = 2$ holds
if $\Check{\alpha}$ is a short coroot we therefore have
$\langle \Check{\alpha}, \Check{\alpha} \rangle/2 \in \{1,2,3\}$
and \eqref{eq_CartanMatrix} follows.

  \medskip

From Eqs. \eqref{eq5.49} and \eqref{eq5.51}  we conclude that
 $\ct \ni b \mapsto e^{ - 2\pi  i k \langle y,  b \rangle}   F_{\alpha}(b)
\in \bC$ is indeed $I$-periodic
and we can therefore apply  Eq. \eqref{eq2_lastrmsec3} in Remark \ref{rm_last_sec3} above  and  obtain
\begin{equation}  \label{eq5.57}
  \int^{\sim} db \
 e^{ - 2\pi  i k \langle y,  b \rangle}  F_{\alpha}(b)
 \sim  \int_{Q} db \  e^{ - 2\pi  i k \langle y,  b \rangle}  F_{\alpha}(b)
 = \int db \   e^{ - 2\pi  i k \langle y,  b \rangle} 1_{Q}(b) F_{\alpha}(b)
\end{equation}
where on the RHS  $\int_{Q} \cdots db$ and $\int \cdots db$ are now  ordinary integrals
and where  we have set
\begin{equation}  \label{eq5.54}
Q:= \{ \sum_i \lambda_i e_i \mid \lambda_i \in (0,1) \text{ for all $i \le m$}   \} \subset \ct,
\end{equation}
Here  $(e_i)_{i \le m}$ is an (arbitrary) fixed basis of $I$.\par

According to Eq. \eqref{eq5.57} we can now rewrite Eq. \eqref{eq5.50} as
\begin{align}  \label{eq5.63} 
 \WLO^{disc}_{rig}(L)&\sim  \sum_{\alpha \in \Lambda}
   m_{\lambda_1}(\alpha)    \sum_{y \in I}
  \int db \   e^{ - 2\pi  i k \langle y,  b \rangle}  1_{Q}(b)  F_{\alpha}(b) \nonumber \\
&  \overset{(*)}{\sim}  \sum_{\alpha \in \Lambda}
 m_{\lambda_1}(\alpha) \sum_{b \in \tfrac{1}{k} \Lambda} 1_{Q}(b)  F_{\alpha}(b)
\end{align}
where in step $(*)$ we have used,
for each  $\alpha \in \Lambda$, the Poisson summation formula
\begin{equation} \label{eq5.61}
 \sum_{y \in I}   e^{ - 2\pi  i k \langle y,  b \rangle}
  = c_{\Lambda} \sum_{x \in \tfrac{1}{k} \Lambda} \delta_x(b)
  \end{equation}
where $\delta_x$ is the  delta distribution in $x \in \ct$ and $c_{\Lambda}$ a constant depending on the lattice $\Lambda$. (Recall that the lattice $\Lambda$ is dual to $\Gamma = I$.)
Observe also that $1_{Q}  F_{\alpha}$ clearly has compact support
and that $1_{Q}  F_{\alpha}$ is smooth  because  $ \partial Q \subset \ct_{sing} = \ct \backslash \ct_{reg}$ and
$F_{\alpha}$ vanishes\footnote{note that according to the definition of  $F_{\alpha}$ and Eq. \eqref{eq5.28'}
   there is a factor $1^{(s)}_{\ct_{reg}}(b)$ appearing in $F_{\alpha}(b)$
  which vanishes on a neighborhood of $\ct_{sing}$, cf. Sec. \ref{subsec3.6}}
 on a neighborhood of  $\ct_{sing}$. \par

 Finally, note that since $s>0$ above was chosen small enough (cf. Footnote \ref{ft_sec3.6} in Sec. \ref{subsec3.6})
 we have for every  $B \in \cB(q\cK)$ of the form Eq. \eqref{eq5.37}  with $b \in \tfrac{1}{k} \Lambda$
\begin{equation} \label{eq_lem3}
 \prod_{x \in \face_0(q\cK)} 1^{\smooth}_{\ct_{reg}}(B(x)) =
  \prod_{x \in \face_0(q\cK)} 1_{\ct_{reg}}(B(x)) \overset{(+)}{=} \prod_{i=0}^2 1_{\ct_{reg}}(B(Z_i))
\end{equation}
where step $(+)$ follows from Observation  \ref{obs3}.\par
Using this we  obtain from Eq.  \eqref{eq5.63} and Eq. \eqref{eq5.49} after  the change of variable
$b \to k b =: \alpha_0$  and writing $\alpha_1$ instead of $\alpha$
(and by taking into account that    $\chi(Z_0) = 0$ and
 $\chi(Z_1) = \chi(Z_2) = 1$):
\begin{align}  \label{eq5.89_org} \WLO^{disc}_{rig}(L)  &  \sim
 \sum_{\alpha_0, \alpha_1  \in \Lambda} 1_{k Q}(\alpha_0)
  m_{\lambda_1}(\alpha_1)  \nonumber \\
&  \quad \quad \quad \times \biggl[ \biggl( \prod_{i=0}^2 1_{\ct_{reg}}(B(Z_i))\biggr)
\biggl(\prod_{i=1}^2 \det\nolimits^{1/2}\bigl(1_{\ck}-\exp(\ad(B(Z_i)))_{|\ck}\bigr)   \biggr) \nonumber \\
&  \quad \quad \quad \times    \exp(   \pi i {\mathbf q}   \langle \alpha_1,
 B(Z_1) + B(Z_2)  \rangle )  \biggr]_{|
B  =  \tfrac{1}{k} ( \alpha_0 +    \alpha_1 {\mathbf p} f)}
\end{align}

Recall from the paragraph at the beginning of Sec. \ref{subsec5.4} that so far we have been working with
the original definition of $\WLO^{disc}_{rig}(L)$ given in Sec. \ref{subsec3.9}. \par

By examining the calculations above it it becomes clear\footnote{for example, if one works with the
first modification (M1) on the list in Sec. \ref{subsec3.10}
then this point is obvious after examining the proof of Theorem 3.5
in \cite{Ha3b} where simplicial ribbons
in $q\cK \times \bZ_N$ are used. For modification (M2) this is also not difficult to see}
that if one  modifies the definition of $\WLO^{disc}_{rig}(L)$
in one of the possible ways listed in Sec. \ref{subsec3.10} then instead of Eq. \eqref{eq5.89_org} one arrives at
\begin{align}  \label{eq5.89} \WLO^{disc}_{rig}(L)  &  \sim
 \sum_{\alpha_0, \alpha_1  \in \Lambda} 1_{k Q}(\alpha_0)
  m_{\lambda_1}(\alpha_1)  \nonumber \\
&  \quad \quad \quad \times \biggl(  \prod_{i=1}^2 1_{\ct_{reg}}(B(Z_i)) \biggr)
\biggl( \prod_{i=1}^2 \det\nolimits^{1/2}\bigl(1_{\ck}-\exp(\ad(B(Z_i)))_{|\ck}\bigr)
  \biggr) \nonumber \\
&  \quad \quad \quad \times    \exp(   \pi i {\mathbf q}   \langle \alpha_1,
 B(Z_1) + B(Z_2)  \rangle )  \biggr]_{|
B  =  \tfrac{1}{k} ( \alpha_0 +    \alpha_1 {\mathbf p} f)}
\end{align}
Note that the only difference between Eq. \eqref{eq5.89} and \eqref{eq5.89_org}
is that the $1_{\ct_{reg}}(B(Z_0))$-factor appearing in Eq. \eqref{eq5.89_org}
no longer appears in Eq. \eqref{eq5.89}.

\subsubsection*{e) Step 5: Some algebraic/combinatorial arguments}

For each $\alpha_0, \alpha_1 \in \Lambda$ we define
$$\eta_{(\alpha_0,\alpha_1)}: \{1, 2\} \to \Lambda$$
 by
\begin{equation}\eta_{(\alpha_0,\alpha_1)}(i)= k B(Z_i) - \rho \quad \quad i =1,2
\end{equation}
where $B= \tfrac{1}{k}(\alpha_0 + \alpha_1 {\mathbf p} f)$.
Observe that for $\eta=\eta_{(\alpha_0,\alpha_1)}$ and $B= \tfrac{1}{k}(\alpha_0 + \alpha_1 {\mathbf p} f)$
 we have
\begin{subequations}
 \begin{equation}
 \det\nolimits^{1/2}\bigl(1_{\ck}-\exp(\ad(B(Z_i)))_{|\ck}\bigr)) =  \det\nolimits^{1/2}\bigl(1_{\ck}-\exp(\ad(\tfrac{1}{k}(\eta(i)+\rho))_{|\ck}\bigr)\bigr) \overset{(*)}{\sim} d_{\eta(i)},
 \end{equation}
 where in step $(*)$ we used Eq. \eqref{eq_det_in_terms_of_roots}  and Eq. \eqref{eq_def_th+d}. \par

In the following we will assume, without loss of generality\footnote{one can check easily that
the final result for $\WLO_{rig}^{disc}(L)$ does not depend on this assumption  }
 that $\sigma_0 \in Z_1$ and therefore
 \begin{equation} \label{eq_sigma_0_in_Z1}
  B(\sigma_0) =  B(Z_1)
 \end{equation}
 Moreover, we will assume, without loss of generality\footnote{again one can check easily that
the final result for $\WLO_{rig}^{disc}(L)$ does not depend on this assumption  }
 that the enumeration of $Z_1$ and $Z_2$ was chosen such that  we have the situation where
  in Eq.  \eqref{eq_def_f} in the proof of
 Lemma \ref{lem2_pre}  the ``$-$''-sign appears.  Then we have
 \begin{equation}
 \alpha_0 = k B(\sigma_0) = k B(Z_1) = \eta(1)+\rho,
 \end{equation}
 \begin{equation}
 \alpha_1 = \tfrac{1}{\mathbf p} (\eta(1)-\eta(2))
 \end{equation}
\begin{multline} \label{eq_prep_diagonal_argument}
 \exp( \pi i {\mathbf q}   \langle \alpha_1,B(Z_1) + B(Z_2)\rangle
 = \exp( \pi i {\mathbf q}   \langle  \tfrac{1}{\mathbf p} (\eta(1)-\eta(2)),\tfrac{1}{k}(\eta(1)+\eta(2) + 2 \rho)\rangle  \\
 = \exp\bigl(   \tfrac{\pi i}{k} \tfrac{\mathbf q}{\mathbf p}
 \langle \eta(1),\eta(1) + 2\rho \rangle -  \langle \eta(2),\eta(2) + 2\rho \rangle \bigr)
 = \theta_{\eta(1)}^{\tfrac{\mathbf q}{\mathbf p}} \theta_{\eta(2)}^{-\tfrac{\mathbf q}{\mathbf p}}
 \end{multline}
\end{subequations}
 In view of  the previous equations
 it is clear that we can rewrite Eq. \eqref{eq5.89} in the following form
\begin{multline}  \label{eq5.89b} \WLO^{disc}_{rig}(L)    \sim
 \sum_{\alpha_0, \alpha_1  \in \Lambda} \biggl[
  m_{\lambda_1}\bigl(\tfrac{1}{\mathbf p} (\eta(1)-\eta(2))\bigr)
 1_{k Q}(\eta(1)+\rho) \bigl( \prod_{i=1}^2  1_{\ct_{reg}}(\tfrac{1}{k}(\eta(i)+\rho)\bigr) \bigr)  \\
  \quad \quad \quad \times  d_{\eta(1)} d_{\eta(2)}      \theta_{\eta(1)}^{\tfrac{\mathbf q}{\mathbf p}} \theta_{\eta(2)}^{-\tfrac{\mathbf q}{\mathbf p}}  \biggr]_{|\eta  =  \eta_{( \alpha_0,\alpha_1)}}\
\end{multline}
But
\begin{multline*}
1_{k Q}(\eta(1)+\rho) \ \bigl( \prod_{i=1}^2  1_{\ct_{reg}}(\tfrac{1}{k}(\eta(i)+\rho)) \bigr)\\
= 1_{k Q}(\eta(1)+\rho) 1_{k \ct_{reg}}(\eta(1)+\rho)  1_{k \ct_{reg}}(\eta(2)+\rho)
=  1_{k (Q \cap \ct_{reg})}(\eta(1)+\rho)  1_{k \ct_{reg}}(\eta(2) + \rho)
\end{multline*}
Combining this with Eq. \eqref{eq5.89b}  we obtain
\begin{align}  \label{eq5.89c} \WLO^{disc}_{rig}(L)   & \sim
 \sum_{\eta_1, \eta_2  \in \Lambda } \biggl[
  m_{\lambda_1}\bigl(\tfrac{1}{\mathbf p} (\eta_1-\eta_2)\bigr)
1_{k (Q \cap \ct_{reg})}(\eta_1 + \rho)  1_{k \ct_{reg}}(\eta_2 + \rho)
d_{\eta_1} d_{\eta_2}      \theta_{\eta_1}^{\tfrac{\mathbf q}{\mathbf p}} \theta_{\eta_2}^{-\tfrac{\mathbf q}{\mathbf p}}  \biggr] \nonumber \\
& \sim  \sum_{\eta_1 \in k (Q \cap \ct_{reg} - \rho) \cap \Lambda,  \eta_2 \in
(k \ct_{reg} - \rho) \cap \Lambda} \biggl[
  m_{\lambda_1}\bigl(\tfrac{1}{\mathbf p} (\eta_1-\eta_2)\bigr)
d_{\eta_1} d_{\eta_2}      \theta_{\eta_1}^{\tfrac{\mathbf q}{\mathbf p}} \theta_{\eta_2}^{-\tfrac{\mathbf q}{\mathbf p}}  \biggr]
\end{align}
Let $P$ be the unique Weyl alcove which is contained in the Weyl chamber $\CW$ fixed above
and which has $0 \in \ct$ on its  boundary. Explicitly, $P$ is given by
\begin{equation} \label{eq_P_formula} P =   \{b  \in \CW \mid \langle b,\theta \rangle < 1 \}
\end{equation}
Note that the map
\begin{subequations}
\begin{equation}\cW_{\aff} \times P \ni (\tau,b) \mapsto \tau \cdot b \in \ct_{reg}
\end{equation}
is a well-defined bijection.
Moreover, there is a finite subset $W$ of $\cW_{\aff}$ such that
\begin{equation}W \times P \ni (\tau,b) \mapsto \tau \cdot b \in Q \cap \ct_{reg}
\end{equation}
\end{subequations}
is a bijection, too.
Clearly, these two bijections above  induce two other bijections
\begin{subequations}
\begin{equation}\cW_{\aff} \times (k P - \rho) \ni (\tau,b) \mapsto \tau \ast b \in k \ct_{reg} - \rho
\end{equation}
\begin{equation}W \times (k P - \rho) \ni (\tau,b) \mapsto \tau \ast b \in k (Q \cap \ct_{reg}) - \rho
\end{equation}
\end{subequations}
where $\ast$ is given as in Eq. \eqref{eq_def_ast} in Sec. \ref{subsec5.2} above.\par

Observe that for  $\eta \in \Lambda$  and $\tau \in \cW_{\aff}$ we have
\begin{subequations} \label{eq_d+th_inv}
\begin{align}
 \label{eq_theta_inv}
\theta_{\tau \ast \eta} &= \theta_{\eta}\\
\label{eq_d_inv}
 d_{\tau \ast \eta} & = (-1)^{\tau} d_{\eta}
\end{align}
\end{subequations}
[Since
$\cW_{\aff}$ is generated by $\cW$ and the translations associated to the lattice
$\Gamma$ it is enough to check Eq. \eqref{eq_d_inv} and Eq. \eqref{eq_theta_inv} for elements of $\cW$ and the aforementioned translations. If    $\tau \in \cW $  then $\tau \ast \eta = \tau \cdot \eta + \tau \cdot \rho - \rho$. On the other hand if $\tau$ is the translation by $y \in \Gamma$ we have
$\tau \ast \eta = \eta + k y$. Using this\footnote{and taking into account the relations
$\rho \in \Lambda$, $\forall x,y \in \Gamma: \langle x, y \rangle \in \bZ$, and
$\forall x \in \Gamma: \langle x, x \rangle \in 2\bZ$ (cf. Eq. \eqref{eq_CartanMatrix} above and the paragraph following Eq. \eqref{eq_CartanMatrix} } Eq. \eqref{eq_theta_inv} follows from
Eq. \eqref{eq_def_th} and  Eq. \eqref{eq_d_inv} follows from Eq. \eqref{eq_def_d}
  and Eq. \eqref{eq_def_S}].\par

In the special case where $\tau \in \cW$ we also have
$\theta_{\tau \ast \eta}^{\tfrac{\mathbf q}{\mathbf p}} = \theta_{\eta}^{\tfrac{\mathbf q}{\mathbf p}}$.
However, if ${\mathbf p} \neq \pm 1$ we cannot expect the last relation to hold
 for a general element $\tau$ of $\cW_{\aff}$ (cf. Footnote \ref{ft_warning} in Sec. \ref{subsec5.2} above). \par
 On the other hand, by taking into account\footnote{this is relevant only in the special
 case where $\tau_1$ is a translation by  $y \in \Gamma$.
 If $\tau_1 \in \cW$
 then the validity of Eq. \eqref{eq_diagonal_inv} follows from the $\cW$-invariance of $m_{\lambda_1}(\cdot)$
 and the relations mentioned above}
   Eq. \eqref{eq_prep_diagonal_argument} above and\footnote{recall that -- according to the conventions    made above we have    $m_{\lambda_1}(\cdot)  = \bar{m}_{\lambda_1}(\cdot)$} Eq. \eqref{eq_mbar_def} above
    we see that we always have
 \begin{equation} \label{eq_diagonal_inv}
  m_{\lambda_1}\bigl(\tfrac{1}{\mathbf p} (\tau_1 \ast \eta_1- \tau_1 \ast \eta_2)\bigr)
      \theta_{\tau_1 \ast \eta_1}^{\tfrac{\mathbf q}{\mathbf p}} \theta_{\tau_1 \ast \eta_2}^{-\tfrac{\mathbf q}{\mathbf p}} =
  m_{\lambda_1}\bigl(\tfrac{1}{\mathbf p} (\eta_1-  \eta_2)\bigr)
      \theta_{\eta_1}^{\tfrac{\mathbf q}{\mathbf p}} \theta_{\eta_2}^{-\tfrac{\mathbf q}{\mathbf p}}
 \end{equation}
 for all $\tau_1 \in \cW_{\aff}$ and $\eta_1,\eta_2 \in \Lambda$.
Combining this with Eq.  \eqref{eq5.89c} we finally obtain
\begin{align} \label{eq5.89d}
 & \WLO^{disc}_{rig}(L)  \nonumber \\
 &  \sim  \sum_{\eta_1, \eta_2 \in (k P - \rho) \cap \Lambda} \sum_{\tau_1 \in W, \tau_2 \in \cW_{\aff}}
\biggl[  m_{\lambda_1}\bigl(\tfrac{1}{\mathbf p} (\tau_1 \ast \eta_1- \tau_2 \ast \eta_2)\bigr)
d_{\tau_1 \ast \eta_1} d_{\tau_2 \ast \eta_2}      \theta_{\tau_1 \ast \eta_1}^{\tfrac{\mathbf q}{\mathbf p}} \theta_{\tau_2 \ast \eta_2}^{-\tfrac{\mathbf q}{\mathbf p}}  \biggr] \nonumber \\
& = \sum_{\eta_1, \eta_2 \in (k P - \rho) \cap \Lambda} \sum_{\tau_1 \in W, \tau_2 \in \cW_{\aff}}
\biggl[  m_{\lambda_1}\bigl(\tfrac{1}{\mathbf p} (\tau_1 \ast \eta_1- \tau_2 \ast \eta_2)\bigr)
(-1)^{\tau_1} (-1)^{\tau_2} d_{\eta_1} d_{\eta_2}      \theta_{\tau_1 \ast \eta_1}^{\tfrac{\mathbf q}{\mathbf p}} \theta_{\tau_2 \ast \eta_2}^{-\tfrac{\mathbf q}{\mathbf p}}  \biggr] \nonumber \\
& \overset{(*)}{=} \sum_{\eta_1, \eta_2 \in (k P - \rho) \cap \Lambda} \sum_{\tau_1 \in W, \tau \in \cW_{\aff}}
\biggl[ m_{\lambda_1}\bigl(\tfrac{1}{\mathbf p} ( \eta_1- \tau \ast \eta_2)\bigr)  (-1)^{\tau}
 d_{\eta_1} d_{\eta_2}      \theta_{\eta_1}^{\tfrac{\mathbf q}{\mathbf p}} \theta_{\tau \ast \eta_2}^{-\tfrac{\mathbf q}{\mathbf p}}  \biggr]  \nonumber \\
& \overset{(**)}{\sim}
\sum_{\eta_1, \eta_2 \in \Lambda_+^k} \sum_{\tau \in \cW_{\aff}}
\biggl[ (-1)^{\tau}  m_{\lambda_1}\bigl(\tfrac{1}{\mathbf p} ( \eta_1- \tau \ast \eta_2)\bigr)
  d_{\eta_1} d_{\eta_2}      \theta_{\eta_1}^{\tfrac{\mathbf q}{\mathbf p}}  \theta_{\tau \ast \eta_2}^{-\tfrac{\mathbf q}{\mathbf p}} \biggr] \nonumber\\
  & = \sum_{\eta_1, \eta_2 \in \Lambda_+^k} \sum_{\tau \in \cW_{\aff}}
m^{\eta_1\eta_2}_{\lambda_1,\mathbf p}(\tau)
  d_{\eta_1} d_{\eta_2}      \theta_{\eta_1}^{\tfrac{\mathbf q}{\mathbf p}}  \theta_{\tau \ast \eta_2}^{-\tfrac{\mathbf q}{\mathbf p}}
\end{align}
where in step $(*)$ we applied Eq. \eqref{eq_diagonal_inv}
and made the change of  variable $\tau_2 \to \tau := \tau_1^{-1} \tau_2$
 and where in  step $(**)$ we have used that (cf.  \eqref{eq_P_formula} above)
 \begin{align*}
 (k P - \rho)
 & =     \{k b - \rho  \mid b \in \CW \text{ and } \langle b ,\theta \rangle < 1 \}\\
& =      \{\bar{b} \in \ct \mid \bar{b} + \rho \in \CW \text{ and } \langle \bar{b} + \rho,\theta\rangle < k \}
\end{align*}
and therefore
\begin{align*}
 \Lambda  \cap (k P - \rho)
& =    \{\lambda \in \Lambda  \mid \lambda + \rho \in \CW \text{ and } \langle \lambda + \rho,\theta\rangle < k \}\\
& \overset{(*)}{=}  \{\lambda \in \Lambda \cap \overline{\CW}  \mid \langle \lambda + \rho,\theta\rangle < k \}  = \Lambda^{k}_+
\end{align*}
where step $(*)$ follows because for each $\lambda \in  \Lambda$,
    $\lambda + \rho $ is in the open Weyl chamber $\CW$ iff $\lambda$
    is in the closure $\overline{\CW}$ (cf. the last remark   in Sec. V.4 in \cite{Br_tD}).

\subsubsection*{f) Step 6: Final step}

In Steps 1--5 we showed that for $L$  as in Theorem \ref{theorem1} we have
\begin{equation} \label{eq_WLO_value_gen}
\WLO^{disc}_{rig}(L)  \sim \sum_{\eta_1, \eta_2 \in \Lambda_+^k} \sum_{\tau \in \cW_{\aff}} m^{\eta_1\eta_2}_{\lambda_1,\mathbf p}(\tau) \
d_{\eta_1} d_{\eta_2} \ \theta_{\eta_1}^{ \frac{{\mathbf q} }{{\mathbf p}}} \theta_{\tau \ast \eta_2}^{- \frac{{\mathbf q} }{{\mathbf p}}}
\end{equation}
For the empty simplicial ribbon link $L = \emptyset$ the computations in Step 1--5 simplify drastically and we obtain
\begin{equation} \label{eq_WLO_value_empty}  \WLO^{disc}_{rig}(\emptyset) \sim \frac{1}{S^2_{00}}
\end{equation}
where the multiplicative (non-zero) constant represented by $\sim$ is the same as that in Eq. \eqref{eq_WLO_value_gen}
above.
Combining Eq. \eqref{eq_WLO_value_gen} and Eq. \eqref{eq_WLO_value_empty} and recalling the meaning of $\sim$ we conclude
$$ \WLO^{disc}_{norm}(L)  = \frac{\WLO^{disc}_{rig}(L)}{\WLO^{disc}_{rig}(\emptyset)}
= S^2_{00} \sum_{\eta_1, \eta_2 \in \Lambda_+^k} \sum_{\tau \in \cW_{\aff}} m^{\eta_1\eta_2}_{\lambda_1,\mathbf p}(\tau) \
d_{\eta_1} d_{\eta_2} \ \theta_{\eta_1}^{ \frac{{\mathbf q} }{{\mathbf p}}} \theta_{\tau \ast \eta_2}^{- \frac{{\mathbf q} }{{\mathbf p}}}
$$

\begin{remark} \rm An alternative  (and more explicit) derivation of
 Eq. \eqref{eq_WLO_value_empty} can be obtained as follows.
 We can apply Eq. \eqref{eq_WLO_value_gen} to the special situation
 $L = L':=(R'_1)$   where $R'_1$ is a simplicial torus ribbon knot in $\cK \times \bZ_N$
of standard type   with winding numbers ${\mathbf p} =1$ and $\mathbf q =0$
and which is a colored with the trivial representation  $\rho_0$.
 Since  $\WLO^{disc}_{rig}(\emptyset) =  \WLO^{disc}_{rig}(L')$  we obtain
 \begin{multline}
 \WLO^{disc}_{rig}(\emptyset) =  \WLO^{disc}_{rig}(L') \sim \sum_{\eta_1, \eta_2 \in \Lambda_+^k}
 \sum_{\tau \in \cW_{\aff}} m^{\eta_1\eta_2}_{0,1}(\tau) \
d_{\eta_1} d_{\eta_2} \ \theta_{\eta_1}^{ \frac{{0} }{{1}}} \theta_{\tau \ast \eta_2}^{- \frac{{0} }{{1}}} \\
 \overset{(*)}{=}
 \sum_{\eta_1 \in \Lambda_+^k} d_{\eta_1}^2 = \sum_{\eta_1 \in \Lambda_+^k} \tfrac{S_{\eta_1 0}  S_{\eta_1 0}}{S_{00} S_{00}} = C_{0 0} \tfrac{1}{S^2_{00}} =
 \delta_{0 \bar{0}} \tfrac{1}{S^2_{00}} = \tfrac{1}{S^2_{00}}
\end{multline}
where in step $(*)$ we used  $m_{0}(\alpha) =  \delta_{0 \alpha}$
which implies that  $m^{\eta_1\eta_2}_{0,1}(\tau)$ vanishes
unless $\tau = 1$ and $\eta_2 = \eta_1$.
\end{remark}

\subsection{Proof of Theorem \ref{theorem2}}
\label{subsec5.5}

We will now sketch how a  proof of Theorem \ref{theorem2} can be obtained
by a  straightforward modification of the proof of Theorem \ref{theorem1}.

\smallskip

Recall that   $R_2$ appearing in Theorem \ref{theorem2} is a closed simplicial ribbon which is vertical
and has standard orientation.
Let $\sigma_2^j \in \face_0(q\cK)$, $j=0,1,2$, be the three points associated to $R_2$ as explained
in Definition \ref{def5.4}. We then have
\begin{equation} \label{eq_sec5.5} \Tr_{\rho_2}\bigl( \Hol^{disc}_{R_2}(\Check{A}^{\orth} + A^{\orth}_c,   B)\bigr)
= \Tr_{\rho_2}\bigl(\exp({\mathbf q_2}\sum_{j=0}^2 w(j) B(\sigma_2^j))\bigr)
\end{equation}
where $w(j)$, $j=0,1,2$, is as in Sec. \ref{subsec3.3} and where ${\mathbf q_2}$ is the second winding number of $R_2$.
Since $R_2$ has -- by assumption --
standard orientation (cf. again Definition \ref{def5.4}) we have ${\mathbf q_2} =1$. \par

Clearly, the last expression in Eq. \eqref{eq_sec5.5} above does not depend on $\Check{A}^{\orth}$ and $ A^{\orth}_c$.
Thus when proving Theorem \ref{theorem2} we can repeat the  Steps 1--3 in the proof of
Theorem \ref{theorem1} almost without modifications,
the only difference being that now
 an extra factor  $\Tr_{\rho_2}\bigl(\exp(\sum_{j=0}^2 w(j) B(\sigma_2^j))\bigr)$ appears in several equations.
  For example, we obtain again Eq. \eqref{eq5.50} at the end of Step 3 where this time the function $F_{\alpha}(b)$
contains an extra factor $\Tr_{\rho_2}\bigl(\exp(\sum_{j=0}^2 w(j) B(\sigma_2^j)))\bigr)$
inside the $[\cdots]$ brackets.
According to Observation \ref{obs3} in Step 3 for all $B$ appearing in $F_{\alpha}(b)$
we have $B(\sigma_2^0) = B(\sigma_2^1) = B(\sigma_2^2)$,  which implies that
$\Tr_{\rho_2}\bigl(\exp(\sum_{j=0}^2 w(j) B(\sigma_2^j))\bigr)
 =  \Tr_{\rho_2}\bigl(\exp( B(\sigma_2^0))\bigr)$ for the relevant $B$.
This extra factor appears later,  in Eq. \eqref{eq5.89_org}  (and in the modification Eq.
\eqref{eq5.89}) at the end of Step 4.
(In order to arrive at the modified version of Eq. \eqref{eq5.89_org}
we need to add the equation    $\Tr_{\rho_2}\bigl(\exp(b + x)\bigr) = \Tr_{\rho_2}\bigl(\exp(b)\bigr)$ for all $b \in \ct$, $x \in I$  to the list of equations in Eqs \eqref{eq5.51}).\par

As in  Sec. \ref{subsec5.4} above let us assume again
(without loss of generality)
 that the enumeration of $Z_1$ and $Z_2$ was chosen such that in Eq.  \eqref{eq_def_f} in the proof of
 Lemma \ref{lem2_pre}  we have the situation where the ``$-$''-sign appears.
Then it follows from our assumption that $R_1$ winds around $R_2$ in ``positive direction''\footnote{by
which we meant that the winding number of the projected ribbon
 $(R_1)_{\Sigma}:= \pi_{\Sigma} \circ R_1: S^1 \times [0,1] \to S^2$ around $\sigma_2^0$ is positive}
 that $\sigma_2^0 \in Z_2$.
So if we now, in Step 5  replace the variable $B$ by $\eta: \{1,2\} \to \Lambda$
given by $\eta(i) = k B(Z_i) - \rho$ then
 $B(\sigma_2^0)$ gets replaced by
$\tfrac{1}{k} (\eta(2) + \rho)$  and the term $\Tr_{\rho_2}\bigl(\exp( B(\sigma_2^0))\bigr)$
is replaced by $\Tr_{\rho_2}\bigl(\exp(\tfrac{1}{k} (\eta(2) + \rho))\bigr)$.
From  Weyl's character formula it follows that
$$\Tr_{\rho_2}\bigl(\exp(\tfrac{1}{k} (\eta(2) + \rho))\bigr) = \frac{S_{\lambda_2  \eta(2)}}{S_{0 \eta(2)}}$$
 where $\lambda_2$ is the highest weight of $\rho_2$.\par
 Accordingly, at a later stage  in Step 5 an extra factor
$ \frac{S_{\lambda_2 \eta_2}}{S_{0 \eta_2}}$ appears in several equations
where $\eta_2$ is one of the two summation variables.
Taking into account that apart from Eq. \eqref{eq_d+th_inv} we also
have $S_{\lambda_2 (\tau \ast \eta)} = (-1)^{\tau} S_{\lambda_2 \eta}$   we  finally arrive at
\begin{align} \label{eq_5.73}
 \WLO^{disc}_{rig}(L)  &  \sim
 \sum_{\eta_1, \eta_2 \in \Lambda_+^k} \sum_{\tau \in \cW_{\aff}}
  m^{ \eta_1 \eta_2}_{\lambda_1,\mathbf p}(\tau)
d_{\eta_1} d_{\eta_2} \frac{S_{\lambda_2 \eta_2}}{S_{0 \eta_2}} \theta_{\eta_1}^{ \frac{{\mathbf q} }{{\mathbf p}}} \theta_{\tau \ast \eta_2}^{- \frac{{\mathbf q} }{{\mathbf p}}} \nonumber \\
& = \sum_{\eta_1, \eta_2 \in \Lambda_+^k} \sum_{\tau \in \cW_{\aff}} m^{ \eta_1 \eta_2}_{\lambda_1,\mathbf p}(\tau)
d_{\eta_1} \tfrac{1}{S_{00}} S_{\lambda_2 \eta_2}  \theta_{\eta_1}^{ \frac{{\mathbf q} }{{\mathbf p}}} \theta_{\tau \ast \eta_2}^{- \frac{{\mathbf q} }{{\mathbf p}}}
\end{align}
 Combining this with Eq. \eqref{eq_WLO_value_empty} above (where $\sim$ represents equality up to the same multiplicative constant as in Eq. \eqref{eq_5.73}) we  arrive at Eq. \eqref{eq_theorem2}.

\section{Comparison of $\WLO^{disc}_{norm}(L)$ with the Reshetikhin-Turaev invariant $RT(L)$}
\label{sec6}

\subsection{Conjecture \ref{conj0}}
\label{subsec6.1}

For every $\cG$  as in Sec. \ref{sec2}, every   $k \in \bN$ with $k > \cg$,
and every colored ribbon link $L$ in $S^2 \times S^1$
let us denote by $RT(S^2 \times S^1,L)$, or simply by $RT(L)$,
 the corresponding Reshetikhin-Turaev invariant
associated to $U_{q}(\cG_{\bC})$ where $\cG_{\bC} := \cG \otimes_{\bR} \bC$
and $q = e^{2 \pi i/k}$ (cf. Remark \ref{rm_shift_in_level} above).\par

According to Remark \ref{rm_sec3.9c} above it is plausible\footnote{in view of Footnote \ref{ft_distinguished}
and Footnote \ref{ft_distinguished2} in Remark \ref{rm_sec3.9c} above we do not have the guarantee that
this really is the case} to expect that the values of
 $\WLO^{disc}_{norm}(L)$ computed above coincide with the
 corresponding values of  Witten's heuristic path integral expressions\footnote{\label{ft_Z=1}
 Here we used Witten's heuristic argument that $Z(S^2 \times S^1) = 1$}
  $\WLO_{norm}(L) = Z(S^2 \times S^1,L)/Z(S^2 \times S^1) = Z(S^2 \times S^1,L)$  in the special case considered.
 In view of the expected equivalence between  Witten's heuristic path integral expressions $Z(S^2 \times S^1,L)$
 and the rigorously defined Reshetikhin-Turaev invariants $RT(L)$
we arrive at the following  (rigorous) conjecture:

\begin{conjecture} \label{conj0} For  every colored simplicial
ribbon link $L$ as in Theorem \ref{theorem1} or  Theorem \ref{theorem2}
  we have
  $$\WLO^{disc}_{norm}(L)  = RT(L)$$
  where on the RHS we consider $L$ as a continuum ribbon link in $S^2 \times S^1$
  in the obvious way.
\end{conjecture}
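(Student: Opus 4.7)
The natural strategy is to compute the right-hand side $RT(L)$ directly via surgery and compare term-by-term with the explicit formulas of Theorem~\ref{theorem1} and Theorem~\ref{theorem2}. Since $S^2 \times S^1$ is $0$-surgery on the unknot $U \subset S^3$, the Reshetikhin--Turaev surgery formula gives
\begin{equation*}
RT(S^2 \times S^1, L) \;=\; S_{00}^{-1} \sum_{\mu \in \Lambda_+^k} S_{0\mu}\, RT\bigl(S^3,\, L \sqcup U(\mu)\bigr),
\end{equation*}
where $U(\mu)$ is the $0$-framed unknot colored by $\mu$; the overall normalization is fixed by $RT(S^2 \times S^1, \emptyset) = 1$. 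Under this surgery picture, the torus $\cT_0 = C_0 \times S^1$ supporting $R_1$ (and $R_2$ in the Theorem~\ref{theorem2} case) becomes a standardly embedded unknotted torus in $S^3$, with $U$ the core of one of the two complementary solid tori. Thus $L \sqcup U(\mu)$ is a colored link in $S^3$ consisting of a $(\mathbf{p},\mathbf{q})$-torus knot $R_1$---plus, in the Theorem~\ref{theorem2} case, a meridian of $\cT_0$ representing $R_2$---linked with $U(\mu)$ in the standard way.

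Next I would evaluate $RT(S^3, L \sqcup U(\mu))$ using the Rosso--Jones cabling/plethysm formula from \cite{RoJo}. The $\mathbf{p}$-th cable of the representation $\lambda_1$ decomposes at level $k$ as an alternating sum over $\Lambda_+^k$ indexed by affine Weyl data; inserting the resulting sum into the closure of a $(\mathbf{p},\mathbf{q})$-torus knot produces the expected twist factors $\theta_\nu^{\mathbf{q}/\mathbf{p}}$ together with quantum-dimension factors $d_\nu$. Linking the cabled component with the colored unknot $U(\mu)$ replaces each $d_\nu$ by the Hopf-link value $S_{\nu\mu}/S_{00}$. In the Theorem~\ref{theorem2} case, the additional vertical ribbon $R_2$ contributes an independent Hopf-link factor $S_{\lambda_2 \mu}/S_{00}$. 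Substituting these expressions back into the surgery sum and applying the $S$-matrix orthogonality identity $\sum_{\mu} S_{0\mu} S_{\nu\mu} = \delta_{\nu 0}$ (combined, in the Theorem~\ref{theorem2} case, with a second orthogonality applied to the $S_{\lambda_2\mu}$ factor) collapses the sum over $\mu$.

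After this collapse, the resulting expression is a double sum over $(\eta_1, \eta_2) \in \Lambda_+^k \times \Lambda_+^k$ with summand built from $d_{\eta_1}, d_{\eta_2}$ (or $d_{\eta_1}, S_{\lambda_2 \eta_2}$), the twist factors $\theta_{\eta_1}^{\mathbf{q}/\mathbf{p}} \theta_{\tau \ast \eta_2}^{-\mathbf{q}/\mathbf{p}}$, and a plethysm coefficient indexed by $\tau \in \cW_{\aff}$. Identifying this plethysm coefficient with $m^{\eta_1 \eta_2}_{\lambda_1, \mathbf{p}}(\tau)$ of~\eqref{eq_def_plethysm} and the prefactor with $S_{00}^2$ (respectively $S_{00}$) completes the matching with the right-hand side of~\eqref{eq_theorem1} (respectively~\eqref{eq_theorem2}), establishing the conjecture.

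The main obstacle is this last identification for $|\mathbf{p}| \ge 2$. As Footnote~\ref{ft_warning} flags, $\theta_\eta^{\mathbf{q}/\mathbf{p}}$ is \emph{not} invariant under the $\ast$-action of $\cW_{\aff}$ when $|\mathbf{p}| \ge 2$, so the intrinsic organization of the answer is as a sum over $(\eta_2, \tau)$ and cannot be collapsed to the single-sum form $M^{\eta_1 \eta_2}_{\lambda_1, \mathbf{p}}$ of~\eqref{eq_def_plethysm_org}. Correspondingly, a careful analysis of the Kac--Walton truncation inside the Rosso--Jones formula is required: one must show that the truncation coefficients appearing in the cabling decomposition agree with $(-1)^\tau m_{\lambda_1}\bigl(\tfrac{1}{\mathbf{p}}(\eta_1 - \tau \ast \eta_2)\bigr)$ of~\eqref{eq_def_plethysm}, with the affine-Weyl bookkeeping matched on the nose rather than merely modulo $\cW_{\aff}$. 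For $\mathbf{p} = \pm 1$ this subtlety disappears via the fusion-rule reduction of Remark~\ref{rm_theorems}, which already closes the argument (and is essentially the case verified in Sec.~\ref{subsec6.2}); I therefore expect that the remaining work consists of extracting from the proof of the Rosso--Jones formula in \cite{RoJo} the refined affine-Weyl-equivariant statement of the plethysm decomposition, and then reading off the agreement with~\eqref{eq_def_plethysm} by restricting to the subgroup of $\cW_{\aff}$ under which $\theta_\eta^{\mathbf{q}/\mathbf{p}}$ transforms by a sign.
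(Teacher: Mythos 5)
First, note that the statement you are trying to prove is left as a \emph{conjecture} in the paper: the author proves it only for $\mathbf p=\pm 1$ (via the quantum Racah formula, Remark \ref{rm_theorems}) and otherwise performs a consistency check ($\beta=0$ case of Eq. \eqref{eq_RT_rewrite0_gen}, i.e. the derivation of the Rosso--Jones formula in Sec. \ref{subsec6.2}), explicitly stating that a full proof is not available. Your strategy --- present $S^2\times S^1$ as $0$-surgery on an unknot $U$, expand $RT(S^2\times S^1,L)$ as $\sum_\mu S_{0\mu}\,RT(S^3,L\sqcup U(\mu))$, and evaluate the $S^3$-invariants by Rosso--Jones cabling --- is essentially the paper's own Sec. \ref{subsec6.3} read in reverse: the conjecture is there shown to be \emph{equivalent} to Eq. \eqref{eq_RT_rewrite0_gen} holding for all $\beta\in\Lambda_+^k$, which is exactly the quantity $RT(S^3,L\sqcup U(\mu))$ your expansion requires. (One bookkeeping remark: in the Theorem \ref{theorem1} case the $\mu$-sum does not collapse by orthogonality; rather $S_{0\mu}=S_{00}d_\mu$ converts it into the $\eta_2$-sum of Eq. \eqref{eq_theorem1}. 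The orthogonality collapse occurs only in the $\beta=0$ direction of Sec. \ref{subsec6.2}.)

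The genuine gap is the step you yourself flag at the end: for $|\mathbf p|\ge 2$ you need the $S^3$-formula for a $(\mathbf p,\mathbf q)$-torus knot linked with a colored core circle $U(\mu)$, organized as a sum over $\tau\in\cW_{\aff}$ with coefficients matching $m^{\eta_1\bar\mu}_{\lambda_1,\mathbf p}(\tau)$ of Eq. \eqref{eq_def_plethysm} and with the twist factors $\theta_{\tau\ast\bar\mu}^{-\mathbf q/\mathbf p}$ tracked \emph{per affine Weyl element}, not merely modulo $\cW_{\aff}$. Because $\theta_\eta^{\mathbf q/\mathbf p}$ is not $\cW_{\aff}$-invariant for $|\mathbf p|\ge 2$ (Footnote \ref{ft_warning}), this refined, $\cW_{\aff}$-equivariant version of the Rosso--Jones/Kac--Walton plethysm decomposition is strictly stronger than what \cite{RoJo} provides, and neither you nor the paper supplies it. Saying that one ``expects'' to extract it from \cite{RoJo} is precisely the status the paper already assigns to the problem; as written, your proposal is a correct reduction of the conjecture to this open identity, not a proof of the conjecture.
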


As mentioned already in Remark \ref{rm_theorems} above in
the special case ${\mathbf p} =1$ Conjecture \ref{conj0} is true.
 Since  I have not found  a concrete  formula for $RT(L)$ where  $L$ is as in  Theorem \ref{theorem1} or  Theorem \ref{theorem2} with ${\mathbf p} > 1$
 in the standard literature  at present I cannot prove Conjecture \ref{conj0} in general
 (cf. also Sec. \ref{subsec6.3} below).
We can, however, make a ``consistency check'' and
compute -- assuming the validity of Conjecture \ref{conj0} --
the value of  $RT(S^3,\tilde{L})$ for an arbitrary colored torus knot $\tilde{L}$ in $S^3$.
We will do this in Sec. \ref{subsec6.2}  below with the help of a standard surgery argument.
It turns out that we indeed obtain the correct value for $RT(S^3,\tilde{L})$
(which is given by the  Rosso-Jones formula, cf. Eq. \eqref{eq_my_RossoJones3} below).

\subsection{Derivation of the Rosso-Jones formula}
\label{subsec6.2}

Let us now combine Theorem \ref{theorem2} with a simple surgery  argument
in order to derive\footnote{As explained in Remark \ref{rm_Goal1_vs_Goal2} below, we will do this in two different ways.  Firstly, in a rigorous way in order to obtain a consistency check of  Conjecture \ref{conj0} above
and, secondly, in a heuristic way (where we  do not need Conjecture \ref{conj0})
in order to obtain a heuristic derivation of the  Rosso-Jones formula}
the  Rosso-Jones formula for general colored  torus knots in $S^3$. \par

In the following it will be convenient to switch forth and back between
the framed link picture and the ribbon link picture.

\smallskip

Let us first recall Witten's (heuristic) surgery formula.
For our purposes it will be sufficient to consider the following special case
of Witten's surgery formula\footnote{here we use a notation which is very
similar to Witten's notation; one important difference is that we write $(C,\rho_{\alpha})$ where Witten  writes $R_{\alpha}$}
\begin{equation} \label{eq_surgery_formula0}
Z(S^3,\tilde{L}) = \sum_{\alpha \in \Lambda_+^k} K_{\alpha 0} \ Z(S^2 \times S^1, L, (C,\rho_{\alpha}))
\end{equation}
where
\begin{itemize}
\item $L$ is a colored, framed link  in $S^2 \times S^1$,

\item $\tilde{L}$ is the colored, framed link in $S^3$ obtained from $L$ by performing a surgery on a separate
     (framed) knot $C$ in $S^2 \times S^1$,

\item $\rho_{\alpha}$ is the irreducible,  finite-dimensional, complex representation of $G$ with highest weight
$\alpha \in \Lambda_+^k$ (we assume that $C$ is colored with $\rho_{\alpha}$),

\item $(K_{\mu \nu})_{\mu,\nu \in \Lambda_+^k}$ is the matrix associated to the surgery mentioned
above.
\end{itemize}

Let us now restrict to the special case where $L$ is the colored knot
$L = (T_{{\mathbf p},{\mathbf q}},\rho_{\lambda})$
 where $\lambda \in \Lambda_+^k$ and where   $T_{{\mathbf p},{\mathbf q}}$
is a  torus knot of standard type in $S^2 \times S^1$ with winding numbers ${\mathbf p} \in \bZ \backslash \{0\}$ and ${\mathbf q} \in \bZ$  (cf. Definition \ref{def5.1} and Definition \ref{def5.2}) and equipped with a ``horizontal'' framing\footnote{a special case of $T_{{\mathbf p},{\mathbf q}}$ is any simplicial torus ribbon knot of standard type (cf. Definition \ref{def5.3})
  when considered as a framed knot instead of a  ribbon knot}, i.e. a normal vector field on
$T_{{\mathbf p},{\mathbf q}}$  which is parallel to the $S^2$-component of  $S^2 \times S^1$.
Moreover,  let $C$ be a vertical loop in $S^2 \times S^1$ (equipped with a horizontal framing).
Let  $\tilde{T}_{{\mathbf p},{\mathbf q}}$ be the  framed torus  knot in
$S^3$ which is obtained from $T_{{\mathbf p},{\mathbf q}}$
by performing the  surgery on $C$ which transforms $S^2 \times S^1$ into $S^3$
and has the matrix $K=S$ associated to it (cf. p. 389 in \cite{Wi}).

\begin{remark} \label{rm_sec6.2_0} \rm Note that up to equivalence and a change of framing
every framed torus knot in $S^3$ can be obtained in this way.
\end{remark}

In the special situation described above
 formula \eqref{eq_surgery_formula0} reads
\begin{equation} \label{eq_surgery_formula1}
Z(S^3,(\tilde{T}_{{\mathbf p},{\mathbf q}},\rho_{\lambda})) = \sum_{\alpha \in \Lambda_+^k} S_{\alpha 0} \ Z(S^2 \times S^1, (T_{{\mathbf p},{\mathbf q}},\rho_{\lambda}), (C,\rho_{\alpha}))
\end{equation}

Clearly,  Eq. \eqref{eq_surgery_formula1} is not rigorous.
 We can  obtain a rigorous version of Eq. \eqref{eq_surgery_formula1}  by replacing
 the  two heuristic path integral expressions
 $Z(S^3,(\tilde{T}_{{\mathbf p},{\mathbf q}},\rho_{\lambda}))$ and $Z(S^2 \times S^1, (T_{{\mathbf p},{\mathbf q}},\rho_{\lambda}), (C,\rho_{\alpha}))$ with the corresponding Reshetikhin-Turaev invariants. Doing so we arrive at

\begin{equation} \label{eq_surgery_formula2}
RT(S^3,(\tilde{T}_{{\mathbf p},{\mathbf q}},\rho_{\lambda})) = \sum_{\alpha \in \Lambda_+^k} S_{\alpha 0} \ RT(S^2 \times S^1, (T_{{\mathbf p},{\mathbf q}},\rho_{\lambda}), (C,\rho_{\alpha}))
\end{equation}

\begin{remark} \rm \label{rm_Goal1_vs_Goal2} Even though Eq.  \eqref{eq_surgery_formula1}
is only heuristic  it is sufficient/appropriate for achieving ``Goal 2'' of Comment \ref{comm1} in the
 Introduction.  It is straightforward to rewrite the next paragraphs
 using   Eq.  \eqref{eq_surgery_formula1}
instead of Eq.  \eqref{eq_surgery_formula2} and   using\footnote{cf. the argument at the beginning of Sec. \ref{subsec6.1} where we used Witten's  heuristic equation $Z(S^2 \times S^1)=1$, cf. Footnote \ref{ft_Z=1} above.}
 $Z(S^2 \times S^1, (T_{{\mathbf p},{\mathbf q}},\rho_{\lambda}), (C,\rho_{\alpha})) = \WLO_{norm}(L) = \WLO^{disc}_{norm}(L)$ where $L$ is given as in the paragraph after the present remark. Doing so we arrive at
\begin{equation} Z(S^3,(\tilde{T}_{{\mathbf p},{\mathbf q}},\rho_{\lambda}))  = S_{00}  \sum_{\mu \in \Lambda_+}
 c^{ \mu}_{\lambda,\mathbf p} d_{\mu} \theta_{\mu}^{ \frac{{\mathbf q} }{{\mathbf p}}},
 \end{equation}
 which is the (heuristic) ``Chern-Simons path integral version'' of the
  Rosso-Jones formula, cf.  Eq. \eqref{eq_my_RossoJones3} below. \par

On the other hand, for ``Goal 1''  we  should use the rigorous formula Eq. \eqref{eq_surgery_formula2}
 in order to make the aforementioned ``consistency check'' of Conjecture \ref{conj0} above.
Since Goal 1 is our main goal we will now work with Eq. \eqref{eq_surgery_formula2}.
\end{remark}

Let us now consider the special case where $ T_{{\mathbf p},{\mathbf q}}$
``comes from''\footnote{i.e. $T_{{\mathbf p},{\mathbf q}}$ agrees
with $R_1$ when $R_1$ is considered as a framed knot instead of a ribbon knot}
 a simplicial torus ribbon knot $R_1$ in $\cK \times \bZ$ of standard type
 and  $C$ ``comes from'' a vertical closed  simplicial  ribbon $R_2$ in $\cK \times \bZ$.
 Moreover, set  $\rho_1 := \rho_{\lambda}$ and $\rho_2 := \rho_{\alpha}$ where $\alpha \in \Lambda_+^k$
 is fixed (temporarily).
 Finally, assume that for  the colored simplicial ribbon link $L := ((R_1,R_2), (\rho_1,\rho_2))$  in $\cK \times \bZ$ the  assumptions of Theorem \ref{theorem2} above are fulfilled.
If Conjecture \ref{conj0} is true we have
\begin{equation} \label{eq_RT_WLO_spec} RT(S^2 \times S^1, (T_{{\mathbf p},{\mathbf q}},\rho_{\lambda}), (C,\rho_{\alpha})) = \WLO^{disc}_{norm}(L)
\end{equation}
 Combining  Eq. \eqref{eq_RT_WLO_spec} with Theorem \ref{theorem2} (for every $\alpha \in \Lambda_+^k$)
 and using Eq. \eqref{eq_surgery_formula2} we obtain
\begin{align} \label{eq_surgery_formula}
 RT(S^3,(\tilde{T}_{{\mathbf p},{\mathbf q}},\rho_{\lambda}))
 & = \sum_{\alpha \in \Lambda_+^k} S_{\alpha 0} \biggl(S_{00} \sum_{\eta_1,\eta_2 \in \Lambda_+^k}
  \sum_{\tau \in \cW_{\aff}} m^{ \eta_1 \eta_2}_{\lambda,\mathbf p}(\tau)
d_{\eta_1} S_{\alpha \eta_2}  \theta_{\eta_1}^{ \frac{{\mathbf q} }{{\mathbf p}}} \theta_{\tau \ast \eta_2}^{- \frac{{\mathbf q} }{{\mathbf p}}}  \biggr) \nonumber\\
  & = S_{00} \sum_{\eta_1,\eta_2 \in \Lambda_+^k}  \sum_{\tau \in \cW_{\aff}}
   \biggl(\sum_{\alpha \in \Lambda_+^k} S_{\alpha 0}  S_{\alpha \eta_2} \biggr)
  m^{ \eta_1 \eta_2}_{\lambda,\mathbf p}(\tau)
d_{\eta_1} \theta_{\eta_1}^{ \frac{{\mathbf q} }{{\mathbf p}}} \theta_{\tau \ast \eta_2}^{- \frac{{\mathbf q} }{{\mathbf p}}}   \nonumber\\
   & \overset{(*)}{=} S_{00} \sum_{\eta_1,\eta_2 \in \Lambda_+^k}  \sum_{\tau \in \cW_{\aff}}  \bigl(C_{0 \eta_2} \bigr) m^{ \eta_1 \eta_2}_{\lambda,\mathbf p}(\tau)
d_{\eta_1} \theta_{\eta_1}^{ \frac{{\mathbf q} }{{\mathbf p}}} \theta_{\tau \ast \eta_2}^{- \frac{{\mathbf q} }{{\mathbf p}}}   \nonumber\\
  & \overset{(**)}{=} S_{00} \sum_{\eta_1\in \Lambda_+^k}  \sum_{\tau \in \cW_{\aff}}
  m^{ \eta_1 0}_{\lambda,\mathbf p}(\tau)
d_{\eta_1} \theta_{\eta_1}^{ \frac{{\mathbf q} }{{\mathbf p}}} \theta_{\tau \ast 0}^{- \frac{{\mathbf q} }{{\mathbf p}}}
\end{align}
Here  in Step $(*)$ we used $S^2 = C$  and the fact that $S$ is a symmetric matrix (cf. Sec. \ref{subsec5.2}),
and in Step $(**)$  we used $C_{0 \mu} = \delta_{\bar{0} \mu} = \delta_{0 \mu}$. \par

By renaming the index $\eta_1$ as $\mu$  we obtain from Eq. \eqref{eq_surgery_formula}
\begin{equation} \label{eq_RT_rewrite0}
RT(S^3,(\tilde{T}_{{\mathbf p},{\mathbf q}},\rho_{\lambda})) =
 S_{00} \sum_{\mu \in \Lambda_+^k}   \sum_{\tau \in \cW_{\aff}}
 m^{ \mu 0}_{\lambda,\mathbf p}(\tau) d_{\mu} \theta_{\mu}^{ \frac{{\mathbf q} }{{\mathbf p}}}
 \theta_{\tau \ast 0}^{- \frac{{\mathbf q} }{{\mathbf p}}}
\end{equation}

For simplicity we will now assume that $k$ is ``large'' (cf. Remark \ref{rm_general_k} below for the case of general $k > \cg$). If $k$ is ``large enough''\footnote{i.e. $k \ge k(\lambda,{\mathbf p})$ where
 $k(\lambda,{\mathbf p})$ is a constant depending only on $\lambda$ and ${\mathbf p}$}
 the sum $\sum_{\tau \in \cW_{\aff}} \cdots $ appearing in Eq. \eqref{eq_RT_rewrite0}
can be replaced by $\sum_{\tau \in \cW} \cdots $. On the other hand, for $\tau \in \cW$
we do have  $\theta_{\tau \ast 0}^{- \frac{{\mathbf q} }{{\mathbf p}}} =  \theta_{0}^{- \frac{{\mathbf q} }{{\mathbf p}}} = 1$ and so Eq. \eqref{eq_RT_rewrite0} simplifies and we obtain
\begin{align} \label{eq_RT_rewrite}
RT(S^3,(\tilde{T}_{{\mathbf p},{\mathbf q}},\rho_{\lambda})) & =
 S_{00} \sum_{\mu \in \Lambda_+^k}   \sum_{\tau \in \cW}
 m^{ \mu 0}_{\lambda,\mathbf p}(\tau) d_{\mu} \theta_{\mu}^{ \frac{{\mathbf q} }{{\mathbf p}}} \nonumber \\
& =  S_{00} \sum_{\mu \in \Lambda_+^k} \bar{M}^{ \mu 0}_{\lambda,\mathbf p} d_{\mu} \theta_{\mu}^{ \frac{{\mathbf q} }{{\mathbf p}}}
\end{align}
where we have set
\begin{equation} \label{eq_def_Mbar}
\bar{M}^{ \mu 0}_{\lambda,\mathbf p} := \sum_{\tau \in \cW}
 m^{ \mu 0}_{\lambda,\mathbf p}(\tau) = \sum_{\tau \in \cW} (-1)^{\tau} m_{\lambda}\bigl(\tfrac{1}{\mathbf p} (\mu - \tau \cdot \rho + \rho)\bigr)
\end{equation}

Observe that (for fixed $\lambda$ and ${\mathbf p}$)   the coefficients  $\bar{M}^{ \mu 0}_{\lambda,\mathbf p}$
are non-zero only for a finite number of values of $\mu$.
So if $k$ is large enough
we can replace the index set $\Lambda_+^k$ in Eq. \eqref{eq_RT_rewrite} by $\Lambda_+$ and obtain
\begin{equation}  \label{eq_my_RossoJones2}
RT(S^3,(\tilde{T}_{{\mathbf p},{\mathbf q}},\rho_{\lambda}))  = S_{00}  \sum_{\mu \in \Lambda_+}
 \bar{M}^{ \mu 0}_{\lambda,\mathbf p} d_{\mu} \theta_{\mu}^{ \frac{{\mathbf q} }{{\mathbf p}}}
\end{equation}

\begin{remark} \rm \label{rm_general_k} For simplicity, we considered here (and in the paragraph before Eq. \eqref{eq_RT_rewrite})  the case where $k$ is ``large''. However,
it is not too difficult to see that  this restriction on $k$ can be dropped, i.e.
assuming the  validity of Conjecture \ref{conj0} we can actually derive
Eq. \eqref{eq_my_RossoJones2}  for all $k > \cg$.
\end{remark}

According to Lemma 2.1 in \cite{GaVu} we have
\begin{equation} \label{eq_cond1}
\forall \mu, \lambda \in \Lambda_+:  \forall {\mathbf p} \in \bN: \quad
 \bar{M}^{ \mu 0}_{\lambda,\mathbf p} = c^{\mu}_{\lambda,{\mathbf p}}
 \end{equation}
where $(c^{\mu}_{\lambda,{\mathbf p}})_{\mu, \lambda \in \Lambda_+,{\mathbf p} \in \bN }$ are the ``plethysm coefficients'' appearing in
the Rosso-Jones formula, cf. \cite{RoJo} and Eq. (10)
in \cite{GaMo}. Accordingly, we can rewrite Eq. \eqref{eq_my_RossoJones2} as
\begin{equation}  \label{eq_my_RossoJones3}
RT(S^3,(\tilde{T}_{{\mathbf p},{\mathbf q}},\rho_{\lambda}))  = S_{00}  \sum_{\mu \in \Lambda_+}
 c^{ \mu}_{\lambda,\mathbf p} d_{\mu} \theta_{\mu}^{ \frac{{\mathbf q} }{{\mathbf p}}},
\end{equation}
which is a version of the Rosso-Jones formula.
(Note that the original Rosso-Jones formula deals with unframed torus knots
rather than framed torus knots. In Appendix \ref{appA} below we will show
that Eq. \eqref{eq_my_RossoJones3} above is indeed equivalent to the original Rosso-Jones formula).

\subsection{Reformulation of Conjecture \ref{conj0}}
\label{subsec6.3}

Note that Eq. \eqref{eq_surgery_formula2} above can be generalized  to
\begin{equation} \label{eq_surgery_formula2_gen}
RT(S^3,(\tilde{T}_{{\mathbf p},{\mathbf q}},\rho_{\lambda}), (\tilde{C},\rho_{\beta})) = \sum_{\alpha \in \Lambda_+^k} S_{\alpha \beta} \ RT(S^2 \times S^1, (T_{{\mathbf p},{\mathbf q}},\rho_{\lambda}), (C,\rho_{\alpha})) \quad \forall \beta \in \Lambda_+^k
\end{equation}
where $((\tilde{T}_{{\mathbf p},{\mathbf q}},\rho_{\lambda}), (\tilde{C},\rho_{\beta}))$ on the LHS is the
(framed, colored) two-component-link in $S^3$ obtained from the two-component-link
 $((T_{{\mathbf p},{\mathbf q}},\rho_{\lambda}), (C,\rho_{\alpha}))$ in $S^2 \times S^1$ after applying the same surgery operation as the one described in the paragraph before Remark \ref{rm_sec6.2_0}   in Sec. \ref{subsec6.2} above. \par

 By modifying the arguments and calculations after  Remark \ref{rm_Goal1_vs_Goal2}
  in Sec. \ref{subsec6.2} above in the obvious way
 we can show that Conjecture \ref{conj0} implies that  for all $\lambda, \beta \in \Lambda_+^k$ we have
\begin{equation} \label{eq_RT_rewrite0_gen}
RT(S^3,(\tilde{T}_{{\mathbf p},{\mathbf q}}, \rho_{\lambda}), (\tilde{C},\rho_{\beta})) =
 S_{00} \sum_{\mu \in \Lambda_+^k}   \sum_{\tau \in \cW_{\aff}}
 m^{ \mu \bar{\beta}}_{\lambda,\mathbf p}(\tau) d_{\mu}  \theta_{\mu}^{ \frac{{\mathbf q} }{{\mathbf p}}}  \theta_{\tau \ast \bar{\beta}}^{- \frac{{\mathbf q} }{{\mathbf p}}}
\end{equation}
The converse is also true: If Eq. \eqref{eq_RT_rewrite0_gen} holds for every (framed) colored two-component-link $((\tilde{T}_{{\mathbf p},{\mathbf q}}, \rho_{\lambda}), (\tilde{C},\rho_{\beta}))$, $\lambda, \beta \in \Lambda_+^k$  in $S^3$ obtained as above
 then Conjecture \ref{conj0} will be true.
  I expect that Eq. \eqref{eq_RT_rewrite0_gen} (and therefore Conjecture \ref{conj0})
can  be proven for arbitrary\footnote{note that according to  Sec. \ref{subsec6.2} above 
  in the special case where $\beta = 0$ (and $\lambda \in \Lambda_+^k$ is arbitrary)
  Eq. \eqref{eq_RT_rewrite0_gen} is indeed true.} $\beta \in \Lambda_+^k$ by using similar techniques as  the ones used in \cite{RoJo}.

\section{Conclusions}
\label{sec7}

In the present paper we introduced and studied -- for every  simple, simply-connected, compact Lie group $G$,
and a large class of colored torus (ribbon) knots\footnote{In the present paper we have restricted ourselves to the case of torus knots but it  should not be difficult to generalize
our main results to torus links.} $L$ in $S^2 \times S^1$ --
 a rigorous realization $\WLO^{disc}_{norm}(L)$ of the torus gauge fixed version
 of Witten's heuristic CS path integral expressions $Z(S^2\times S^1,L)$.
 Moreover, we computed the values of $\WLO^{disc}_{norm}(L)$ explicitly, cf. Theorem \ref{theorem1}. \par

As a by-product  we  obtained an elementary, heuristic
derivation\footnote{\label{ft_sec7_1} Our derivation is ``almost'' a pure path
integral derivation. Essentially all our arguments
 are based on (the rigorous realization of) the  Chern-Simons path integral
 in the torus gauge introduced in Sec. \ref{sec3}.
 The only two exceptions are the arguments involving
 Witten's heuristic surgery formula and
 Witten's formula $Z(S^2 \times S^1)=1$ mentioned in Remark \ref{rm_Goal1_vs_Goal2}  above.
 (Both formulas  were  derived by Witten  using arguments from Conformal Field Theory.)}
of the original Rosso-Jones formula for arbitrary colored torus knots $\tilde{L}$ in $S^3$
(and arbitrary simple complex Lie algebras $\cG_{\bC}$).
This means that we have achieved ``Goal 2'' of Comment \ref{comm1} in the Introduction. \par

Apart from achieving ``Goal 2''
we have also made progress towards  achieving ``Goal 1'' of Comment \ref{comm1}.
The rigorous computation in Sec. \ref{subsec6.2}
provides  strong evidence in  favor of Conjecture \ref{conj0} above, i.e. the conjecture that the explicit values
 of  $\WLO^{disc}_{norm}(L)$ obtained in Theorem \ref{theorem1} and Theorem \ref{theorem2} indeed  coincide
with the values of the corresponding Reshetikhin-Turaev invariants $RT(L)$.
If this is indeed the case
Theorem \ref{theorem1} can be considered as a  step forward in the simplicial program for Chern-Simons theory, cf.  Sec. 3 of \cite{Ha7a} and cf. also Remark  \ref{rm_sec3.9b}  in Sec. \ref{sec3} of the present paper.

 \renewcommand{\thesection}{\Alph{section}}
\setcounter{section}{0}

\section{Appendix: The original Rosso-Jones formula for unframed torus knots in $S^3$}
 \label{appA}

In this appendix we will recall the original Rosso-Jones formula (which deals with unframed torus knots in $S^3$
rather than framed torus knots) and show that it is equivalent
to  Eq. \eqref{eq_my_RossoJones3}  in Sec. \ref{subsec6.2} above.

\smallskip

Recall that above we set $\cG_{\bC} = \cG \otimes_{\bR} \bC$  and $q = e^{ 2 \pi i/k}$
and denoted by $RT(M,\cdot)$ (for $M=S^3$ and $M=S^2 \times S^1$) the
Reshetikhin-Turaev invariant associated to $U_{q}(\cG_{\bC})$.
Let us write $RT(\cdot)$ instead of $RT(S^3,\cdot)$. \par

We will now compare $RT(\cdot)$ with  $QI(\cdot)$
where $QI(\cdot)$ is the  $U_{q}(\cG_{\bC})$-analogue\footnote{Note that
in \cite{Tu0,RoJo} the letter $q$ refers to a complex variable or, equivalently,
a generic element of $\bC$. We now replace this variable by the root of unity $e^{ 2 \pi i/k}$.
Doing so we obtain a complex valued  topological invariant  of all colored links $L$ in $S^3$
 whose colors $\rho_i$  fulfill the condition $\lambda_i \in \Lambda_+^k$ where $\lambda_i$
 is the highest weight of $\rho_i$.
Recall that if  $\lambda_i \notin \Lambda_+^k$
 then $RT(L)$  is not defined and  $QI(L)$ need not be  defined either since division by $0$ may occur.}
  of the topological invariant for colored links in $S^3$
 which appeared in \cite{Tu0,RoJo}. \par

The two invariants $RT(\cdot)$ and ${QI}(\cdot)$ are very closely related
but there are some important  differences:
\begin{itemize}

\item ${QI}(\cdot)$ is an invariant of (unframed) colored links.
      More precisely, it is an ambient isotopy invariant (i.e. invariant under\footnote{here we consider every link $L \subset S^3$ as a link in $\bR^3$ and project it down to a suitable fixed plane $P$} Reidemeister I, II and III moves)

\item ${QI}(\cdot)$  is normalized such that ${QI}(U_{\lambda}) = 1$
where $U_{\lambda}$ is the unknot colored with (the irreducible complex representation $\rho$ with highest weight) $\lambda \in \Lambda_+^k$.
\end{itemize}

By contrast we have
\begin{itemize}

\item $RT(\cdot)$ is an invariant of framed, colored links.
More precisely, it is a regular isotopy invariant (i.e. invariant  only under Reidemeister  II and III moves)

\item For every framed knot $L$ with color $\rho_{\lambda}$, $\lambda \in \Lambda_+^k$, the value of $RT(L)$ changes by a factor
 ${\theta}_{\lambda}^{\pm 1}$ when we perform a  Reidemeister I move on $L$.

\item $RT(\cdot)$ is normalized such that $RT(U_{\lambda}) = {S}_{\lambda 0}$  where
$U_{\lambda}$ is the 0-framed  unknot colored with $\lambda$.

 \end{itemize}

Taking this into account one can deduce
 the following relation between $RT(\cdot) = RT(S^3,\cdot)$ and ${QI}(\cdot)$
\begin{equation} \label{eq_QI_RT}
{QI}(L^{0}) = \frac{1}{{S}_{\lambda 0}}  {\theta}_{\lambda}^{- \writhe(D(L))} RT(S^3,L)
\end{equation}
for every framed knot $L$ with color $\rho_{\lambda}$, $\lambda \in \Lambda_+^k$, where $L^0$ is the unframed, colored knot obtained from $L$ by forgetting the framing and where $D(L)$ is any
``admissible''\footnote{Here by ``admissible'' I mean a knot diagram which is obtained in the following way
(here we use the ribbon picture, i.e. we consider the framed knot $L$ as a ribbon in the obvious way):
 We press the ribbon $L$ flat onto a fixed plane $P$.
  After that the ribbon width is sent to zero}    knot diagram of $L$.
  [Note that the writhe is also a regular isotopy invariant and the effect of a Reidemeister I move
on the exponential on the RHS of Eq. \eqref{eq_QI_RT} cancels out the effect of the move on the factor $RT(L)$.
Accordingly, the RHS of Eq \eqref{eq_QI_RT} will be invariant under Reidemeister I-III moves.] \par

Let us now consider the special case  $L = (\tilde{T}_{{\mathbf p}, {\mathbf q}},\rho_{\lambda})$
where $\tilde{T}_{{\mathbf p}, {\mathbf q}}$ is the framed torus knot in $S^3$
that appeared in Sec. \ref{subsec6.2} above.
It can be shown that for a torus knot $\tilde{T}_{{\mathbf p}, {\mathbf q}}$
obtained by surgery from a  torus knot $T_{{\mathbf p}, {\mathbf q}}$ in $S^2 \times S^1$ of standard type
with horizontal framing we have
$$\writhe(D(\tilde{T}_{{\mathbf p}, {\mathbf q}})) = {\mathbf p}{\mathbf q}$$
for every admissible knot diagram $D(\tilde{T}_{{\mathbf p}, {\mathbf q}})$ of $\tilde{T}_{{\mathbf p}, {\mathbf q}}$
(in the sense above).
 Accordingly, Eq. \eqref{eq_QI_RT} specializes to
\begin{equation} \label{eq_A.4} {QI}((\tilde{T}^0_{{\mathbf p}, {\mathbf q}},\rho_{\lambda}))
 = \frac{1}{{S}_{\lambda0}}  {\theta}_{\lambda}^{-{\mathbf p}{\mathbf q}}
RT(S^3,( \tilde{T}_{{\mathbf p}, {\mathbf q}},\rho_{\lambda}))
\end{equation}

In view of  Eq. \eqref{eq_A.4} it is now clear that Eq. \eqref{eq_my_RossoJones3} in Sec. \ref{subsec6.2} above
is equivalent to
 \begin{equation} \label{eq_RossoJones0}
QI((\tilde{T}^0_{{\mathbf p}, {\mathbf q}},\rho_{\lambda})) = \frac{1}{d_{\lambda}}  \theta_{\lambda}^{-{\mathbf p}{\mathbf q}}  \sum_{\mu \in \Lambda_+}  c^{\mu}_{\lambda,{\mathbf p}} d_{\mu} \theta_{\mu}^{ \frac{{\mathbf q} }{{\mathbf p}}},
 \end{equation}
 which is  the  original Rosso-Jones formula, cf.  Eq. (10)
in \cite{GaMo}.

\end{document}